\pdfoutput=1
\documentclass[11pt]{article}
\usepackage[letterpaper,margin=1in]{geometry}

\usepackage[utf8]{inputenc} 
\usepackage[T1]{fontenc}    
\usepackage{url}            
\usepackage{booktabs}       
\usepackage{amsfonts}       
\usepackage{nicefrac}       
\usepackage{microtype}      
\usepackage{xcolor}         

\usepackage{nicefrac,bm,bbm}
\usepackage{amsmath,amsthm,color,colortbl,amssymb}

\usepackage{hyperref}

\usepackage{caption}
\usepackage{natbib}
\renewcommand{\bibsection}{\section*{References}}
\usepackage{wrapfig}
\usepackage{comment}

\usepackage{graphicx}
\usepackage{multirow}
\usepackage{xspace}
\usepackage{mathtools}
\usepackage{euscript}
\usepackage{thm-restate}
\usepackage{tikz}
\usepackage{subcaption}
\usepackage{tablefootnote,tabularx}
\usepackage[T1]{fontenc}    

\usepackage{pifont}

\usepackage{euscript}
\usepackage{mleftright}

\usepackage[capitalize]{cleveref}

\crefalias{AlgoLine}{line}%
\crefname{algocf}{Algorithm}{Algorithms}
\usepackage{algorithm}

\usepackage{cancel}
\usepackage{newpxtext}
\usepackage{newpxmath}
\definecolor{niceRed}{RGB}{190,38,38}
\definecolor{Red2}{RGB}{219, 50, 54}
\definecolor{mgreen}{RGB}{160, 200, 140}
\definecolor{blueGrotto}{RGB}{5,157,192}
\definecolor{limeGreen}{HTML}{81B622}
\definecolor{myellow}{rgb}{0.88,0.61,0.14}
\definecolor{darkGreen}{HTML}{2E8B57}
\definecolor{navyBlueP}{HTML}{03468F}
\definecolor{Sepia}{HTML}{7F462C}
\definecolor{red2}{HTML}{1F462C}
\definecolor{orange2}{HTML}{FF8000}
\definecolor{mgray}{HTML}{ABB3B8}
\definecolor{myPurple}{RGB}{175,0,124}
\definecolor{mypurple2}{rgb}{0.8,0.62,1}
\definecolor{royalBlue}{HTML}{057DCD}
\definecolor{mpink}{HTML}{FC6C85}
\theoremstyle{plain}
\newtheorem{theorem}{Theorem}[section]

\theoremstyle{definition}
\newtheorem{definition}[theorem]{Definition}

\theoremstyle{remark}

\hypersetup{
	colorlinks = true,
	linkcolor = niceRed,
	citecolor = royalBlue,
	linktocpage = true,
	urlcolor = mpink
}

\DeclareMathOperator*{\argmax}{arg\,max}

\DeclareMathOperator{\volume}{vol}

\newcommand{\Al}{\mathcal{A}_{\textnormal{L}}}
\newcommand{\Af}{\mathcal{A}_{\textnormal{F}}}
\newcommand{\ul}{u^\textnormal{L}}
\newcommand{\ulhat}{\widehat{u}^\textnormal{L}}
\newcommand{\uf}{u^\textnormal{F}}
\newcommand{\OPT}{\textnormal{OPT}}
\newcommand{\Etypes}{\mathcal{E}^\textnormal{F}}
\newcommand{\Epartition}{\mathcal{E}^\textnormal{P}}
\newcommand{\Eclean}{\mathcal{E}}

\newcommand{\Stackelberg}{\texttt{Stackelberg}}
\newcommand{\FindTypes}{\texttt{Find-Types}}
\newcommand{\Partition}{\texttt{Find-Partition}}
\newcommand{\Prune}{\texttt{Prune}}
\newcommand{\va}{\bm{a}}
\newcommand{\Obigt}{\widetilde{\mathcal{O}}}

\newcommand{\interior}{\textnormal{interior}}

\newcommand{\nb}[3]{{\colorbox{#2}{\bfseries\sffamily\scriptsize\textcolor{white}{#1}}}{\textcolor{#2}{\sf\small\textsf{#3}}}}

\newcommand{\bollo}[1]{\nb{Bollo}{red}{#1}}
\newcommand{\bac}[1]{\nb{Bacchio}{orange}{#1}}

\usepackage{url}
\allowdisplaybreaks

\usepackage{multirow}
\usepackage{tikz}
\usepackage[utf8]{inputenc} 
\usepackage[T1]{fontenc}    
\usepackage{hyperref}       
\usepackage{url}            
\usepackage{booktabs}       
\usepackage{amsfonts}       
\usepackage{nicefrac}       
\usepackage{microtype}      
\usepackage{xcolor}         
\usepackage{amsmath}
\usepackage{amsthm,enumerate,amssymb}
\usepackage{thm-restate}
\usepackage{thmtools} 
\usepackage{mathtools}
\usepackage{wrapfig}
\usepackage{amsmath}
\usepackage{pifont}

\usepackage{comment}
\usepackage{graphics}
\usepackage{graphicx}
\usepackage{appendix}
\usepackage{algorithm}
\usepackage{enumitem}
\usepackage[noend]{algpseudocode}
\algdef{SE}[DOWHILE]{Do}{doWhile}{\algorithmicdo}[1]{\algorithmicwhile\ #1}%
\usepackage{bbm}
\algnewcommand{\IfThen}[2]{
	\State \algorithmicif\ #1\ \algorithmicthen\ #2}

\usepackage{tikzscale} 

\usepackage[short, nocomma]{optidef} 

\usepackage[textsize=tiny]{todonotes}

\algblockdefx[Execute]{Execute}{EndExecute}[1]{\textbf{until} #1 \textbf{run}}{\textbf{end execute}}
\makeatletter
\ifthenelse{\equal{\ALG@noend}{t}}%
{\algtext*{EndExecute}}
{}%
\makeatother

\algblockdefx[IfInline]{IfInline}{EndIfInline}[1]{\textbf{if} #1 \textbf{then} }{\textbf{end if}}
\makeatletter
\ifthenelse{\equal{\ALG@noend}{t}}%
{\algtext*{EndIfInline}}
{}%
\makeatother

\definecolor{mygreen}{rgb}{0.0, 0.5, 0.0}
\definecolor{myorange}{rgb}{0.55, 0.62, 1}

\title{Learning in Bayesian Stackelberg Games With Unknown Follower's Types\footnote{Correspondence to: \{matteo.bollini, francesco.bacchiocchi, matteo.castiglioni, alberto.marchesi\}@polimi.it and samuel.coutts@mit.edu}}

\author{}
\renewcommand{\eqref}[1]{(\ref{#1})}
\newtheorem{condition}{Condition}

\author{
	{Matteo Bollini}$^\dag$ \and
	{Francesco Bacchiocchi}$^\dag$ \and
	{Samuel Coutts}$^\ddag$\and
	{Matteo Castiglioni}$^\dag$ \quad {Alberto Marchesi}$^\dag$  \\\\
	 $^\dag$ Politecnico di Milano\\
	$^\ddag$ MIT\\ 
}

\begin{document}

\maketitle
\begin{abstract}
 We study online learning in \emph{Bayesian Stackelberg games}, where a leader repeatedly interacts with a follower whose \emph{unknown private type} is independently drawn at each round from an \emph{unknown} probability distribution. The goal is to design algorithms that minimize the leader's \emph{regret} with respect to always playing an optimal commitment computed with knowledge of the game. We consider, for the first time to the best of our knowledge, the most realistic case in which the leader does \emph{not} know anything about the follower's types, \emph{i.e.}, the possible follower payoffs. This raises considerable additional challenges compared to the commonly studied case in which the payoffs of follower types are known. First, we prove a strong negative result: no-regret is unattainable under \emph{action feedback}, \emph{i.e.}, when the leader only observes the follower's best response at the end of each round. Thus, we focus on the easier \emph{type feedback} model, where the follower's type is also revealed. In such a setting, we propose a no-regret algorithm that achieves a regret of $\widetilde{O}(\sqrt{T})$, when ignoring the dependence on other parameters.
\end{abstract}

\section{Introduction}

\emph{Stackelberg games} (SGs)~\citep{von1934marktform} are foundational economic models that capture \emph{asymmetric} strategic interactions among rational agents. In an SG, a \emph{leader publicly commits to a strategy} beforehand, and a \emph{follower} then best responds to this commitment. This simple form of interaction underlies several more complex economic models, such as, \emph{e.g.}, \emph{Bayesian} persuasion~\citep{kamenica2011bayesian}, contracts~\citep{grossman1992analysis}, auctions~\citep{myerson1981optimal}, and security games~\citep{tambe2011security}.

The problem of \emph{learning an optimal strategy to commit to} in SGs has recently received growing attention (see, \emph{e.g.},~\citep{Peng2019,fiez2020implicit,bai2021sample,lauffer2023no,balcan2025learning}). In particular, in this paper we study \emph{online learning} in \emph{Bayesian SGs} (BSGs), where a leader repeatedly interacts with a follower over $T$ rounds, with the follower having a (different) \emph{unknown private type} at each round. The follower's type determines the follower's payoffs and, consequently, the best response they play. At each round $t$, the leader commits to a strategy prescribed by a learning algorithm, and the follower then best responds to it. The goal of the leader is to minimize their \emph{(Stackelberg) regret}, which measures how much utility they lose over the $T$ rounds compared to always committing to an optimal strategy in hindsight. Ideally, one would like learning algorithms that are \emph{no-regret}, meaning that their regret grows sublinearly in the number of rounds $T$.

Online learning in BSGs has already been investigated in the literature, both when the follower's types are selected adversarially~\citep{balcan2015commitment,balcan2025nearly} and when they are independently drawn at each round according to some \emph{unknown} probability distribution~\citep{personnat2025learning}. However, all these works rely on the very stringent assumption that \emph{the leader knows the payoffs associated with every possible follower type}. This assumption is rather unreasonable in practice. For instance, in security games it would amount to assuming that the defender knows the target preferences of every possible malicious attacker profile.

We consider, for the first time to the best of our knowledge, online learning in BSGs where the leader does \emph{not} know anything about the game, including the payoffs of possible follower types and the probability distribution according to which such types are drawn at each round. Our main result is \emph{the first no-regret learning algorithm for BSGs that does not require any assumptions on the leader's knowledge}.

\subsection{Original Contributions}

We start by proving a strong negative result for the \emph{action feedback} model, where, at the end of each round, the leader only observes the best-response action played by the follower. Specifically, we show that there exist BSGs in which any learning algorithm must incur regret that grows exponentially in the number of bits needed to represent the follower's payoffs. Thus, even in instances where only a few bits are sufficient to represent the follower's payoffs, the regret can be prohibitively large.

In the remainder of the paper, we focus on the easier \emph{type feedback} model, where, at the end of each round, the leader not only observes the follower's best response but also their type. In such a setting, we provide a no-regret learning algorithm for the leader that does \emph{not} require any knowledge of the follower's payoffs in order to operate.

Our no-regret learning algorithm works by splitting the time horizon into \emph{epochs}. At each epoch $h$, the algorithm learns the follower's best-response regions in order to identify the leader's commitments that are at most $\epsilon_h$-suboptimal. Then, in the following epoch $h+1$, the algorithm restricts the leader's decision space to such commitments and halves the suboptimality level $\epsilon_h$. This allows the algorithm to use commitments that are not overly suboptimal in the next epoch and thus keep the regret under control.

At each epoch $h$, our algorithm performs three main steps.
\begin{itemize}
    \item First, the algorithm uses $\mathcal{O}(\nicefrac{1}{\epsilon_h^2})$ rounds of interaction with the follower to build a suitable estimator of the probability distribution over types $\mu$. At the same time, it identifies a subset of types whose probability is at least $\epsilon_h$. This allows it to filter out types that occur with too low probability, for which learning best-response regions would require too many rounds.
    \item The second main step is to learn the polytopes defining the best-response regions for the follower types identified in the previous step. This is done by adapting a procedure developed by~\citet{bacchiocchi2025sample} for learning in non-\emph{Bayesian} SGs.
    \item The third and final step of the epoch is to use the information on best-response regions collected so far to compute an $\epsilon_h$-suboptimal commitment to be used in the subsequent epoch $h+1$.
\end{itemize}

Our no-regret algorithm achieves regret of order $\widetilde{\mathcal{O}}(\sqrt{T})$ in $T$ and depends polynomially on the size of the BSG instance when the number of leader actions $m$ is fixed. Notice that an exponential dependence on $m$ in the regret is unavoidable, as shown by~\citet{Peng2019}, even in the simpler non-\emph{Bayesian} SGs with a single follower type.

\subsection{Related Works} 

Our paper is primarily related to the line of research on online learning in SGs with finite action spaces (\emph{i.e.}, \emph{normal-form} SGs).
Specifically, 
\citet{letchford2009learning,Peng2019,bacchiocchi2025sample} investigate online learning 
in single-follower, non-\emph{Bayesian} SGs, by bounding the sample of complexity of learning an optimal commitment. \citet{personnat2025learning} 
consider the regret-minimization problem in multi-follower BSGs with \emph{stochastic} follower types, while~\citet{balcan2015commitment,balcan2025nearly} consider single-follower BSGs with adversarially selected follower types. These works 
assume knowledge of the payoffs of all the follower types. In contrast, we study, for the first time to the best of our knowledge, regret minimization in BSGs with \emph{unknown} follower payoffs.

Other lines of research consider models less related to ours. \citet{fiez2020implicit} study SGs with continuous action spaces and analyze gradient-based learning dynamics, establishing equilibrium characterization and convergence guarantees. \citet{bai2021sample} consider SGs where both players learn from noisy bandit feedback. \citet{lauffer2023no} study dynamic SGs with a \emph{Markovian} state that affects the leader’s rewards and available actions.

Finally, our work is also related to online learning in similar setting such as Bayesian persuasion~\cite{Castiglioni2020online,jibang2022,Bernasconi2023Optimal,gan2023sequential,lin2025informationdesignunknownprior,bacchiocchi2024online} contract design~\cite{ho2015adaptive,cohen2022learning,zhu2022online,Bacchiocchi2023learning} and security games~\cite{blum2014learning,balcan2015commitment,Peng2019}.

\section{Preliminaries}
\subsection{Bayesian Stackelberg Games}\label{pre:bayesian_stackelberg}
A \emph{Bayesian Stackelberg game} (BSG)~\citep{letchford2009learning} is characterized by a finite set $\Al:=\{a_i\}_{i=1}^m$ of $m$ leader’s actions and a finite set of different follower's types $\Theta$, with $K \coloneqq |\Theta|$. W.l.o.g., we assume that all follower's types share the same action set of size $n$, denoted by $\Af \coloneqq \{a_j\}_{j=1}^n$. 
The leader’s payoffs are encoded by the utility function $\ul : \Al \times \Af \rightarrow [0,1]$, while each follower's type $\theta \in \Theta$ has a payoff function $\uf_\theta : \Al \times \Af \rightarrow [0,1]$.

In a BSG, the leader commits in advance to a \emph{mixed strategy} (hereafter simply referred to as a \emph{commitment}), which is a probability distribution $x \in \Delta(\Al)$ over leader's actions, where each $x_i \in [0,1]$ denotes the probability of selecting action $a_i \in \Al$.\footnote{Given a finite set $X$, we denote by $\Delta(X)$ the set of all the probability distributions over $X$.}
Then, a follower's type $\theta \in \Theta$ is drawn according to a probability distribution $\mu \in \Delta(\Theta)$, \emph{i.e.}, $\theta\sim \mu$. 
W.l.o.g., we assume that the follower, after observing the leader's commitment $x \in \Delta(\Al)$, plays an action deterministically.
Specifically, the follower plays a \emph{best response}, which is an action maximizing their expected utility given the commitment $x \in \Delta(\Al)$. For every follower's type $\theta \in \Theta$, the set of follower’s best responses is\footnote{In this paper, we compactly denote by $[b] := \{ 1, \ldots, b\}$ the set of the first $b \in \mathbb{N}$ natural numbers.}
$$
A^{\text{F}}_\theta(x) := \argmax_{a_j \in \Af} \sum_{i \in [m]} x_i \uf_\theta (a_i, a_j).
$$
As is customary in the literature, we assume that, when multiple best responses are available, the follower breaks ties in favor of the leader by choosing an action that maximizes the leader’s expected utility.
Formally, a follower of type $\theta \in \Theta$ selects a best-response action $a^\star_\theta(x) \in A_{\theta}^{\text{F}}(x)$ such that $$a^\star_\theta (x) \in \argmax_{a_j \in A_{\theta}^{\text{F}}(x)} \sum_{i \in [m]}x_i\ul(a_i,a_j).$$ 
Throughout the paper, with a slight abuse of notation, given a commitment $x \in \mathbb{R}^{m}$ and a follower’s action $a_j \in \Af$, we denote $$\ul(x,a_j) \coloneqq \sum_{i\in [m] } x_i \ul(a_i,a_j),$$ and, for every type $\theta \in \Theta$, we let $$\uf_\theta(x,a_j) \coloneqq \sum_{i\in [m] } x_i \uf_\theta(a_i,a_j).$$
The leader's goal is to find an \emph{optimal commitment}, which is a mixed strategy $x \in \Delta(\Al)$ that maximizes the leader's expected utility, assuming that the follower always responds by selecting a best-response action.
Formally, the leader faces the optimization problem: $\max_{x \in \Delta(\Al)} \ul(x)$, where, for ease of notation, we let $\ul(x) \coloneqq \sum_{\theta \in \Theta} \mu_\theta \ul(x,a^\star_\theta(x))$ be the leader's expected utility under the follower's type distribution $\mu$. 
Notice that the existence of an optimal commitment $x^\star \in \argmax_{x \in \Delta(\Al)} \ul(x)$ is guaranteed by the fact that the follower breaks ties in favor of the leader. We denote by $\OPT \coloneqq \ul(x^\star)$ the leader's expected utility under an optimal commitment.
Furthermore, given a leader’s commitment $x \in \Delta(\Al)$ and a parameter $\epsilon \in (0,1)$, we say that $x$ is $\epsilon$-optimal if it holds that $\ul (x) \ge \OPT - \epsilon$.

\paragraph{Follower’s Best-Response Regions} For every follower's type $\theta \in \Theta$ and action $a_j \in \Af$, we define the \emph{best-response region} $\mathcal{P}_\theta(a_j) \subseteq \Delta(\Al)$ as the set of leader’s commitments under which the utility of type $\theta$ is maximized by playing action $a_j$.
Formally, this region can be written as $$\mathcal{P}_\theta(a_j) \coloneqq \Delta(\Al) \cap \left( \bigcap_{a_k \in \Af \setminus \{a_j\}} \mathcal{H}_{\theta}^{jk}\right),$$ 
where $\mathcal{H}_{\theta}^{jk}$ is the half-space in which action $a_j$ is preferred over action $a_k$ by a follower of type $\theta$. Formally:
\begin{equation*}
    \mathcal{H}^{jk}_{\theta} \coloneqq \{x \in \mathbb{R}^{m} \mid \uf_\theta(x,a_j) \ge \uf_\theta(x,a_k) \}.
\end{equation*}
We observe that the best-response region $\mathcal{P}_\theta(a_j)$ is a polytope, since it is bounded and defined as the intersection of half-spaces. Specifically, it is defined by at most $n-1$ \emph{separating} half-spaces $\mathcal{H}_{\theta}^{jk}$ and $m$ \emph{boundary} half-spaces ensuring $x_i \geq 0$ for all $ i \in [m]$. Consequently, the number of vertices of each $\mathcal{P}_\theta(a_j)$ is upper bounded by $\binom{n+m}{m}$.



In the following, given a subset of types $ \Theta' \subseteq \Theta$, we denote by $\Af(\Theta')$ the set of \emph{action profiles} $\va \coloneqq (\va_\theta)_{\theta \in \Theta'}$ specifying an action $\va_\theta \in \Af$ for each follower's type $\theta \in \Theta'$.
Given an action profile $\va \in \Af(\Theta')$, we let
$$\mathcal{P}(\va) \coloneqq \bigcap_{\theta \in \Theta'} \mathcal{P}_\theta(\va_\theta) \subseteq \Delta(\Al)$$
be the polytope of all the leader's commitments under which action $\va_\theta$ is a best response for every type $\theta \in \Theta'$.\footnote{For ease of notation, we omit the dependency from $\Theta' \subseteq \Theta$ in $\mathcal{P}(\va)$, and we assume that such a dependency is implicitly encoded in the action profile $\va \in \Af(\Theta')$.}
Furthermore, given a profile $\va \in \Af(\Theta)$, we let $\va|{\Theta'} \in \Af(\Theta')$ be the restriction of $\va$ to the types in $\Theta' \subseteq \Theta$; formally, $\va|{\Theta'} \coloneqq (\va_\theta)_{\theta \in \Theta'}$.
%
Throughout the paper, we assume that there exists an optimal commitment $x^\star \in \mathcal{P}(\va^\star)$ for some action profile $\va^\star \in \Af(\Theta)$ such that $\volume(\mathcal{P}(\va^\star)) > 0$ and $a^\star_\theta(x^\star)=\va^\star_{\theta}$ for every $\theta \in \Theta$.\footnote{Given a subset $\Theta' \subseteq \Theta$ and a tuple $\va \in \mathcal{A}(\Theta')$, we denote by $\volume(\mathcal{P}(\va))$ the volume (relative to $\Delta(\Al)$) of the polytope $\mathcal{P}(\va)$.}

\subsection{On the Representation of Numbers}
We assume that all numbers manipulated by our algorithms are rational.
Rational numbers are represented as fractions, specified by two integers encoding the numerator
and the denominator~\citep{schrijver1998theory}.
Given a rational number $q \in \mathbb{Q}$ represented as a fraction $b / c$ with $b, c \in \mathbb{Z}$,
we define the number of bits required to store $q$ in memory, referred to as its \emph{bit-complexity}, as
$B_{q} := B_b + B_c$, where $B_b$ ($B_c$) denotes the number of bits needed to represent the
numerator (denominator).
For ease of presentation, and with a mild abuse of terminology, given a vector in $\mathbb{Q}^D$
consisting of $D$ rational numbers represented as fractions, we define its bit-complexity as the
maximum bit-complexity among its entries.
%
Throughout the paper, we assume that every payoff $\uf_\theta(a_i,a_j)$ and $\ul_\theta(a_i,a_j)$ has bit-complexity bounded by some $L \in \mathbb{N}$, which is known to the leader.


\subsection{Online Learning in Bayesian Stackelberg Games}

We study settings in which the leader \emph{repeatedly} interacts with the follower over multiple rounds. We assume that the leader has no knowledge of either the distribution over types $\mu$ or the utility function $u^{\text{F}}_\theta$ of each type $\theta \in \Theta$.

At each round $t \in [T]$, the leader-follower interaction unfolds as follows:
\begin{enumerate}
    \item The leader commits to a mixed strategy $x_t \in \Delta(\Al)$.
    \item A follower’s type $\theta_t \in \Theta$ is sampled according to $\mu$, \emph{i.e.}, $\theta_t~\sim~\mu$, and the follower plays action $a_{\theta_t}^\star(x_t)$.
    \item Under \emph{type feedback}, the leader observes both $a_{\theta_t}^\star(x_t)$ and $\theta_t$, while under \emph{action feedback}, the leader only observes the best response $a_{\theta_t}^\star(x_t)$.
    \item The leader samples and plays an action $a_i \sim x_t$. Then, the leader collects a utility of $\ul(a_i, a_{\theta_t}^\star(x_t))$, while the follower gets a utility of $\uf_{\theta_t}(a_i, a_{\theta_t}^\star(x_t))$.
\end{enumerate}

The goal of the leader is to design learning algorithms that maximize their cumulative utility over the $T$ rounds. The performance of an algorithm is evaluated in terms of the \emph{cumulative (Stackelberg) regret}, which is defined as:
\begin{equation*}
    R_T \coloneqq T \cdot \OPT - \mathbb{E}\left[\sum_{t \in [T]} \ul(x_t,a^\star_\theta(x_t))\right],
\end{equation*}
where the expectation is taken over the randomness arising from the sampling of follower's types and from the possible randomization of the leader’s algorithm.
Our goal is to design no-regret learning algorithms, which prescribe a sequence of commitments $\{x_t\}_{t \in [T]}$ that results in the regret $R_T$ growing sublinearly in $T$, namely $R_T = o(T)$.

\section{A Negative Result With Action Feedback}
\label{sec:action_feedback}
We start with a negative result for the \emph{action-feedback} setting. Specifically, we show that no algorithm can achieve regret polynomial in the bit-complexity of the follower’s payoffs. More formally, we establish the following result:
\begin{restatable}{theorem}{LowerBoundLemma}\label{thm:lower_bound_actionfeedback}
    Let $L \in \mathbb{N}$ be the bit-complexity of the follower’s payoffs.
    Under action feedback, for any learning algorithm, there exists a BSG instance with constant-sized leader/follower action sets and set of follower types, and $T=\Theta(2^{2L})$ rounds, such that
    $R_T \ge \Omega(2^{2L})$.
\end{restatable}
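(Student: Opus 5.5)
We will prove the statement via a Yao-style argument: we construct a family of roughly $N = \Theta(2^{L})$ BSG instances that are hard to tell apart under action feedback. All instances share the same leader payoffs, the same number of types ($K=2$ or $3$), the same type distribution $\mu$, and constant-size action sets ($m=2$, $n=2$ or $3$). They differ only in a hidden parameter $p$, chosen from a grid $\{p_1,\dots,p_N\}\subset[0,1]$ with spacing $\Theta(2^{-L})$. Since $m=2$, commitments are identified with $x_1\in[0,1]$, and every threshold $p_k$ is realizable as the indifference point $x_1 = b/(b+c)$ of payoffs with $L$ bits.

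The construction is as follows. Type $\theta_1$, drawn with probability $\mu_{\theta_1}$, has a best response that switches exactly at $x_1=p$. Commitments in a narrow window around $p$ give the leader a bonus $\Delta$; everywhere else the leader receives a baseline. Type $\theta_2$ acts as a noise type. Its best response is fixed or chosen so that, at every commitment other than the hidden one, the distribution over observed follower actions is identical across all instances. The key point is the action-feedback ambiguity. Under action feedback the leader sees only the action, not the type, so we design type $\theta_2$ to play the same actions that type $\theta_1$ would play. The only commitments whose observation distribution depends on $p$ are then those in an interval of width $O(2^{-L})$ around $p$.

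We make the optimum gadget-like by using a third follower action that is a best response for $\theta_1$ only inside a thin interval $[p,p+2^{-L}]$ and that gives the leader high utility. This is the standard ``needle'' construction, in the spirit of the Peng et al. lower bounds: the needle's position is invisible unless the leader commits inside it. We then get $\OPT \ge$ baseline $+\,c$ for a constant $c$, while any commitment outside the needle has regret at least $c$.

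The information argument then goes as follows. Draw $p$ uniformly from the $N\approx 2^{L}$ disjoint needle positions. Until the algorithm commits inside the true needle, its observation sequence has a distribution independent of $p$, because both the follower actions and the type-mixing are identical. Since the needles are disjoint, each round probes at most one position, so by a standard counting argument the expected number of rounds before the first hit is $\Omega(N)$. With $T=\Theta(2^{2L})$, we choose a needle width of order $2^{-2L}$ (using denominators of $L$ bits, grid points of the form $b/c$ with $b,c\le 2^L$ are spaced as finely as $2^{-2L}$ by Farey spacing), which gives $N=\Theta(2^{2L})$ positions. Every round before the hit then incurs constant regret. This yields $\mathbb{E}[R_T]\ge c\cdot \Omega(\min(T,N)) = \Omega(2^{2L})$, and hence some fixed instance must achieve this bound.

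The main obstacle is making the instances truly indistinguishable under action feedback while using only finitely many types. Specifically, outside the needle the observed action for each type must not leak the boundary location $p$. A single-type threshold would be found by binary search in $O(L)$ rounds, which is exactly why type feedback is easier. My first attempt would be to pair two types whose switching actions have complementary ordering with the same endpoints: one type switches from $a_1$ to $a_2$ at $p$, the other from $a_2$ to $a_1$ at $p$, both with probability $1/2$. The marginal action distribution is then $(1/2,1/2)$ everywhere, and a third action, chosen by one type only on $[p,p+w]$, carries the needle. Making this pattern work with $L$-bit payoffs and leader-favorable tie-breaking is the delicate step.
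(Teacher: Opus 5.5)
Your plan has the same skeleton as the paper's proof. Both use disjoint ``needle'' regions on which the follower plays the unique leader-rewarding action $a^\star$. Both use label-permuted copies of one follower type, mixed by $\mu$, so that outside the needle the distribution of the observed action is the same in every instance. Both then use a counting argument that each commitment rules out only $O(1)$ candidates, and finish with Yao's principle.

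The difference is the dimension.
\begin{itemize}
\item The paper takes $m=3$ and three types whose payoffs over $\{a_1,a_2,a_3\}$ are cyclic shifts of one another, so that, away from ties, they play three distinct actions. It gets $\Theta(2^{2B})$ instances from the regular triangulation of the $2$-dimensional simplex with side $2^{-B}$.
\item You take $m=2$ and two types related by $a_1\leftrightarrow a_2$, and try to get the factor $2$ in the exponent from Farey density (there are $\Theta(Q^2)$ Farey fractions of order $Q$) instead of from dimension.
\end{itemize}
The step you call delicate is immediate: define $\theta_2$'s payoffs as $\theta_1$'s with the labels $a_1,a_2$ exchanged, which is the paper's permutation trick. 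In one dimension this is even cleaner than in the paper. Every $a_1/a_2$ indifference point of either type lies inside the needle, so tie-breaking can never make the feedback outside the needle depend on the instance. In the $2$-dimensional construction, by contrast, ties among the non-star actions occur on rays outside the cell, and the claimed uniformity of the feedback there depends on how those ties are broken.

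What you should not take for granted is $N=\Theta(2^{2L})$. Under the paper's convention $B_{b/c}=B_b+B_c$, a single threshold $b/(b+c)$ is indeed cheap: take the difference vector $(-1/b,1/c)$ between $a^\star$ and the other action. A needle, however, needs two indifference points of the same type that share the $a^\star$ column. Both entries of that column must be strictly positive, so the zero entry that makes a single threshold cheap is unavailable, and the endpoints are governed by differences of payoffs. The natural realization ($a^\star$ paying $1/2$, the other actions paying $1/2$ plus or minus multiples of $1/(2d)$) costs about $2\log_2 d$ bits for endpoints with denominator $d$. That gives Farey order about $2^{L/2}$ and only $\Theta(2^L)$ needles, unless you exhibit an explicit cleverer family.

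The paper's count has the same slack. Its payoffs have denominator $2^{B+1}$ and numerators of order $2^B$, hence about $2B$ bits, and setting $B=L/C'$ gives $2^{2L/C'}$ rather than $2^{2L}$. So either route robustly yields regret $2^{\Omega(L)}$ with $T=2^{\Theta(L)}$, which is the real content of the theorem. Matching the exact exponent $2L$ would need more careful bit accounting in both arguments.
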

Intuitively, in order to prove the lower bound we consider $\Theta(2^{2L})$ instances. Each instance is characterized by a subregion in which all the follower’s types play the unique action that provides positive utility to the leader. These regions are designed to be disjoint across instances, and the leader receives identical feedback whenever it selects a commitment outside them. As a result, in the worst case, the leader is required to enumerate an exponential number of regions before identifying the one that yields positive utility. Employing Yao's minimax principle, this translates into an exponential lower bound on the regret.

Theorem~\ref{thm:lower_bound_actionfeedback} has strong practical implications. 
Indeed, even under commonly used 32-bit integer representations, \emph{i.e.}, when $L = 32$, 
the lower bound on the regret suffered by any algorithm is in the order of $2^{64}$, which is prohibitively large. 
This highlights that action feedback is insufficient to achieve meaningful regret guarantees.




\section{No-Regret With Type Feedback}\label{sec:algorithm}

In this section, we present a {no-regret} algorithm that operates in the \emph{type-feedback} setting.
The pseudocode of our algorithm is provided in Algorithm~\ref{alg:main_algo}.
At a high level, the algorithm splits the $T$ rounds into epochs indexed by $h \in \mathbb{N}$.
In each epoch $h$, it executes three different procedures, namely $\FindTypes$ (Algorithm~\ref{alg:find_types}), $\Partition$ (Algorithm~\ref{alg:partition}), and $\Prune$ (Algorithm~\ref{alg:prune}).\footnote{In Algorithm~\ref{alg:main_algo} and in all its sub-procedures, we do not keep track of the current round $t \in [T]$. We assume that, whenever $t>T$, the algorithm stops its execution.}

At Line~\ref{line:initial_defs}, Algorithm~\ref{alg:main_algo} initializes the leader's decision space $\mathcal{X}_h$---comprising the commitments available to the leader at epoch $h$---by setting $\mathcal{X}_1 = \Delta(\mathcal{A}_\ell)$.
Moreover, Algorithm~\ref{alg:main_algo} initializes the parameter $\epsilon_h \in (0,1)$ by setting $\epsilon_1 = \nicefrac{1}{K}$.
As further discussed in the following, this parameter plays a crucial role in balancing the length of the different epochs against the optimality of the commitments chosen by the leader.
Then, at Line~\ref{line:main_loop}, Algorithm~\ref{alg:main_algo} iterates over the different epochs $h \in \mathbb{N}$ and, at each epoch $h$, performs the three main operations described in the following.

At Line~\ref{line:main_find_types}, Algorithm~\ref{alg:main_algo} invokes the $\FindTypes$ procedure (see Algorithm~\ref{alg:find_types}). 
This procedure takes as inputs the set $\mathcal{X}_h$ and the parameter $\epsilon_h$, and requires $\mathcal{O}(\nicefrac{1}{\epsilon_h^2})$ leader-follower interactions.
The goal of this procedure is to build an empirical estimator $\widehat{\mu}_h$ of the distribution $\mu$ such that $\|\widehat{\mu}_h - \mu \|_\infty \le \epsilon_h$. 
Furthermore, this procedure also aims at identifying a subset of types $\overline{\Theta}_h \subseteq \Theta$ with the guarantee that, if $\theta \in \overline{\Theta}_h$, then $\mu_\theta \ge \epsilon_h$.
Subsequently, at Line~\ref{line:main_partition}, Algorithm~\ref{alg:main_algo} computes the set $\widetilde{\Theta}_h$ by taking the union of $\overline{\Theta}_h$ and $\widetilde{\Theta}_{h-1}$ (where we initially set $\widetilde{\Theta}_0 \coloneqq \varnothing$).
Intuitively, the set $\widetilde{\Theta}_h$ consists of those follower types whose best-response regions will be known by the end of epoch $h$.

%

As a second main step, at Line~\ref{line:find_partition}, Algorithm~\ref{alg:main_algo} invokes the $\Partition$ procedure (see Algorithm~\ref{alg:partition}) to learn the polytopes $\mathcal{P}(\va) \cap \mathcal{X}_h$ associated with each action profile $\va \in \Af(\widetilde{\Theta}_h)$. 
Algorithm~\ref{alg:partition} employs an adaptation of a procedure developed by~\citet{bacchiocchi2025sample} for non-\emph{Bayesian} Stackelberg games to learn a mapping $\mathcal{Y}_h$ from action profiles $\va \in \Af(\widetilde{\Theta}_h)$ to the corresponding sets $\mathcal{P}(\va) \cap \mathcal{X}_h$. Formally, $\mathcal{Y}_h : \va \mapsto \mathcal{P}(\va) \cap \mathcal{X}_h$.


At Line~\ref{line:main_prune}, Algorithm~\ref{alg:main_algo} invokes the $\Prune$ procedure (see Algorithm~\ref{alg:prune}).
The goal of this procedure is to exploit the estimator $\widehat{\mu}_h$ and the mapping $\mathcal{Y}_h$ to refine the leader's decision space, thereby obtaining a new decision space $\mathcal{X}_{h+1}$ for the subsequent epoch $h+1$ and ensuring that all leader commitments selected at that epoch are at most $\mathcal{O}(\epsilon_h)$-suboptimal.
Finally, at Line~\ref{line:epsilon_halved}, Algorithm~\ref{alg:main_algo} halves the parameter $\epsilon_h$, setting $\epsilon_{h+1}$ for the subsequent epoch.

\begin{algorithm}[!htp]
\caption{\texttt{No-Regret-Bayesian-Stackelberg}}
\label{alg:main_algo}
\begin{algorithmic}[1]   
    \Require $T\in \mathbb{N}, \delta \in (0,1)$
    \State $\epsilon_1 \gets \nicefrac{1}{K}$, $\mathcal{X}_1 \gets \Delta(\Al)$, $\widetilde{\Theta}_0 \gets \varnothing$ \label{line:initial_defs}
    \State $\delta_1\gets \frac{\delta}{2\left\lceil\log_4(5T) \right\rceil}, \,  \delta_2 \gets \delta_1$ 
    \label{line:delta_defs}
    \vspace{1mm}
    \For{$h = 1, 2, \ldots$} \label{line:main_loop}
        \State $\overline{\Theta}_h, \widehat{\mu}_h \gets \texttt{Find-Types}(\mathcal{X}_h
        ,\epsilon_h, \delta_1)$ \label{line:main_find_types}
        \State $\widetilde{\Theta}_h \gets \widetilde{\Theta}_{h-1} \cup \overline{\Theta}_h$ \label{line:main_partition}
        \State $\mathcal{Y}_h \gets \texttt{Find-Partition}(\mathcal{X}_h,\epsilon_h, \widetilde{\Theta}_h, \widetilde{\Theta}_{h-1}, \delta_2)$ \label{line:find_partition}
        \State $\mathcal{X}_{h+1}  \gets  \Prune(\mathcal{Y}_h, \epsilon_h, \widetilde{\Theta}_h, \widetilde{\Theta}_{h-1}, \widehat{\mu}_h)$ \label{line:main_prune}
        \State $\epsilon_{h+1} \gets \nicefrac{\epsilon_h }{2}$ \label{line:epsilon_halved}
    \EndFor
\end{algorithmic}
\end{algorithm}

\paragraph{Structure of the Leader's Decision Space}
It is crucial to notice that the leader’s decision space $\mathcal{X}_h$ is computed as the union, over action profiles $\va_{h-1} \in \Af(\widetilde{\Theta}_{h-1})$, of polytopes $\mathcal{X}_h(\va_{h-1}) \subseteq \mathcal{P}(\va_{h-1})$ with pairwise zero-volume intersections.
Formally, we have: 
\begin{equation}
    \label{eq:x_h_decomposition}
    \mathcal{X}_h \coloneqq \bigcup_{\va_{h-1} \in \Af(\widetilde{\Theta}_{h-1})} \mathcal{X}_{h}(\va_{h-1}).
\end{equation}
As further discussed in Section~\ref{sec:prune}, for each $\va_{h-1}$, the set $\mathcal{X}_{h}(\va_{h-1})$ is the subset of $\mathcal{P}(\va_{h-1})$ in which the leader’s expected utility is at most $\mathcal{O}(\epsilon_{h-1})$-suboptimal.\footnote{With a slight abuse of notation, we denote by $\va_h$ an element of $\Af(\widetilde{\Theta}_h)$ so as to distinguish it from action profiles associated with other epochs.}  
Furthermore, in order to guarantee that Equation~\eqref{eq:x_h_decomposition} is well defined for every epoch $h \ge 1$, we let $\widetilde{\Theta}_0 \coloneqq \varnothing$, $\Af(\widetilde{\Theta}_0) \coloneqq \{\bot\}$, and $\mathcal{X}_0(\bot) \coloneqq \Delta(\Al)$.

\subsection{The $\FindTypes$ Procedure}\label{sec:find_types}

We now discuss the first procedure executed by Algorithm~\ref{alg:main_algo} at each epoch $h$, namely $\FindTypes$, whose pseudocode is presented in Algorithm~\ref{alg:find_types}. The goal of this procedure is to compute an estimator $\widehat{\mu}_h$ of the prior distribution $\mu$ such that $\|\widehat{\mu}_h - \mu\|_\infty \le \epsilon_h$, together with a subset of types $\overline{\Theta}_h \subseteq \Theta$ satisfying $\mu_\theta \ge \epsilon_h$ for all $\theta \in \overline{\Theta}_h$.

In order to achieve its goal, Algorithm~\ref{alg:find_types} selects an arbitrary strategy in the leader's decision space $\mathcal{X}_h$ and commits to it for $T_{h,1} = O(\nicefrac{1}{\epsilon_h^2})$ rounds. By using the realized types observed during these $T_{h,1}$ rounds of leader-follower interaction, the algorithm constructs the estimator $\widehat{\mu}_h$. Based on this estimator, it then defines $\overline{\Theta}_h$ as the set of all follower types $\theta$ such that $\widehat{\mu}_{h,\theta} \ge 2 \epsilon_h$.

\begin{algorithm}[H] \caption{\texttt{Find-Types}} \label{alg:find_types}
    \begin{algorithmic}[1]
        \Require $\epsilon_h>0, \varnothing \neq \mathcal{X}_h \subseteq \Delta(\Al), \delta_1 \in (0,1)$
        \State $T_{h,1} \gets \left\lceil \frac{1}{2\epsilon_h^2} \log \left( \frac{2K}{\delta_1} \right) \right\rceil$\
        \State Play any $x \in \mathcal{X}_h$ for $T_{h,1}$ rounds and observe $\theta_t \sim \mu$
        \State Compute $\widehat{\mu}_{h}$ with the observed feedback
        \State $\overline{\Theta} \gets \{\theta \in \Theta \mid \widehat{\mu}_{h,\theta} \ge 2\epsilon_h \} $ 
        \State \textbf{Return} $\widehat{\mu}_h, \overline{\Theta}$
    \end{algorithmic}
\end{algorithm}
Then, by a standard concentration argument, Algorithm~\ref{alg:find_types} ensures that the following holds.
\begin{restatable}{lemma}{FindTypesLemma}
    \label{lem:find_types}
   Let $\epsilon_h, \delta_1 \in (0,1)$ and let $\mathcal{X}_h \subseteq \Delta(\Al)$ be non-empty.
    Then, with probability at least $1 - \delta_1$, Algorithm~\ref{alg:find_types} computes:
    \begin{itemize}[nolistsep,noitemsep]
    \item[(i)] an estimator $\widehat{\mu}_h \in \Delta(\Theta)$ such that $\|\mu - \widehat{\mu}_h \|_\infty \le \epsilon_h$;
    \item[(ii)] a set of types $\overline{\Theta}_h \subseteq \Theta$ such that $\mu_\theta \ge \epsilon_h$ for every $\theta \in \overline{\Theta}_h$ and $\mu_\theta \le 3\epsilon_h$ for every other type;
    \end{itemize}
    by using $\mathcal{O}\left(\nicefrac{1}{\epsilon_h^2}\log(\nicefrac{K}{\delta_1})\right)$ rounds.
\end{restatable}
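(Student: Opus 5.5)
The argument is a direct application of Hoeffding's inequality together with a union bound over the $K$ types. During the $T_{h,1}$ rounds of Algorithm~\ref{alg:find_types} the leader's commitment is fixed, but the types $\theta_t$ are drawn i.i.d.\ from $\mu$ regardless of the commitment. I will therefore define $\widehat{\mu}_{h,\theta}$ as the empirical frequency $\frac{1}{T_{h,1}}\sum_{t}\mathbbm{1}\{\theta_t=\theta\}$. This is clearly an element of $\Delta(\Theta)$, since the frequencies are nonnegative and sum to one. Under type feedback these indicators are observed, and for each fixed $\theta$ they are i.i.d.\ Bernoulli random variables with mean $\mu_\theta$.

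\textbf{Concentration.} For each $\theta$, Hoeffding's inequality gives
\[
\mathbb{P}\bigl(|\widehat{\mu}_{h,\theta}-\mu_\theta|>\epsilon_h\bigr)\le 2\exp(-2T_{h,1}\epsilon_h^2).
\]
With $T_{h,1}\ge \frac{1}{2\epsilon_h^2}\log(2K/\delta_1)$, this probability is at most $\delta_1/K$. A union bound over the $K$ types then yields $\|\widehat{\mu}_h-\mu\|_\infty\le\epsilon_h$ with probability at least $1-\delta_1$, which proves (i).

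\textbf{Consequences on the good event.} Condition on the event in (i). If $\theta\in\overline{\Theta}_h$, then $\widehat{\mu}_{h,\theta}\ge 2\epsilon_h$, so $\mu_\theta\ge \widehat{\mu}_{h,\theta}-\epsilon_h\ge\epsilon_h$. If $\theta\notin\overline{\Theta}_h$, then $\widehat{\mu}_{h,\theta}<2\epsilon_h$, so $\mu_\theta\le\widehat{\mu}_{h,\theta}+\epsilon_h<3\epsilon_h$. This proves (ii).

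\textbf{Round count.} The number of rounds used is exactly $T_{h,1}=\lceil \frac{1}{2\epsilon_h^2}\log(2K/\delta_1)\rceil=\mathcal{O}(\epsilon_h^{-2}\log(K/\delta_1))$, because $\epsilon_h<1$ and $K\ge1$ make the ceiling and the constant $2$ inside the logarithm absorbable.

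I do not expect a real obstacle here. The only subtle point is justifying that the observations are genuinely i.i.d.\ from $\mu$, independent of the algorithm's past choices. This holds because, by the interaction protocol, types are sampled from $\mu$ independently at each round, and within this procedure the commitment is fixed anyway.
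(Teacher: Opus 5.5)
Your proposal is correct and matches the paper's proof: Hoeffding's inequality for each type, with $T_{h,1}$ chosen so that each failure probability is at most $\delta_1/K$, then a union bound over the $K$ types for (i), and the $2\epsilon_h$ threshold combined with $\|\widehat{\mu}_h-\mu\|_\infty\le\epsilon_h$ for (ii). Your explicit checks that $2\exp(-2T_{h,1}\epsilon_h^2)\le\delta_1/K$ and that the observed types are i.i.d.\ draws from $\mu$ fill in details the paper leaves implicit.
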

\subsection{The $\Partition$ Procedure}
\label{sec:partition}
Now, we introduce and describe the $\Partition$ procedure, whose pseudocode is reported in Algorithm~\ref{alg:partition}.
%
%
The goal of this procedure is to identify the mapping $$\mathcal{Y}(\va_h) \coloneqq \mathcal{P}(\va_h) \cap \mathcal{X}_h,$$ for every action profile $\va_h \in \Af(\widetilde{\Theta}_h)$.
Equivalently, since the set $\mathcal{X}_h$ is defined as the union of polytopes $\mathcal{X}_h(\va_{h-1}) \subseteq \mathcal{P}(\va_{h-1})$ over the action profiles $\va_{h-1} \in \Af(\Theta_{h-1})$, Algorithm~\ref{alg:partition} determines, for each action profile $\va_h \in \Af(\widetilde{\Theta}_h)$ and $\va_{h-1} = \va_h | \widetilde{\Theta}_{h-1}$:
\begin{equation}
    \label{eq:partition_result_informal}
    \mathcal{Y}_h(\va_h) = \mathcal{P}(\va_h) \cap \mathcal{X}_h(\va_{h-1}).
\end{equation}
More formally, given $\va_{h}$ and $\va_{h-1} = \va_h | \widetilde{\Theta}_{h-1}$, we have 
\begin{align*}
    \mathcal{Y}_h(\va_h) &= \mathcal{P}(\va_h) \cap \mathcal{X}_h\\
    &= \mathcal{P}(\va_h) \cap  \bigcup_{\va_{h-1}' \in \Af(\Theta_{h-1}) } \mathcal{X}_h(\va_{h-1}')\\
    &= \bigcup_{\va_{h-1}' \in \Af(\Theta_{h-1}) } \left(\mathcal{P}(\va_h) \cap  \mathcal{X}_h(\va_{h-1}') \right) \\
    &= \mathcal{P}(\va_h) \cap \mathcal{X}_h(\va_{h-1}),
\end{align*}
where the last equality holds since the intersection between $\mathcal{P}(\va_h)$ and $\mathcal{X}_h(\va'_{h-1})$ is empty for all action profiles $\va'_{h-1}$ that do not coincide with $\va_{h-1}$.\footnote{For ease of exposition, in the following, whenever a polytope has zero volume, we consider it empty. This can be safely done as discussed in Appendix~\ref{app:partition}. 
}

In order to compute the mapping $\mathcal{Y}_h$, Algorithm~\ref{alg:partition} receives as inputs the parameter $\epsilon_h \in (0,1)$ such that $\mu_\theta \ge \epsilon_h$ for every $\theta \in \widetilde{\Theta}_h$ (according to Lemma~\ref{lem:find_types}), and the leader's decision space $\mathcal{X}_h$. In addition, Algorithm~\ref{alg:partition} takes as input a parameter $\delta_2 \in (0,1)$ that controls the probability with which it terminates correctly.

Algorithm~\ref{alg:partition} builds on a procedure developed by~\citet{bacchiocchi2025sample}, which is designed to learn the best-response regions $\mathcal{P}_\theta(a)$ in the single-typed follower setting, \emph{i.e.}, when $\Theta$ is a singleton. This procedure requires access to an oracle that maps each leader’s commitment $x \in \Delta(\Al)$ to the corresponding agent’s best response $a_\theta^\star(x)$. In our setting, such an oracle can be implemented by repeatedly letting the leader commit to the same mixed strategy $x$ until a follower of type $\theta$ is sampled. In such a case, we say that the oracle \emph{queries} the strategy $x$. 
Notice that, since $\mu_\theta \ge \epsilon_h$ for every $\theta \in \widetilde{\Theta}_h$, with $\mathcal{O}(\nicefrac{1}{\epsilon_h})$ leader-follower interactions, the best response $a^\star_\theta(x)$ is observed at least once with high probability, allowing the oracle to effectively query the strategy~$x$.

Let us also observe that the procedure in~\cite{bacchiocchi2025sample} was originally designed to operate over the simplex $\Delta(\Al)$. However, it can be easily adapted to work over a generic polytope $\mathcal{X}_h(\va_{h-1})$ (see, \emph{e.g.}, \cite{bacchiocchi2024online}).
Thus, the following holds.
\begin{restatable}{lemma}{StackelbergInformal}[Informal restate \cite{bacchiocchi2025sample}]\label{lem:stackelberg_partition}\label{lem:stackelberg_informal}
    Let $\mathcal{X}_h(\va_{h-1})$ be a polytope with $N$ facets.
    Then, given any $\zeta \in (0,1)$ and $\theta \in \Theta$, under the event that each query terminates in a finite number of leader-follower interactions, there exists an algorithm that computes the polytopes $\mathcal{P}_\theta(a) \cap \mathcal{X}_h(\va_{h-1})$ for each $a \in \Af$ by using at most
    \begin{equation*}
        \Obigt \left( n^2 \left(m^7 L\log\frac{1}{\zeta} + \binom{N+n}{m} \right) \right)
    \end{equation*}
    queries with probability at least $1-\zeta$.
\end{restatable}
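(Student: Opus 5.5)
The plan is to reduce Lemma~\ref{lem:stackelberg_informal} to the single-follower result of \citet{bacchiocchi2025sample}, adapted to a polytope domain as in \cite{bacchiocchi2024online}. Fix $\theta \in \Theta$ and consider the non-\emph{Bayesian} Stackelberg game whose only follower is type $\theta$. We restrict the leader's commitment space from $\Delta(\Al)$ to the polytope $\mathcal{X}_h(\va_{h-1})$, which has $N$ facets. Under the stated event, every query terminates: the leader repeatedly commits to $x$ until a type-$\theta$ follower is sampled, and then observes $a^\star_\theta(x)$. So we have exactly the best-response oracle that the original algorithm assumes. The type-feedback model is what lets us recognize when type $\theta$ has appeared.

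Next we check that the structural facts the original analysis uses still hold on the restricted domain. Each $\mathcal{P}_\theta(a) \cap \mathcal{X}_h(\va_{h-1})$ is a polytope. It is cut out by at most $n-1$ separating half-spaces $\mathcal{H}^{jk}_\theta$ and at most $N$ boundary half-spaces coming from the facets of $\mathcal{X}_h(\va_{h-1})$, in place of the $m$ simplex constraints. Hence its number of vertices is at most $\binom{N+n}{m}$, which replaces $\binom{n+m}{m}$ in the vertex-enumeration part of the bound.

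The bit-complexity arguments also carry over, provided the facets of $\mathcal{X}_h(\va_{h-1})$ have bit-complexity polynomial in $L$ and $m$. These arguments are the ones used to recover each separating hyperplane exactly via binary search along segments, followed by rounding to the nearest rational of bounded denominator. The facets are themselves either separating hyperplanes of previously learned types or hyperplanes produced by $\Prune$, and these are built from payoffs of bit-complexity $L$ together with the estimates. This is where the $m^7 L$ factor, multiplied by $n^2$ over pairs of actions, is preserved.

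The randomized part of the procedure finds an interior point of each region of positive volume by sampling. We run it with confidence parameter $\zeta$, which contributes the $\log(1/\zeta)$ factor. A union bound over the internal randomized steps gives overall success probability $1-\zeta$.

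The main obstacle is the domain adaptation. Specifically, the facets of $\mathcal{X}_h(\va_{h-1})$ must be handled correctly. The vertex-discovery routine must treat facets where no best-response change occurs as domain boundaries rather than as separating hyperplanes, and the bit-complexity of vertices lying on domain facets must stay controlled. I would address both points by following the adaptation in \cite{bacchiocchi2024online}: known domain constraints are added to the list of boundary half-spaces, and the original vertex-counting and precision arguments are then invoked verbatim. This yields the stated query bound.
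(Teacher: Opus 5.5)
Your reduction is the one the paper relies on. The paper gives no independent proof. It implements the best-response oracle for type $\theta$ by recommitting to $x$ until $\theta$ is observed, which is where type feedback is needed. It then invokes \citet{bacchiocchi2025sample} and appeals to the domain adaptation of \cite{bacchiocchi2024online}, with the $N$ domain facets replacing the $m$ simplex constraints in the vertex count $\binom{N+n}{m}$.

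The step that fails as written is your bit-complexity accounting. The cited query bound is linear in an upper bound on the bit-complexity of the hyperplanes cutting out the regions being learned. Once the facets of $\mathcal{X}_h(\va_{h-1})$ are among them, $L$ must be replaced by some $B \ge L$ that also bounds their bit-complexity. Under your proviso (facets of bit-complexity $\mathrm{poly}(L,m)$), the first term therefore becomes $m^7\,\mathrm{poly}(L,m)\log(1/\zeta)$. The $m^7 L$ factor is \emph{not} preserved unless $B=\mathcal{O}(L)$.

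The proviso also does not hold in the paper's application. The half-spaces produced by \Prune{} are built from $\epsilon_h$ and $\widehat{\mu}_h$, and Lemma~\ref{lem:bit_complexity_and_facets} only yields $B=\mathcal{O}(L+\log T+\log K+B_\delta)$. This is precisely why the formal version, Lemma~\ref{lem:stackelberg}, is stated with $B \ge L$ in place of $L$.

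Relatedly, that formal version assumes $\volume(\mathcal{X}_h(\va_{h-1}))>0$. It outputs $\mathcal{P}_\theta(a \mid \mathcal{X}_h(\va_{h-1}))$ as in Equation~\eqref{eq:stackelberg_result}, so zero-volume intersections are returned as $\varnothing$. The reason is that the procedure only certifies regions with nonempty interior, which matches your own description of its randomized step. You should make both provisos explicit rather than claim to recover the literal intersections $\mathcal{P}_\theta(a)\cap\mathcal{X}_h(\va_{h-1})$.
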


In the following, we denote by $\Stackelberg$ the procedure that ensures the validity of Lemma~\ref{lem:stackelberg_partition}. This procedure takes as inputs a polytope $\mathcal{X}_h(\va_{h-1})$, a type $\theta \in \Theta$, and a parameter $\zeta \in (0,1)$, and returns the polytopes $\mathcal{P}_\theta(a \mid \mathcal{X}_h(\va_{h-1}))$ for $a \in \Af$, defined as follows:
\begin{equation}\label{eq:P_region}
    \mathcal{P}_\theta(a\mid\mathcal{X}_h(\va_{h-1})) \coloneqq \mathcal{P}_\theta(a) \cap \mathcal{X}_h(\va_{h-1}),
\end{equation}
 with probability at least $1-\zeta$. When $\mathcal{X}_h(\va_{h-1})$ is empty, the procedure simply returns an empty set, without requiring any leader-follower interactions.

\begin{algorithm}[h] \caption{$\Partition$} \label{alg:partition}
    \begin{algorithmic}[1]
        \State \textbf{Require} 
        $\mathcal{X}_h$, $\widetilde{\Theta}_{h-1} \subseteq \widetilde{\Theta}_h \subseteq \Theta$, $\epsilon_h,\delta_2 \in (0,1)$
        
        \State $\zeta \gets \nicefrac{\delta_2}{2}$ \label{line:partition_zeta}

        \For{$\theta \in \widetilde{\Theta}_{h-1}, \va_{h-1} \in \Af(\widetilde{\Theta}_{h-1})$} \label{line:partition_loop_known}
                \State $a \gets \va_{h-1,\theta}$
                \State $\mathcal{P}_\theta (a \mid \mathcal{X}_h(\va_{h-1})) \gets \mathcal{X}_h(\va_{h-1})$ \label{line:partition_p_simple}
                \State $\mathcal{P}_\theta (a'\mid \mathcal{X}_h(\va_{h-1})) \gets \varnothing \quad \forall a' \neq a$ \label{line:partition_p_empty}
        \EndFor \label{line:end_first_for}

        \For{$\theta \in \widetilde{\Theta}_h \setminus \widetilde{\Theta}_{h-1}, \va_{h-1} \in \Af(\widetilde{\Theta}_{h-1})$} \label{line:partition_loop_unknown}
                \State $ R \gets \Stackelberg(\mathcal{X}_h(\va_{h-1}),\theta,\zeta)$ \label{line:partiton_use_stackelberg}
                \State $\mathcal{P}_\theta(a \mid \mathcal{X}_h(\va_{h-1})) \gets R(a) \quad \forall a \in \Af$\label{line:partition_stackelberg}
        \EndFor \label{line:end_partition_loop_unknown}

        \For{$\va_h \in \Af(\widetilde{\Theta}_h)$}
            \State $\va_{h-1} \gets \va_h | \widetilde{\Theta}_{h-1}$
            \label{line:va_h-1}
            \If{$\volume \left(\bigcap_{\theta \in \widetilde{\Theta}_h} \mathcal{P}_\theta(\va_{h,\theta} \mid \mathcal{X}_h(\va_{h-1})) \right) > 0$}
                \State $\mathcal{Y}_h(\va_h) \gets \bigcap_{\theta \in \widetilde{\Theta}_h} \mathcal{P}_\theta(\va_{h,\theta} \mid \mathcal{X}_h(\va_{h-1}))$ \label{line:partition_result}
            \Else
                \State $\mathcal{Y}_h(\va_h) \gets \emptyset$ \label{line:partition_result_empty}
            \EndIf
        \EndFor

        \State \textbf{Return} $\mathcal{Y}_h$
    \end{algorithmic}
\end{algorithm}

Next, we describe Algorithm~\ref{alg:partition}. At a high level, it consists of three  macro blocks.
\begin{enumerate}
    \item In the \emph{first} macro block (Lines~\ref{line:partition_loop_known}--\ref{line:end_first_for}), Algorithm~\ref{alg:partition} computes the regions $\mathcal{P}_\theta( \, \cdot \mid \mathcal{X}_h(\va_{h-1}))$ for each type in $\widetilde{\Theta}_{h-1}$ and each action profile in $ \Af(\widetilde{\Theta}_{h-1})$.
    We observe that, during the execution of Algorithm~\ref{alg:main_algo}, given a tuple $\va_{h-1} \in \Af(\widetilde{\Theta}_{h-1})$, the following holds:
    \begin{equation}\label{eq:inclusions}
        \mathcal{X}_h(\va_{h-1}) \subseteq \mathcal{P}(\va_{h-1}) \subseteq \mathcal{P}_\theta(\va_{h-1,\theta}).
    \end{equation}
    The first inclusion follows from the definition of the leader decision space $\mathcal{X}_h$, while the second one follows from the definition of $\mathcal{P}(\va_{h-1})$.
    Taking $a = \va_{h-1,\theta}$ in Equation~\eqref{eq:P_region}, thanks to Equation~\eqref{eq:inclusions}, we get: $$\mathcal{P}_\theta(a \mid \mathcal{X}_h(\va_{h-1})) = \mathcal{X}_h(\va_{h-1}),$$ as implemented by Algorithm~\ref{alg:partition} at Line~\ref{line:partition_p_simple}.
    Furthermore, for any action $a' \neq \va_{h-1,\theta}$, the intersection between $\mathcal{P}_\theta(\va_{h-1,\theta})$ and $\mathcal{P}_\theta(a')$ is either empty or has zero volume. Since $\mathcal{X}_h(\va_{h-1}) \subseteq \mathcal{P}_\theta(\va_{h-1,\theta})$ according to Equation~\eqref{eq:inclusions}, the region $\mathcal{P}_\theta(a' \mid \mathcal{X}_h(\va_{h-1}))$ is also either empty or has zero volume. Thus, it is set equal to $\varnothing$ at Line~\ref{line:partition_p_empty}.
    We notice that Algorithm~\ref{alg:partition} does not require any leader-follower interactions here, as it simply manipulates the follower's best-response regions associated with follower types in $\widetilde{\Theta}_{h-1}$, which have been computed in previous epochs.

    \item In the \emph{second} macro block (Lines~\ref{line:partition_loop_unknown}--\ref{line:end_partition_loop_unknown}), Algorithm~\ref{alg:partition} invokes the $\Stackelberg$ procedure for each type in $\widetilde{\Theta}_{h} \setminus \widetilde{\Theta}_{h-1}$ and each action profile in $\Af(\widetilde{\Theta}_{h-1})$. By Lemma~\ref{lem:stackelberg_informal}, the $\Stackelberg$ procedure is guaranteed to learn the polytope
    \begin{equation*}
        R(a) \coloneqq \mathcal{P}_\theta(a) \cap \mathcal{X}_h(\va_{h-1}),
    \end{equation*}
    for every $a \in \Af$. According to Equation~\eqref{eq:P_region}, the polytope $R(a)$ exactly coincides with $\mathcal{P}_\theta(a \mid \mathcal{X}_h(\va_{h-1}))$. 
    Thus, at Line~\ref{line:partition_stackelberg}, Algorithm~\ref{alg:partition} sets $\mathcal{P}_\theta(a \mid \mathcal{X}_h(\va_{h-1}))$ equal to $R(a)$, for every $a \in \Af$.

    \item Finally, in the \emph{third} macro block, Algorithm~\ref{alg:partition} computes $\mathcal{Y}_h(\va_h)$ for every $\va_h$ belonging to $\Af(\widetilde{\Theta}_h)$. We show that, for any $\va_h$ and $\va_{h-1}$ defined according to Line~\ref{line:va_h-1}, the set $\mathcal{Y}_h(\va_h)$ can be computed as done at Line~\ref{line:partition_result}. Indeed, we have:
    \begin{align*}
        \mathcal{Y}_h(\va_h)
        & = \mathcal{P}(\va_h) \cap \mathcal{X}_h(\va_{h-1}) \\
        & = \Big( \bigcap_{\theta \in \widetilde{\Theta}_h} \mathcal{P}(\va_{h,\theta}) \Big) \cap \mathcal{X}_h(\va_{h-1}) \\
        & = \bigcap_{\theta \in \widetilde{\Theta}_h} \Big( \mathcal{P}(\va_{h,\theta}) \cap \mathcal{X}_h(\va_{h-1}) \Big) \\
        & = \bigcap_{\theta \in \widetilde{\Theta}_h} \mathcal{P}_\theta\big( \va_{h,\theta} \mid \mathcal{X}_h(\va_{h-1}) \big).
    \end{align*}
    The first equality above holds because of the definition of Equation~\eqref{eq:partition_result_informal}, the second one because of the definition of $\mathcal{P}(\va_{h})$, the third one because of the distributive property of set intersection, and the last equality holds because of Equation~\eqref{eq:P_region}.
\end{enumerate}

Algorithm~\ref{alg:partition} provides the following guarantees.
\begin{restatable}{lemma}{PartitionLemmaInformal}[Informal version of Lemma~\ref{lem:partition}]\label{lem:partition_informal}
    Suppose that each polytope $\mathcal{X}_h(\va)$ composing $\mathcal{X}_h$ has at most $N>0$ facets.
    If $\mu_\theta \ge \epsilon$ for every $\theta \in \widetilde{\Theta}_h$, Algorithm~\ref{alg:partition} computes $\mathcal{Y}_h$ according to Equation~\eqref{eq:partition_result_informal} with probability at least $1-\delta_2$, using at most: 
    \begin{equation*}
        \Obigt \left( \frac{1}{\epsilon_h} K^{m+1} n^{2m+2} \left(m^7 L \log^2\left( \frac{1}{\delta_2} \right) + \binom{N+n}{m} \right) \right)
    \end{equation*}
    rounds.
\end{restatable}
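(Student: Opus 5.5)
The proof has three parts: correctness of the output, the number of calls to the $\Stackelberg$ procedure together with their failure probabilities, and the number of rounds each query consumes.

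\emph{Correctness.} The first and third macro blocks use no interaction, and their correctness was already argued above through Equation~\eqref{eq:inclusions} and the chain of equalities for $\mathcal{Y}_h(\va_h)$. So $\mathcal{Y}_h$ satisfies Equation~\eqref{eq:partition_result_informal} provided that every call to $\Stackelberg$ at Line~\ref{line:partiton_use_stackelberg} returns the correct polytopes $\mathcal{P}_\theta(a)\cap\mathcal{X}_h(\va_{h-1})$. This reduces everything to controlling the calls in the second macro block.

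\emph{Number of calls.} The number of calls is at most $|\widetilde{\Theta}_h\setminus\widetilde{\Theta}_{h-1}|\cdot|\Af(\widetilde{\Theta}_{h-1})|$. Bounding this by $K\cdot n^{K}$ would be exponential in $K$, which is too weak. Instead, I will use the fact that calls on empty polytopes $\mathcal{X}_h(\va_{h-1})$ are free. The number of action profiles with nonempty (full-volume) $\mathcal{P}(\va_{h-1})$ equals the number of cells in the arrangement of at most $K\binom{n}{2}$ hyperplanes in the $(m-1)$-dimensional simplex. That count is $\mathcal{O}((Kn^2)^{m})$. Since the sets $\mathcal{X}_h(\va_{h-1})\subseteq\mathcal{P}(\va_{h-1})$ have pairwise zero-volume intersections, at most $\mathcal{O}(K^m n^{2m})$ of them are nonempty. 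The number of nontrivial calls is therefore $M\le K^{m+1}n^{2m}$, which matches the $K^{m+1}n^{2m}$ factor in the statement.

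\emph{Failure probability.} I will set $\zeta$ so that a union bound over the $M$ calls, together with the query-termination events, gives total failure at most $\delta_2$. Concretely, each of these two sources receives $\delta_2/2$, so the per-call confidence becomes $\delta_2/(2M)$ up to polynomial factors. Its logarithm contributes only $\log(1/\delta_2)$ plus terms absorbed in $\Obigt$.

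\emph{Rounds per query.} By Lemma~\ref{lem:stackelberg_informal}, each call uses $\Obigt(n^2(m^7L\log(1/\zeta)+\binom{N+n}{m}))$ queries. Each query commits to $x$ until a type-$\theta$ follower appears, and $\mu_\theta\ge\epsilon_h$. Committing for $\lceil \frac{1}{\epsilon_h}\log(1/\zeta')\rceil$ rounds therefore fails with probability at most $(1-\epsilon_h)^{1/\epsilon_h\cdot\log(1/\zeta')}\le\zeta'$. I will choose $\zeta'$ as $\delta_2/2$ divided by the total number of queries, which is polynomial, and apply a union bound. This step contributes a factor $\frac{1}{\epsilon_h}\log(1/\delta_2)$. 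Multiplied by the $\log(1/\zeta)$ already inside the query bound, it yields the $\log^2(1/\delta_2)$ term. Combining the number of calls, the queries per call, and the rounds per query gives the stated bound.

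\emph{Main obstacle.} The hard step is the cell-count bound that keeps the number of $\Stackelberg$ invocations polynomial in $K$ for fixed $m$. I would first try the standard hyperplane-arrangement bound $\sum_{i\le m}\binom{H}{i}$ with $H=K\binom{n}{2}+m$. One must check that degenerate zero-volume polytopes are discarded consistently, as stated in the footnote deferring to Appendix~\ref{app:partition}. A secondary subtlety is that $\Stackelberg$ itself can make adaptive numbers of queries. Because of this, the union bound over queries must use the worst-case query count, which holds on the $1-\zeta$ event.
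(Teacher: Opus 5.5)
Your proposal follows the paper's proof. Correctness is reduced to every $\Stackelberg$ call succeeding, the nontrivial calls are bounded by $K\cdot K^m n^{2m}$ (the paper cites \citet[Lemma~3.2]{personnat2025learning} for the count of non-empty regions, which you re-derive from the hyperplane-arrangement cell bound), and each query costs $\lceil \frac{1}{\epsilon_h}\log(1/\rho)\rceil$ rounds with a union bound over queries, which produces the $\log^2(1/\delta_2)$ term. The one real difference is in your favor: the paper keeps the hard-coded $\zeta=\delta_2/2$ of Line~\ref{line:partition_zeta} and asserts that all calls jointly succeed with probability $1-\zeta$, skipping the union bound over up to $K^{m+1}n^{2m}$ calls; your choice $\zeta\approx\delta_2/(2M)$ makes that step rigorous at the cost of a $\log(Kn)$-type factor absorbed in $\Obigt$. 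Strictly, though, this means changing that line of Algorithm~\ref{alg:partition}, which is implementable since $K$, $n$, $m$ are known.
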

%
%
%
%
We observe that, in principle, the number of action profiles $\va_{h-1} \in \Af(\widetilde{\Theta}_{h-1})$ is exponential in $K$, and Algorithm~\ref{alg:partition} would therefore require a number of rounds exponential in $K$.
However, this is \emph{not} the case, and Algorithm~\ref{alg:partition} exhibits an exponential dependence only on $m$.
Intuitively, this is because most regions $\mathcal{X}_h(\va_{h-1})$ are empty, and thus executing the $\Stackelberg$ procedure over them requires no leader-follower interactions.
More formally, it is possible to show that the number of non-empty best-response regions $\mathcal{P}(\va_{h-1})$ is bounded by $K^m n^{2m}$ (see \citep[Lemma~3.2]{personnat2025learning}).
Thus, since $\mathcal{X}_h(\va_{h-1}) \subseteq \mathcal{P}(\va_{h-1})$, we can show that the $\Stackelberg$ procedure is actually executed at most $\mathcal{O}(K^m n^{2m})$ times.

\subsection{The $\Prune$ Procedure}
\label{sec:prune}
We now discuss the last subprocedure executed at each epoch $h$ by Algorithm~\ref{alg:main_algo}, whose pseudocode is reported in Algorithm~\ref{alg:prune}.
The goal of this procedure is to compute a decision space $\mathcal{X}_{h+1}$ for the subsequent epoch such that, for every $x \in \mathcal{X}_{h+1}$, it holds $\ul(x) \ge \text{OPT} - \mathcal{O}(K \epsilon_h)$.
To do so, Algorithm~\ref{alg:prune} takes as inputs the mapping $\mathcal{Y}_h$ computed by Algorithm~\ref{alg:partition}, the current set of types $\widetilde{\Theta}_h$, the estimator $\widehat{\mu}_h$ computed by Algorithm~\ref{alg:find_types}, and the precision level $\epsilon_h$.

During its execution, Algorithm~\ref{alg:prune} relies on empirical estimates of the leader's expected utility.
Specifically, for every $\va_h \in \Af(\widetilde{\Theta}_h)$ and $x \in \mathcal{Y}_h(\va_h)$, we define
\begin{equation}\label{eq:estimate_leader_utility}
    \ulhat_h(x,\va_h) \coloneqq \sum_{\theta \in \widetilde{\Theta}_h} \widehat{\mu}_{h,\theta} \ul(x,\va_{h,\theta}).
\end{equation}
We notice that these estimates ignore types $\theta \notin \widetilde{\Theta}_h$ and rely on the empirical estimator $\widehat{\mu}_h$, incurring an approximation error of at most $\mathcal{O}(K\epsilon_h)$ thanks to Lemma~\ref{lem:find_types}.

%
%

\begin{algorithm}[h]
\caption{\texttt{Prune}}
\label{alg:prune}
\begin{algorithmic}[1]
    \Require $\mathcal{Y}_h, \widetilde{\Theta}_h, \epsilon_h, \widehat{\mu}_h $.
    \State $C_1 \gets 3, C_2 \gets 6$ \label{line:prune_c1_c2}

    \State $\text{\underline{OPT}}_h \gets $ Equation~\eqref{eq:opt_lower_bound_def} \label{line:prun_opt}

    \For{$\va \in \Af(\widetilde{\Theta}_h)$}
        \State $\mathcal{H}_h(\va) \gets $ Equation~\eqref{eq:h_a_def} \label{line:prune_half_space}
        \State $\mathcal{X}_{h+1}(\va) \gets \mathcal{Y}_h(\va) \cap \mathcal{H}_h(\va)$ \label{line:prune_new_xa}
        \If{$\volume(\mathcal{X}_{h+1}(\va)) = 0$}
            \State $\mathcal{X}_{h+1}(\va) \gets \varnothing$ \label{line:prune_empty}
        \EndIf
    \EndFor
    
    \State \textbf{Return} $\mathcal{X}_{h+1} = \bigcup_{\va \in \Af(\widetilde{\Theta}_h)} \mathcal{X}_{h+1}(\va)$
\end{algorithmic}
\end{algorithm}

We now describe how Algorithm~\ref{alg:prune} works.
As a first step, it computes a lower bound $\underline{\OPT}_h$ on the value of an optimal commitment, defined as:
\begin{equation}
    \label{eq:opt_lower_bound_def}
    \underline{\OPT}_h \coloneqq \max_{\substack{\va \in \Af(\widetilde{\Theta}_h) \\ x \in \mathcal{Y}_h(\va)}} \ulhat_h(x,\va) - C_2 K\epsilon_h,
\end{equation}
where $C_2$ is a constant defined at Line~\ref{line:prune_c1_c2}.
The procedure then computes the polytope $\mathcal{X}_{h+1}(\va_h)$, for every $\va_h \in \Af(\widetilde{\Theta}_h)$, by \emph{pruning} the subset of $\mathcal{Y}_h(\va_h)$ that is guaranteed to yield low utility.
To this end, it defines the half-spaces $\mathcal{H}_h(\va_h) \subseteq \mathbb{R}^m$ as follows:
\begin{equation}
    \label{eq:h_a_def}
    \mathcal{H}_h(\va_h) \!\coloneqq\! \{ x \in \mathbb{R}^m \mid \ulhat_h(x,\va_h) \!+\! C_1 K \epsilon_h \ge \underline{\OPT}_h \},
\end{equation}
where $C_1$ is a constant defined at Line~\ref{line:prune_c1_c2}.
Then, for every $\va_h \in \Af(\widetilde{\Theta}_{h})$, the polytope $\mathcal{X}_{h+1}(\va_h)$ is defined as the intersection of the half-space $\mathcal{H}_h(\va_h)$ and $\mathcal{Y}_h(\va_h)$,
$$
\mathcal{X}_{h+1}(\va_h) \coloneqq \mathcal{Y}_h(\va_h) \cap \mathcal{H}_h(\va_h),
$$
as implemented at Line~\ref{line:prune_new_xa}.

The half-spaces $\mathcal{H}_h(\va_h)$ defined in Equation~\eqref{eq:h_a_def} guarantee two crucial properties: (i) an optimal commitment belongs to at least one such $\mathcal{H}_h(\va_h)$ and therefore to the new search space $\mathcal{X}_{h+1}$, and (ii) all commitments belonging to such half-spaces are $\mathcal{O}(K\epsilon_{h})$-optimal.
Formally, we have:
\begin{restatable}{lemma}{PruneLemmaOneInformal}[Informal version of Lemma~\ref{lem:prune_one}]\label{lem:prune_one_informal}
    Suppose that Algorithm~\ref{alg:find_types} and Algorithm~\ref{alg:partition} were executed successfully up to epoch $h$.
    Then Algorithm~\ref{alg:prune} computes a search space $$\mathcal{X}_{h+1} = \bigcup_{\va_h \in \Af(\widetilde{\Theta}_h)} \mathcal{X}_{h+1}(\va_h)$$ containing an optimal commitment.
    Furthermore, it holds $\ul(x) \ge \OPT -14 K \epsilon_h$ for every $x \in \mathcal{X}_{h+1}$.
\end{restatable}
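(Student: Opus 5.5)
The argument runs by induction on $h$ and rests on one uniform approximation bound: for every $\va_h \in \Af(\widetilde{\Theta}_h)$ and every $x \in \mathcal{Y}_h(\va_h)$,
\[
\bigl|\ulhat_h(x,\va_h) - \ul(x)\bigr| \le 4K\epsilon_h .
\]

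\textbf{Approximation bound.} Fix $x$ in the interior of $\mathcal{Y}_h(\va_h)$. Every type $\theta \in \widetilde{\Theta}_h$ plays $a^\star_\theta(x) = \va_{h,\theta}$ there; boundary points follow by continuity or by the tie-breaking convention. We then have
\[
\ul(x) - \ulhat_h(x,\va_h) = \sum_{\theta \in \widetilde{\Theta}_h} (\mu_\theta - \widehat{\mu}_{h,\theta})\,\ul(x,\va_{h,\theta}) + \sum_{\theta \notin \widetilde{\Theta}_h} \mu_\theta\, \ul(x,a^\star_\theta(x)).
\]
By Lemma~\ref{lem:find_types}(i), the first sum is at most $K\epsilon_h$ in absolute value. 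For the second sum, note that $\theta \notin \widetilde{\Theta}_h$ implies $\theta \notin \overline{\Theta}_h$, so Lemma~\ref{lem:find_types}(ii) gives $\mu_\theta \le 3\epsilon_h$. Since utilities lie in $[0,1]$, the second sum is in $[0, 3K\epsilon_h]$. This proves the bound.

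\textbf{An optimal commitment survives.} Use the induction hypothesis that $\mathcal{X}_h$ contains an optimal commitment $x^\star$; the base case is $\mathcal{X}_1=\Delta(\Al)$. Relying on the standing assumption about $\va^\star$ (suitably propagated, or by choosing $x^\star$ inside a full-dimensional piece), we may take $x^\star \in \mathcal{Y}_h(\va_h)$ with $\va_h = \va^\star|\widetilde{\Theta}_h$. Let $(\bar x,\bar\va)$ be the maximizer defining $\underline{\OPT}_h$. Then
\[
\underline{\OPT}_h + C_2K\epsilon_h = \ulhat_h(\bar x,\bar\va) \le \ul(\bar x) + 4K\epsilon_h \le \OPT + 4K\epsilon_h .
\]
Combining this with $\ulhat_h(x^\star,\va_h) \ge \OPT - 4K\epsilon_h$ gives $\ulhat_h(x^\star,\va_h) + C_1K\epsilon_h \ge \underline{\OPT}_h$ as soon as $C_1 + C_2 \ge 8$. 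This holds since $C_1+C_2 = 9$. Hence $x^\star \in \mathcal{H}_h(\va_h)$, and therefore $x^\star \in \mathcal{X}_{h+1}(\va_h)$.

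\textbf{Every surviving commitment is near-optimal.} Take $x \in \mathcal{X}_{h+1}(\va)$. Since $x^\star$ is feasible in the maximization defining $\underline{\OPT}_h$,
\[
\underline{\OPT}_h \ge \ulhat_h(x^\star,\va_h) - C_2K\epsilon_h \ge \OPT - (4 + C_2)K\epsilon_h .
\]
The half-space condition $x \in \mathcal{H}_h(\va)$ then gives
\[
\ul(x) \ge \ulhat_h(x,\va) - 4K\epsilon_h \ge \underline{\OPT}_h - (C_1+4)K\epsilon_h \ge \OPT - (8 + C_1 + C_2)K\epsilon_h .
\]
With $C_1=3$ and $C_2=6$ this is $\OPT - 17K\epsilon_h$. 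The stated $14K\epsilon_h$ requires tighter bookkeeping: in each step above, only one side of the two-sided error bound is actually used. Doing so, each step costs at most $3K\epsilon_h$ or $K\epsilon_h$ rather than $4K\epsilon_h$, and the constants should then add up to about $14$.

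\textbf{Main obstacle.} The delicate step is the claim that $x^\star$ lies in $\mathcal{Y}_h(\va^\star|\widetilde{\Theta}_h)$ with the follower actually playing $\va^\star_\theta$ there. The difficulty is that $\mathcal{Y}_h$ drops zero-volume pieces and boundary points are subject to tie-breaking. I would handle this with the standing assumption $\volume(\mathcal{P}(\va^\star))>0$ together with $a^\star_\theta(x^\star)=\va^\star_\theta$, and carry the invariant ``$x^\star \in \mathcal{X}_h(\va^\star|\widetilde{\Theta}_{h-1})$'' through the induction. The property is inherited because $\mathcal{P}(\va^\star)\subseteq\mathcal{P}(\va^\star|\widetilde{\Theta}_h)$ and the half-space cut retains $x^\star$. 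If $x^\star$ lies on a face of the cut, I would argue that a full-volume neighbourhood survives, so the region is not discarded at Line~\ref{line:prune_empty}.
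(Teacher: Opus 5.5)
Your route is the paper's: you sandwich $\underline{\OPT}_h$ from below using $x^\star$ and from above using the maximizer, then read both claims off the half-space condition. The problem is that the ``uniform approximation bound'' you build everything on is false as stated.

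\textbf{The two-sided bound fails at boundary points.} The direction $\ul(x)\le\ulhat_h(x,\va_h)+4K\epsilon_h$ can fail at a boundary point of $\mathcal{Y}_h(\va_h)$. There, a type $\theta\in\widetilde{\Theta}_h$ can be indifferent between $\va_{h,\theta}$ and some $a'$ with, say, $\ul(x,a')=1$ and $\ul(x,\va_{h,\theta})=0$. Tie-breaking in favour of the leader makes it play $a'$, so $\ul(x)-\ulhat_h(x,\va_h)\ge\mu_\theta-K\epsilon_h$, which is not $O(K\epsilon_h)$. Continuity cannot rescue this, because $\ul$ jumps exactly there, and tie-breaking pushes the wrong way.

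What actually holds is asymmetric (Lemma~\ref{lem:prune_ul_inequalities}):
\begin{itemize}
\item[(a)] For \emph{every} $x\in\mathcal{Y}_h(\va_h)$, $\ul(x)\ge\ulhat_h(x,\va_h)-K\epsilon_h$. This holds because $\va_{h,\theta}$ is a best response, so tie-breaking gives $\ul(x,a^\star_\theta(x))\ge\ul(x,\va_{h,\theta})$, and the omitted types contribute a nonnegative amount.
\item[(b)] $\ul(x)\le\ulhat_h(x,\va_h)+4K\epsilon_h$ is guaranteed only at points where $a^\star_\theta(x)=\va_{h,\theta}$ for all $\theta\in\widetilde{\Theta}_h$.
\end{itemize}
You only use the upper direction at $x^\star$ with $\va_h=\va^\star|\widetilde{\Theta}_h$. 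There (b) applies by the standing assumption $a^\star_\theta(x^\star)=\va^\star_\theta$, so your chain survives once the lemma is restated this way.

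\textbf{The constant $14$.} Your write-up stops at $17K\epsilon_h$ and only asserts ``about $14$''; using (a) is what gets the stated constant. Keep $\underline{\OPT}_h\ge\OPT-(4+C_2)K\epsilon_h$. On $\mathcal{H}_h(\va)$ this gives $\ulhat_h(x,\va)\ge\OPT-(4+C_1+C_2)K\epsilon_h$. In the last step pay only $K\epsilon_h$ by (a), for a total of $(5+C_1+C_2)K\epsilon_h=14K\epsilon_h$.

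\textbf{Positive volume.} Applying (a) at the maximizer gives $\underline{\OPT}_h\le\OPT-(C_2-1)K\epsilon_h$. Hence $x^\star$ satisfies the cut with strict slack $(C_1+C_2-5)K\epsilon_h>0$; so $x^\star$ never lies on the face of the cut, and no case analysis is needed. This strict slack is what makes your ``full-volume neighbourhood'' remark rigorous. $\mathcal{H}_h(\va_h)$ contains a ball around $x^\star$, and that ball meets the positive-volume convex set $\mathcal{Y}_h(\va_h)\ni x^\star$ in a set of positive volume. So Line~\ref{line:prune_empty} does not discard the piece; the paper reaches the same conclusion via a segment to an interior point.

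\textbf{Membership of $x^\star$.} The paper does not derive $x^\star\in\mathcal{Y}_h(\va^\star|\widetilde{\Theta}_h)$ inside this lemma; it assumes it (Condition~\ref{cond:pruning}, item~5), so your induction is extra work. If you do want it, your invariant ``$x^\star\in\mathcal{X}_h(\va^\star|\widetilde{\Theta}_{h-1})$'' is too weak. Under it, $\mathcal{P}(\va^\star|\widetilde{\Theta}_h)\cap\mathcal{X}_h(\va^\star|\widetilde{\Theta}_{h-1})$ could be a zero-volume set through $x^\star$ and be dropped. Carry instead ``$\mathcal{X}_h(\va^\star|\widetilde{\Theta}_{h-1})\supseteq\mathcal{P}(\va^\star)\cap B$ for some ball $B$ centred at $x^\star$''. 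This set has positive volume, sits inside $\mathcal{P}(\va^\star|\widetilde{\Theta}_h)$, and is propagated by the strict slack of each cut.
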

To prove the first part of the lemma, we define: 
\begin{equation*}
        x^\circ, \va^\circ \in \argmax_{\substack{\va \in \Af(\widetilde{\Theta}_h) \\ x \in \mathcal{Y}_h(\va)}} \ulhat_h(x,\va).
\end{equation*}
Then, we can prove that the following holds:
\begin{align*}
    \ulhat_h(x^\star,\va^\star)&\ge \ul(x^\star) - C'K\epsilon_h\\
                     &\ge \ul(x^\circ) - C'K\epsilon_h\\
                     &\ge \ulhat_h(x^\circ, \va^\circ) - C''K\epsilon_h\\
                     &= \ulhat_h(x^\circ, \va^\circ) \pm C_2K\epsilon_h - C''K\epsilon_h \\
                     &= \underline{\OPT}_h - (C'' - C_2)K\epsilon_h.
\end{align*}
where $C', C'' > 0$ are two absolute constants defined in Appendix~\ref{appendix:prune}. Notice that the first and the third inequality above hold because of Lemma~\ref{lem:find_types}, while the second inequality holds because of the optimality of $x^\star$. Finally, by choosing $C_1 \ge C'' - C_2$, we have that an optimal commitment always satisfies Equation~\eqref{eq:h_a_def} and thus belongs to $\mathcal{X}_{h+1}$.

The proof of the second part of the lemma, \emph{i.e.}, the one showing that every commitment in $\mathcal{X}_{h+1}$ is $\mathcal{O}(K \epsilon_h)$-optimal, is provided in Appendix~\ref{appendix:prune} and follows an argument similar to the one presented above.

\subsection{Theoretical Guarantees}
Before presenting the regret guarantees of Algorithm~\ref{alg:main_algo} we introduce two useful lemmas.
The first lemma provides an upper bound on the number of facets defining the regions $\mathcal{X}_h(\va_{h-1})$ for each tuple $\va_{h-1} \in \widetilde{\Theta}_{h-1}$ and epoch $h$. Formally, we have:
\begin{restatable}{lemma}{NumberFacetsInformalLemma}[Informal version of Lemma~\ref{lem:bit_complexity_and_facets}]\label{lem:bit_complexity_and_facets_informal}
    During the execution of Algorithm~\ref{alg:main_algo}, every polytope $\mathcal{X}_h(\va_{h-1})$ has at most $N \le Kn +m +K$ facets as long as Algorithm~\ref{alg:find_types} and Algorithm~\ref{alg:partition} are always executed successfully.
\end{restatable}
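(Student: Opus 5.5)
We argue by induction on the epoch $h$ that every polytope $\mathcal{X}_h(\va_{h-1})$ can be written as an intersection of half-spaces drawn from a fixed finite family of size at most $Kn+m+K$. Counting facets then reduces to counting the distinct hyperplanes that can ever appear in this representation. The family consists of three kinds of constraints.
\begin{itemize}
\item The $m$ boundary constraints $x_i\ge 0$ defining $\Delta(\Al)$.
\item The separating half-spaces $\mathcal{H}^{jk}_\theta$ for the learned types.
\item At most one pruning half-space $\mathcal{H}_{h'}(\cdot)$ for each epoch at which a new type was added.
\end{itemize}
The base case is $\mathcal{X}_1=\mathcal{X}_0(\bot)=\Delta(\Al)$, which has at most $m$ facets.

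For the inductive step, consider $\va_h$ with $\va_{h-1}=\va_h|\widetilde{\Theta}_{h-1}$. By the correctness of Algorithm~\ref{alg:partition}, which holds under the success events of Lemma~\ref{lem:partition_informal}, we have $\mathcal{Y}_h(\va_h)=\mathcal{X}_h(\va_{h-1})\cap\bigcap_{\theta\in\widetilde{\Theta}_h\setminus\widetilde{\Theta}_{h-1}}\mathcal{P}_\theta(\va_{h,\theta})$. Each $\mathcal{P}_\theta(\va_{h,\theta})$ adds at most $n-1$ separating half-spaces $\mathcal{H}^{\va_{h,\theta}k}_\theta$. The boundary constraints are already present.

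Algorithm~\ref{alg:prune} then intersects with the single half-space $\mathcal{H}_h(\va_h)$. Naively this adds one half-space per epoch, and the number of epochs is unbounded in $K$, so this is the main obstacle. To handle it, we observe that $\mathcal{H}_h(\va_h)$ is defined through $\ulhat_h(x,\va_h)=\sum_{\theta\in\widetilde{\Theta}_h}\widehat{\mu}_{h,\theta}\ul(x,\va_{h,\theta})$, which is linear in $x$. We therefore track how the pruning constraints behave across epochs. The plan is to show that a pruning constraint from an earlier epoch $h'<h$ with the same $\widetilde{\Theta}$ becomes redundant, or can be discarded, once the newer one is imposed. Two facts would support this. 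First, $\underline{\OPT}_h$ increases as $\epsilon_h$ halves, since by Lemma~\ref{lem:find_types} and the first part of Lemma~\ref{lem:prune_one_informal} it is sandwiched between $\OPT-O(K\epsilon_h)$ and $\OPT$. Second, the true utility restricted to $\mathcal{P}(\va_h)$ is a fixed linear function up to $O(K\epsilon_h)$ error.

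If exact redundancy fails, which is plausible because $\widehat{\mu}_h$ changes between epochs, the fallback is to count facets differently. In that case we count only the facets of the final intersection. We would argue that the region where the new pruning constraint is active carries the meaningful cut, and that the older cut can be dropped from the representation used to define $\mathcal{X}_{h+1}(\va_h)$. This is a redefinition that still preserves the guarantees of Lemma~\ref{lem:prune_one_informal}. If that is not permitted either, the alternative is to charge one pruning facet to each change of $\widetilde{\Theta}$, which gives the $+K$ term, and to argue that between such changes the pruning hyperplanes are nested.

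Combining the three kinds of constraints, the count is as follows.
\begin{itemize}
\item At most $|\widetilde{\Theta}_h|(n-1)\le Kn$ separating facets, since each type contributes at most $n-1$ and each type is added to $\widetilde{\Theta}$ only once.
\item At most $m$ boundary facets.
\item At most $K$ pruning facets.
\end{itemize}
This gives $N\le Kn+m+K$. Since the number of facets of an intersection is at most the number of defining half-spaces, the bound holds for every $\mathcal{X}_h(\va_{h-1})$ under the success events. We expect the pruning-facet accounting to be the delicate step.
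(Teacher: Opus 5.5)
Your decomposition (at most $Kn$ separating half-spaces, $m$ boundary constraints, and roughly one surviving pruning cut per change of $\widetilde{\Theta}$) matches the paper's count. The gap is that the step carrying the lemma is never proved, and the justification you sketch for it does not work. That step is the exact nesting $\mathcal{H}_h(\va)\cap\Delta(\Al)\subseteq\mathcal{H}_{h-1}(\va)\cap\Delta(\Al)$ whenever $\widetilde{\Theta}_h=\widetilde{\Theta}_{h-1}$.

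Your first fact, that $\underline{\OPT}_h$ lies between $\OPT-O(K\epsilon_h)$ and $\OPT$, does not imply $\underline{\OPT}_h\ge\underline{\OPT}_{h-1}$. What is needed is an upper bound with a margin. Subtracting $C_2K\epsilon$ in the definition, together with $\ul(x)\ge\ulhat_{h-1}(x,\va)-K\epsilon_{h-1}$ on $\mathcal{Y}_{h-1}(\va)$, gives $\underline{\OPT}_{h-1}\le\OPT-(C_2-1)K\epsilon_{h-1}=\OPT-(2C_2-2)K\epsilon_h$. Against this one uses the lower bound $\underline{\OPT}_h\ge\OPT-(4+C_2)K\epsilon_h$. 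That bound holds because $x^\star$ lies in some $\mathcal{Y}_h(\va^\star)$ with the correct responses, where $\ul(x^\star)\le\ulhat_h(x^\star,\va^\star)+4K\epsilon_h$. Monotonicity then holds exactly when $C_2\ge 6$.

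You also need a second comparison, valid for every $x\in\Delta(\Al)$ and not only on the current regions. Both $\ulhat_h(x,\va)$ and $\ulhat_{h-1}(x,\va)$ are within $K\epsilon_h$ and $2K\epsilon_h$, respectively, of the same linear function $\sum_{\theta\in\widetilde{\Theta}}\mu_\theta\ul(x,\va_\theta)$. Hence $\ulhat_h(x,\va)+C_1K\epsilon_h\le\ulhat_{h-1}(x,\va)+C_1K\epsilon_{h-1}$ once $C_1\ge3$. Chaining the two, $x\in\mathcal{H}_h(\va)$ gives $\ulhat_{h-1}(x,\va)+C_1K\epsilon_{h-1}\ge\ulhat_h(x,\va)+C_1K\epsilon_h\ge\underline{\OPT}_h\ge\underline{\OPT}_{h-1}$. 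So your concern that redundancy fails because $\widehat{\mu}_h$ changes is unfounded. The constants $C_1=3$, $C_2=6$ in Algorithm~\ref{alg:prune} are chosen precisely so that halving the slack absorbs the change of estimator.

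Your fallbacks do not close this gap. Dropping older cuts by ``redefinition'' changes the sets the algorithm actually computes, which are then fed to Algorithm~\ref{alg:partition}. It therefore proves nothing about Algorithm~\ref{alg:main_algo} as specified. Your last fallback assumes the very nesting whose possible failure motivated the fallbacks, so the hedging is circular.

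Once nesting is established, the rest is an unrolling. If $\widetilde{\Theta}_h=\widetilde{\Theta}_{h-1}$, then $\mathcal{Y}_h(\va)=\mathcal{X}_h(\va)$, so $\mathcal{X}_h(\va)=\mathcal{Y}_{\bar h}(\va)\cap\mathcal{H}_{h-1}(\va)$ throughout the stretch beginning at the last change $\bar h$. At each change after epoch $1$, only the cut from the epoch immediately before it survives, and there are at most $K-1$ such changes. Adding the current cut gives at most $K$ pruning half-spaces, and $N\le Kn+m+K$ follows.
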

Notice that, thanks to the above lemma, the binomial term affecting the number of leader-follower interactions required to execute Algorithm~\ref{alg:partition} is polynomial in $n$ and $K$ once $m$ is fixed.
Indeed, we have: 
$$\binom{N+m}{m} \le (N+m)^m \le (K(n+1)+2m)^m.$$
The second lemma provides an upper bound on the largest epoch index $H$
executed by Algorithm~\ref{alg:main_algo} over $T$ rounds.
%

\begin{restatable}{lemma}{NumberEpochsLemma}
    \label{lem:number_epochs}
    The largest epoch index $H \in \mathbb{N}$ executed by Algorithm~\ref{alg:main_algo} satisfies $ H \le H' \coloneqq \log_4\left( 5T\right) $.
\end{restatable}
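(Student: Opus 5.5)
The argument only needs a lower bound on the number of rounds consumed by each epoch; no probabilistic event is involved. Every epoch $h$ begins with a call to $\FindTypes$ (Algorithm~\ref{alg:find_types}), which commits to a fixed strategy for exactly
$$T_{h,1} = \left\lceil \frac{1}{2\epsilon_h^2}\log\left(\frac{2K}{\delta_1}\right)\right\rceil \ge \frac{1}{2\epsilon_h^2}\log\left(\frac{2K}{\delta_1}\right)$$
rounds. Whatever $\Partition$ and $\Prune$ do afterwards only adds rounds, so this is a valid lower bound on the length of the epoch. Moreover, since $K\ge 1$ and $\delta_1<1$, we have $\log(2K/\delta_1)\ge \log 2$. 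So it suffices to work with the crude bound $T_{h,1}\ge \frac{c}{\epsilon_h^2}$ for an absolute constant $c$.

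Next we make $\epsilon_h$ explicit. From $\epsilon_1=\nicefrac{1}{K}\le 1$ and the halving at Line~\ref{line:epsilon_halved}, we get $\epsilon_h = \frac{1}{K}2^{-(h-1)}$. Hence $\frac{1}{\epsilon_h^2} = K^2 4^{h-1} \ge 4^{h-1}$. Suppose epoch $H$ is started, i.e., at least one round of epoch $H$ is executed. Then all epochs $1,\dots,H-1$ were completed. Summing the geometric series gives
$$T \;\ge\; \sum_{h=1}^{H-1} T_{h,1} \;\ge\; c\sum_{h=1}^{H-1}4^{h-1} \;=\; \frac{c}{3}\left(4^{H-1}-1\right).$$
Rearranging yields $4^{H}\le 4\left(\frac{3T}{c}+1\right)$, i.e., $H\le \log_4\!\big(12T/c+4\big)$.

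The only remaining work is to fix the constants so that this becomes exactly $\log_4(5T)$. This is the one slightly delicate step, and there are two levers. First, I would use the full size of $\log(2K/\delta_1)$ rather than just $\log 2$. Since $\delta_1=\frac{\delta}{2\lceil\log_4(5T)\rceil}\le \frac12$, we get $\log(2K/\delta_1)\ge\log 4$, so $c\ge \frac{\log 4}{2}$. Second, I would also count epoch $H$'s own first round, and I would not discard the $\lceil\cdot\rceil$, which gives $T_{h,1}\ge 1$ for the first epoch. If the constant still falls slightly short, I would sharpen the counting by including the ceiling in every term. Alternatively, I would note that $H$ being an integer lets us round down, so that $H\le\lfloor\log_4(\cdot)\rfloor$. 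This then fits under $\log_4(5T)$ for all $T\ge1$; small $T$, where $H=1$, is trivially fine because $\log_4 5>1$. Checking this final constant is where I expect any friction, but the proof is essentially the geometric-growth argument above.
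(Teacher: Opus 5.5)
Your geometric-series skeleton is the same as the paper's, but the proof breaks at the step you flag, and more than the constants is at stake. When you write $1/\epsilon_h^2=K^2 4^{h-1}\ge 4^{h-1}$ you discard the factor $K^2$, which is exactly what the paper's argument uses. Keeping it, together with $T_{h,1}\ge 1/\epsilon_h^2$ (valid since $2K/\delta_1\ge 8/\delta>e^2$), gives $T\ge K^2(4^{H-1}-1)/3$, i.e.\ $4^H\le 12T/K^2+4$. This is at most $5T$ once $K\ge 2$ and $T\ge 2$, and $T=1$ forces $H=1$. Without $K^2$ you are left with $4^H\le 12T/c+4$, where $c=\tfrac12\ln(2K/\delta_1)$ can be as small as about $1$ (e.g.\ $K=1$, $\delta$ close to $1$, moderate $T$). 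That is far below the $12/5$ you would need, so the gap is a constant multiplicative factor inside the logarithm. None of your levers closes it: the ceilings only matter in the first few epochs, the extra round of epoch $H$ adds just $1$, and integrality of $H$ cannot absorb a multiplicative factor.

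In fact no sharpening can work for a bound that does not use $K\ge 2$, because for $K=1$ the statement is false. Take $K=1$, $T=3$, $\delta\in[\tfrac12,1)$. Then $\lceil\log_4 15\rceil=2$, so $\delta_1=\delta/4$ and $T_{1,1}=\lceil\tfrac12\ln(8/\delta)\rceil=2$. Moreover $\overline{\Theta}_1=\varnothing$ because $\widehat{\mu}_{1,\theta}=1<2\epsilon_1=2$, so $\Partition$ and $\Prune$ consume no rounds. Epoch $2$ therefore starts at round $3$, giving $H=2>\log_4 15\approx 1.95$. This contradicts your claim that the bound fits under $\log_4(5T)$ for all $T\ge 1$. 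The missing ingredient is to keep the $K^2$ factor and assume $K\ge 2$; the paper's concluding step $\log_4(3T/K^2+1)+1\le\log_4(5T)$ tacitly needs this too. For arbitrary $K$, what your argument actually yields is the weaker $H\le\log_4(12T+4)$, using $c\ge 1$.
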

To prove Lemma~\ref{lem:number_epochs}, we observe the following. Let $T_h$ be the number of rounds required to execute epoch $h$. Then, the following holds:
\begin{equation}\label{eq:T_h_bound}
T = \sum_{h=1}^{H} T_h \ge \sum_{h=1}^{H-1} T_h \ge \sum_{h=1}^{H-1} \frac{1}{\epsilon_h^2},
\end{equation}
where the last inequality follows from the fact that $T_h$ is always larger than the number of rounds required to execute Algorithm~\ref{alg:find_types} for each $h \in [H-1]$. Using the definition $\epsilon_h := 2^{1-h}/K$ and the geometric series, rearranging Equation~\eqref{eq:T_h_bound} yields the statement of Lemma~\ref{lem:number_epochs}.
%
%
%
Then, we can prove that the following holds.
\begin{restatable}{theorem}{MainTheoremInformal}[Informal version of Theorem~\ref{th:main_formal}]\label{th:main_informal}
    With probability at least $1-\delta$, the regret of Algorithm~\ref{alg:main_algo} is:
    \begin{equation*}
        R_T \le \Obigt \left (K^2 \log\left(\frac{1}{\delta}\right) \sqrt{T} +\beta\log^2(T)\right),
    \end{equation*}
    where $\beta\coloneqq \operatorname{poly}(n,K, L, \log\left(\nicefrac{1}{\delta}\right))$ when $m$ is constant.
\end{restatable}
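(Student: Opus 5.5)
We first define a clean event $\Eclean$ on which, in every epoch $h \le H$, Algorithm~\ref{alg:find_types} satisfies the guarantees of Lemma~\ref{lem:find_types} and Algorithm~\ref{alg:partition} satisfies those of Lemma~\ref{lem:partition_informal}. Lemma~\ref{lem:number_epochs} gives $H \le \lceil \log_4(5T)\rceil$. With $\delta_1=\delta_2=\delta/(2\lceil\log_4(5T)\rceil)$, a union bound over epochs and the two procedures gives $\Pr(\Eclean)\ge 1-\delta$. The remaining argument is deterministic on $\Eclean$.

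On $\Eclean$, Lemma~\ref{lem:prune_one_informal} applies inductively. Each $\mathcal{X}_{h+1}$ contains an optimal commitment, so it is nonempty and all procedures are well defined. Every $x\in\mathcal{X}_{h}$ with $h\ge 2$ satisfies $\OPT-\ul(x)\le 14K\epsilon_{h-1}=28K\epsilon_h$. For $h=1$ we use the trivial bound $1 = K\epsilon_1$. Hence every round of epoch $h$ incurs instantaneous regret at most $28K\epsilon_h$, whichever procedure is running, since all commitments are chosen inside $\mathcal{X}_h$. Denote by $T_{h,1}$ the rounds of Algorithm~\ref{alg:find_types} and by $T_{h,2}$ the rounds of Algorithm~\ref{alg:partition}, so $T_h = T_{h,1}+T_{h,2}$.

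The regret then splits as $R_T\le \sum_{h\le H} 28K\epsilon_h (T_{h,1}+T_{h,2})$, plus $\delta T$ if an expectation-type bound is needed. For the first term, $T_{h,1}=\mathcal{O}(\epsilon_h^{-2}\log(K/\delta_1))$, so each summand is $\mathcal{O}(K\epsilon_h^{-1}\log(K/\delta_1))$. Since $\epsilon_h$ halves each epoch, this geometric sum is dominated by its last term, $\mathcal{O}(K\epsilon_H^{-1}\log(K/\delta_1))$. From the proof of Lemma~\ref{lem:number_epochs}, $\sum_{h<H}\epsilon_h^{-2}\le T$ gives $\epsilon_{H-1}^{-1}\le\sqrt{T}$, hence $\epsilon_H^{-1}\le 2\sqrt{T}$. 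This yields $\Obigt(K\log(K/\delta)\sqrt T)$, within the claimed $K^2\log(1/\delta)\sqrt T$ once the $\log\log T$ factor from $\delta_1$ is absorbed.

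For the second term, Lemma~\ref{lem:partition_informal} gives $T_{h,2}=\Obigt(\epsilon_h^{-1}\beta_0)$ with $\beta_0 = K^{m+1}n^{2m+2}(m^7L\log^2(1/\delta_2)+\binom{N+n}{m})$. Lemma~\ref{lem:bit_complexity_and_facets_informal} supplies $N\le Kn+m+K$, so $\binom{N+n}{m}$ is polynomial for constant $m$. The key cancellation is that $K\epsilon_h\cdot T_{h,2}=\Obigt(K\beta_0)$: the $1/\epsilon_h$ oracle cost is exactly offset by the $\epsilon_h$-suboptimality. Summing over $H=\mathcal{O}(\log T)$ epochs, with $\log^2(1/\delta_2)$ contributing $\log^2\log T$ factors that the tilde absorbs, gives $\beta\log^2 T$ with $\beta=\operatorname{poly}(n,K,L,\log(1/\delta))$.

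The main obstacle is to make sure that Lemma~\ref{lem:partition_informal} is applicable in every epoch under $\Eclean$. This requires $\mu_\theta\ge\epsilon_h$ for all $\theta\in\widetilde\Theta_h$, including types carried over from earlier epochs, and this holds because $\epsilon_h$ is decreasing. It also requires the facet bound $N$, which in turn relies on Prune adding only one half-space per profile. A smaller issue is that the last epoch $H$ may be truncated, but its regret is still at most $28K\epsilon_H T_H$ and is covered by the same bounds.
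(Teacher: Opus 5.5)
Your proposal is correct and follows essentially the paper's argument: a clean event via the $\delta_1,\delta_2$ union bound, per-round suboptimality $28K\epsilon_h$ from Lemma~\ref{lem:prune_one_informal}, and each epoch's cost split into the $\mathcal{O}(\epsilon_h^{-2})$ \FindTypes{} part and the $\mathcal{O}(\epsilon_h^{-1})$ \Partition{} part. This matches the main-text sketch, and your direct bound $\epsilon_H^{-1}\le 2\sqrt{T}$ (from $\sum_{h<H}\epsilon_h^{-2}\le T$) replaces the appendix's split at $h^\diamond\approx\frac{1}{2}\log T$ and even sharpens the leading term to $K\sqrt{T}$.

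One correction to your closing remark. The $T$-independent facet bound $N\le Kn+m+K$ does not follow merely from \Prune{} adding one half-space per profile, since by itself that would allow up to $H=\mathcal{O}(\log T)$ accumulated pruning half-spaces per polytope. It follows from the nesting $\mathcal{H}_h(\va)\cap\Delta(\Al)\subseteq\mathcal{H}_{h-1}(\va)\cap\Delta(\Al)$ of Lemma~\ref{lem:prune_hyperplanes} whenever $\widetilde{\Theta}_h=\widetilde{\Theta}_{h-1}$, which is what $C_1=3$ and $C_2=6$ are chosen for. Likewise, the formal Lemma~\ref{lem:partition} needs the bit complexity $\mathcal{O}(L+\log T+\log K+B_\delta)$ of the pruning hyperplanes (Lemma~\ref{lem:bit_complexity_and_facets}) rather than $L$, and your spare $\log T$ factor absorbs this.
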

We now provide a proof sketch of the above theorem. 
For the sake of presentation, in the following we omit the dependence on logarithmic terms and we assume $m$ to be a constant.
A complete proof, including the omitted technical details, is
deferred to the appendix.
We observe that the number of rounds $T_h$
required to execute a generic epoch $h$ of Algorithm~\ref{alg:main_algo} can be bounded
as follows:
\begin{equation}\label{eq:rounds_epoch_main}
T_h \le \mathcal{O} \Bigg ( \,\,
\underbrace{ \,\,\,\ \frac{1}{\epsilon_h^2} \,\,\,\,\ }_{\text{Algorithm~\ref{alg:find_types}}} + \,
\underbrace{\frac{1}{\epsilon_h}\cdot \operatorname{poly}\left(n,K,L \right)}_{\text{Algorithm~\ref{alg:partition}}} \Bigg),
\end{equation}
according to Lemma~\ref{lem:find_types} and Lemma~\ref{lem:partition_informal}, together with Lemma~\ref{lem:bit_complexity_and_facets_informal}, and the fact that Algorithm~\ref{alg:prune} does not require any leader-follower interactions.
Therefore, the regret suffered by Algorithm~\ref{alg:main_algo} can be upper bounded as follows:
\begin{align*}
    R_T &\le \mathcal{O}\left( \sum_{h=1}^{H'} T_h
    K \epsilon_{h-1 }\right) \\
    &\le \mathcal{O}\left( \sum_{h=1}^{H'}
    \left(\frac{K}{\epsilon_h} +\operatorname{poly}\left(n,K,L \right)\right) \right) \\
    &\le \mathcal{O}\left( K^2 \sqrt{T} + \operatorname{poly}\left(n,K,L \right)\log(T) \right).
\end{align*}
The first inequality follows from Lemma~\ref{lem:prune_one_informal}, which ensures that each commitment selected at epoch $h$ is $\mathcal{O}(K\epsilon_{h-1})$-optimal, and Lemma~\ref{lem:number_epochs}. 
The second inequality follows employing Equation~\eqref{eq:rounds_epoch_main} and observing that $\epsilon_{h-1}=2\epsilon_{h}$.
Finally, the third inequality follows by setting the precision parameter as $\epsilon_h = 2^{1-h}/K$ and employing the geometric series together with Lemma~\ref{lem:number_epochs}.

\paragraph{On The Tightness of Theorem~\ref{th:main_informal}} Thanks to Theorem~\ref{th:main_informal}, Algorithm~\ref{alg:main_algo} achieves a $\widetilde{\mathcal{O}}(\sqrt{T})$ regret upper bound when the number of the leader’s actions $m$ is a fixed constant. \citet{personnat2025learning} show that, even when the leader knows the payoff function of each type, our problem admits an $\Omega(\sqrt{T})$ lower bound on the regret achievable by any algorithm. 
Conversely, when $m$ is \emph{not} fixed, Theorem~\ref{th:main_informal} exhibits an exponential dependence on $m$ in the regret suffered by Algorithm~1. However, as shown by~\citet{Peng2019}, even in the case of a single follower type, such an exponential dependence is unavoidable.

\clearpage
\bibliographystyle{plainnat}
\bibliography{references}
\newpage
\appendix
\section{Proofs Omitted from Section~\ref{sec:action_feedback}}

\LowerBoundLemma*
\begin{proof}
    \textbf{Building the instances }
    We consider a set of instances $\mathcal{I}$ in which the leader’s action set is $\Al \coloneqq \{a_1,a_2,a_3\}$ and the follower’s action set is $\Af \coloneqq \{a_1,a_2,a_3,a^\star\}$, and the leader’s utility function is defined as follows.
    \begin{equation*}
        \ul(a_i,a_j) = 
        \begin{cases}
            1 & a_j = a^\star, \\
            0 & a_j \neq a^\star.
        \end{cases}
    \end{equation*}
    The set of types is $\Theta \coloneqq \{\theta_1,\theta_2,\theta_3\}$, with uniform prior distribution $\mu$.
    %
    %
    Each instance $\text{I} \in \mathcal{I}$ is parametrized by a subsimplex $\mathcal{S}^{\text{I}} \subseteq \Delta(\Af)$, corresponding to the set of optimal leader strategies in that specific instance (see Figure~\ref{fig:lower_bound}).  
    %
    %
    %
    The subsimplex $$\mathcal{S}^{\text{I}} \coloneqq \Delta(\Al) \cap \{\mathcal{H}^{\text{I}}_{j} \}_{j \in [m]}$$ is defined as the intersection of $\Delta(\Al)$ with $m$ half-spaces $\mathcal{H}^{\text{I}}_j$, for $j \in [m]$, defined as follows:
    \begin{equation}
        \mathcal{H}^{\text{I}}_{j} = \{x \in \mathbb{R}^m \mid \sum_{i \in [m]} w^{\text{I}}_{j,i}  x_i \ge 0\},
    \end{equation}
    where $w^{\text{I}}_{j,i} \in [-1/2,1/2]$ has bit complexity bounded by $CB$, for some $B \in \mathbb{N}$ and a universal constant $C>0$.
    We refer the reader to the next paragraph for the formal definition of these subsimplices.

    Given an instance $\text{I} \in \mathcal{I}$ and a corresponding subsimplex $\mathcal{S}^{\text{I}}$, we show how to define the follower's utility function so that every commitment $x \in \mathcal{S}^{\text{I}}$ is optimal, while all other commitments are suboptimal.
    To do so, we need to ensure that $\mathcal{P}_\theta(a^\star) = \mathcal{S}^{\text{I}}$ for every type $\theta \in \Theta$.
    In this way, the leader achieves utility $\ul(x)=1$ when $x \in \mathcal{S}^{\text{I}}$, and utility $\ul(x)=0$ when $x \notin \mathcal{S}^{\text{I}}$.

    We start by considering the first type $ \theta_1 \in \Theta$, and devise the utility function $u^{\text{F},\text{I}}_{\theta_1} : \Al \times \Af \rightarrow [0,1]$ in such a way that $\mathcal{P}_{\theta_1}(a^\star) = \mathcal{S}^{\text{I}}$. In order to do so, the follower's utilities must satisfy the following:
    \begin{equation}
    \label{eqsystem:lower_bound_utilities}
    \left\{
        \begin{array}{l}
        \vspace{1mm}
        u^{\text{F},\text{I}}_{\theta_1}(a_i,a^\star) -u^{\text{F},\text{I}}_{\theta_1}(a_i,a_j) =  w_{j,i}^{\text{I}} \quad \quad \forall a_j \in \Al \setminus \{a^\star\},\\
        0 \le u^{\text{F},\text{I}}_{\theta_1}(a_i,a_j) \le 1 \quad \quad \forall a_j \in \Af, a_i \in \Al.
        \end{array}
    \right.
    \end{equation}
    Then, the follower's utility function for type $\theta_1 \in \Theta$ in instance $I \in \mathcal{I}$ is defined as follows:
    \begin{equation*}
    \left\{
        \begin{array}{l}
        \vspace{2mm}
        u^{\text{F},\text{I}}_{\theta_1}(a_i,a^\star) =\frac{1}{2} \quad \quad \forall a_i \in \Al\\
        u^{\text{F},\text{I}}_{\theta_1}(a_i,a_j) = \frac{1}{2} -w^{\text{I
    }}_{i,j} \quad \quad \forall a_i \in \Al, \, a_j \in \Af \setminus \{a^\star\}.
        \end{array}
    \right.
    \end{equation*}
    With a simple calculation, it is easy to verify that the above definition of the follower's utility satisfies Equation~\eqref{eqsystem:lower_bound_utilities}. We complete the instance by defining the utility functions for types $\{\theta_2,\theta_3\}$.
    In particular, for $k \in \{2,3\}$ and $a_i \in \Al$, we let:
    \begin{equation*}
    \left\{
        \begin{array}{l}
        \vspace{2mm}
        u^{\text{F},\text{I}}_{\theta_k}(a_i,a^\star) = u^{\text{F},\text{I}}_{\theta_1}(a_i,a^\star), \\
        \vspace{2mm}
        u^{\text{F},\text{I}}_{\theta_k}(a_i,a_j) = u^{\text{F},\text{I}}_{\theta_1}(a_i,a_{f(j,k)}) \quad \forall a_j \in \Af \setminus \{a^\star\},
        \end{array}
    \right.
    \end{equation*}
    where $f(j,k)\coloneqq 1 + ((j+k+1) \mod 3)$.
    
    It easy to see that $\mathcal{P}_{\theta_k}(a^\star) = \mathcal{S}^{\text{I}}$ for every $k \in [3]$. Therefore, $a^\star_\theta(x) = a^\star$ for every $x \in \mathcal{S}^{\text{I}}$ and every $\theta \in \Theta$.
    Instead, for every $x \notin \mathcal{S}^{\text{I}}$ we have that $a^\star_{\theta}(x) \neq a^\star$ for every $\theta \in \Theta$.
    Thus, if the leader plays $x \in \mathcal{S}^{\text{I}}$, they observe action $a^\star$ with probability one, otherwise they observe an action $a$ drawn uniformly at random from $\{a_1,a_2,a_3\}$ (as $\mu$ is uniform over the types).
    Finally, we observe that the bit complexity of the follower's utility is bounded by $C'B$ for some constant $C'>C$.

    \textbf{Building the optimal regions } Let $\epsilon = 2^{-B}$ for some $B>0$ defined in the following and let $\mathcal T_\varepsilon$ be the regular
    triangulation of the simplex into subsimplices of side $\epsilon>0$
    (see Figure~\ref{fig:lower_bound}).
    Each instance $\text{I} \in \mathcal I$ is associated with a subsimplex
    $S^\text{I} \in \mathcal T_\varepsilon$.
    %
    \begin{figure}[ht]
        \label{fig:lower_bound}
        \centering
        \resizebox{0.3\linewidth}{!}{\input{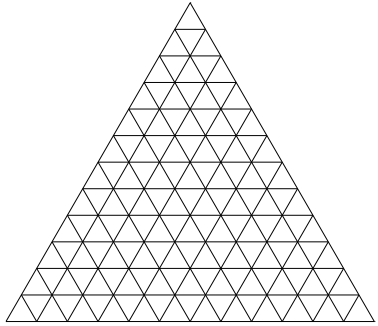}}
        \caption{Each subsimplex represented in the figure is associated with a single instance of the lower bound.}
    \end{figure}

    Given an instance $\text{I}$, the subsimplex $S^\text{I}$ has boundaries given by the
    hyperplanes defining the half-spaces $\{\mathcal H_j^\text{I}\}_{j \in [m]}$,
    which we denote by $\{H_j^\text{I}\}_{j \in [m]}$.
    Each $H_j^\text{I}$ passes through the origin and two points on the boundary of the simplex,
    which for ease of exposition we take to be
    $(k\epsilon,1-k\epsilon,0)$ and $(k\epsilon,0,1-k\epsilon)$ for some
    $k \in \{0,1,\dots,2^B\}$.
    The coefficients $w_j^{\text{I}}$ of its algebraic representation satisfy the following:
    \begin{align*}
        k\epsilon w_{j,1}^{\text{I}} + (1-k\epsilon)w_{j,2}^{\text{I}} + 0w_3 &= 0\\
        k\epsilon w_{j,1}^{\text{I}} + 0w_{j,3}^{\text{I}} + (1-k\epsilon)w_{j,3}^{\text{I}} &= 0\\
        -1/2 \le w_{j,i}^{\text{I}} \le 1/2 &\quad \forall i \in [m].
    \end{align*}
    By an argument similar to the one provided in Lemma~D.2 by~\cite{bacchiocchi2025sample}, the bit complexity of each $w_{j,i}^{\text{I}}$ is at most $CB$ for some constant $C>0$.Thus, the bit complexity of the follower's utility is bounded by $C'B$ for some $C'>C$.
    
    \textbf{Lower bound on the regret}
    We observe that the number of instances satisfies $|\mathcal{T}_\epsilon| = \Theta(2^{2B})$. In each instance $\text{I} \in \mathcal{I}$, the leader obtains utility equal to one and observes action $a^\star$ if they play $x \in \mathcal{S}^{\text{I}}$ (which is an optimal commitment). Otherwise, they collect zero utility and observe an action sampled uniformly at random from $\{a_1,a_2,a_3\}$.
    The behavior of an optimal deterministic algorithm is to play commitments belonging to the vertices of $\mathcal{T}_{\epsilon}$ according to some fixed order, and whenever it observes the optimal action $a^\star$, it starts playing it, since any commitment outside the optimal region does not provide any useful information to any algorithm. 
    The reason why an optimal algorithm should select vertices of the subsimplices is that, if it picks a vertex and does not observe action $a^\star$, it can exclude (at most) six instances as non-optimal, since the same vertex is a vertex of (at most) six subsimplices and the follower breaks ties in favor of the leader. 

    Thus, given $T$ rounds, any deterministic algorithm can check at most $6T$ instances to determine whether they are optimal or not. Let
    $T = \frac{(1 - 3/4)}{6} \lvert \mathcal{T}_\epsilon \rvert$
    be the number of rounds. Then, the probability that the algorithm never selects an optimal commitment over $T$ rounds is at least
    \begin{equation*}
    1 - \frac{6T}{\lvert \mathcal{T}_\epsilon \rvert}
    = 1 - \frac{(1 - 3/4)\lvert \mathcal{T}_\epsilon \rvert}{\lvert \mathcal{T}_\epsilon \rvert}
    = \frac{3}{4}.
    \end{equation*}
    Therefore, the regret suffered by any deterministic algorithm is $\Theta(\lvert \mathcal{T}_\epsilon \rvert)$.
    By Yao’s minimax principle, this implies that for any (possibly) randomized algorithm there exists an instance on which it suffers $\Theta(\lvert \mathcal{T}_\epsilon \rvert)$ regret. To conclude the proof, we set $B \coloneqq \frac{L}{C'}$, so that $\Theta(\lvert \mathcal{T}_\epsilon \rvert) = \Theta(2^{2B}) = \Theta(2^{2L})$ and $T = \Theta(2^{2L})$.
    \qedhere

\end{proof}

\section{Proofs omitted from Section~\ref{sec:find_types}}

\FindTypesLemma*
\begin{proof}
    Algorithm~\ref{alg:find_types} computes the empirical estimator $\widehat{\mu}_h$ of $\mu$ using 
    \begin{equation}\label{eq:t_1}
        T_1 \coloneqq \left\lceil \frac{1}{2\epsilon_h^2} \log \left( \frac{2K}{\delta_1} \right) \right\rceil
    \end{equation} 
    samples.
    Subsequently it computes the set $\widetilde{\Theta} \coloneqq \{\theta \in \Theta \mid \widehat{\mu}_{h,\theta} \ge 2\epsilon_h \}$.

    Applying a union bound and Hoeffding bound we get that $\|\widehat{\mu}_{h,\theta} - \mu \|_\infty \le \epsilon_h$ with probability at least $1- \delta_1$.
    Therefore, with probability at least $1-\delta_1$, when $\widehat{\mu}_{h,\theta} \ge 2\epsilon$ the true probability $\mu_{\theta}$ satisfies $\mu_\theta \ge \epsilon_h$,
    while when $\widehat{\mu}_{h,\theta} \le 2\epsilon_h $, we have $\mu_\theta \le 3\epsilon_h$.
\end{proof}

\section{Details and proofs omitted from from Section~\ref{sec:partition}}
\label{app:partition}
\subsection{Details and formal theorems of Section~\ref{sec:partition}}
Lemma~\ref{lem:stackelberg_informal} and Lemma~\ref{lem:partition_informal} in Section~\ref{sec:partition} omit some details for the sake of exposition.
In this appendix we first describe these details and state the full formal lemmas, then we will provide their proofs.
Specifically, these details concern (1) the exact shape of the polytopes $\mathcal{P}_{\theta}(a \mid \mathcal{X}_h(\va_{h-1}))$ and $\mathcal{Y}_h(\va_h)$ computed by Algorithm~\ref{alg:partition}, (2) the exact dependence on the bit-complexity, and (3) the exact conditions that must be fulfilled to correctly execute the algorithm.

In Section~\ref{sec:find_types}, we defined the polytopes $\mathcal{P}_{\theta}(a\mid\mathcal{X}_h(\va_{h-1}))$ according to Equation~\eqref{eq:P_region} and $\mathcal{Y}_h(\va_h)$ according to Equation~\eqref{eq:partition_result_informal}.
However, Algorithm~\ref{alg:partition} does not compute these polytope \emph{exactly}.
Instead, whenever Equation~\eqref{eq:P_region} or Equation~\eqref{eq:partition_result_informal} defines a non-empty polytope with null volume, Algorithm~\ref{alg:partition} considers it to be empty.
Hence, it computes the empty set instead of a lower-dimensional polytope.
In the following, with an abuse of notation we redefine these regions so that they cannot be lower dimensional polytopes.
Specifically, given a polytope $\mathcal{S} \subseteq \Delta(\Al)$, an action $a \in \Af$ and a type $\theta \in \Theta$, we let: 
\begin{equation}
    \label{eq:stackelberg_result}
    \mathcal{P}_{\theta}(a \mid S) \coloneqq 
    \begin{cases}
        \mathcal{P}_{\theta}(a) \cap S & \text{if } \volume(\mathcal{P}_{\theta}(a) \cap S) > 0 \\
        \emptyset & \text{otherwise}.
    \end{cases}
\end{equation}
The procedure by~\cite{bacchiocchi2025sample} computes polytopes according to Equation~\eqref{eq:stackelberg_result} rather than Equation~\eqref{eq:P_region}.
Similarly, we define:
\begin{equation}
    \label{eq:partition_result}
    \mathcal{Y}_h(\va) \coloneqq 
    \begin{cases}
        \mathcal{P}(\va) \cap \mathcal{X}_h(\va_{h-1}) & \text{if } \volume(\mathcal{P}(\va) \cap \mathcal{X}_h(\va_{h-1}))>0 \\
        \emptyset & \text{otherwise},
    \end{cases}
\end{equation}
for every $\va_h \in \widetilde{\Theta}_h$, with $\va_{h-1}= \va_h |{\widetilde{\Theta}_{h-1}}$.
Algorithm~\ref{alg:partition} computes $\mathcal{Y}_h$ according to the equation provided above rather than Equation~\ref{eq:partition_result_informal}.

The second detail that we need to clarify regards the relationship between the bit-complexity of the quantities managed by Algorithm~\ref{alg:partition} and the number of samples it requires.
Specifically, the sample-complexity of Algorithm~\ref{alg:partition} is linear in the bit-complexity $L$ of the follower's payoffs. 
Moreover, it also depend on the number of bits required to encode the coefficients of the hyperplanes defining the region $\mathcal{X}_h(\va_{h-1})$, $\va_{h-1} \in \widetilde{\Theta}_h$.
In the formal versions of Lemma~\ref{lem:stackelberg_informal} and Lemma~\ref{lem:partition_informal}, we will let $B > L$ be an upper bound the number of bits required to encode each of these coefficients.
According to this definition, the number of rounds required by Algorithm~\ref{alg:partition} scales linearly in $B$.

The last detail that we need is the condition that he parameters in input to Algorithm~\ref{alg:partition} must satisfy, which we formalize as follows.
\begin{condition}
    \label{cond:partion}
    Algorithm~\ref{alg:partition} is called with the following inputs: parameters $\epsilon_h, \delta_2 \in (0,1)$, sets of types $\widetilde{\Theta}_{h-1} \subseteq \widetilde{\Theta}_h \subseteq \Theta$ such that $\mu_\theta \ge \epsilon_h$ for every $\theta \in \widetilde{\Theta}_h \setminus \widetilde{\Theta}_{h-1}$, search space $$\mathcal{X}_h = \bigcup_{\va_{h-1} \in \Af(\widetilde{\Theta}_{h-1})} \mathcal{X}_h(\va_{h-1}),$$ where each $\mathcal{X}_h(\va_{h-1}) \subseteq \mathcal{P}(\va_{h-1})$ is empty or a non-zero-volume polytope.
\end{condition}
Intuitively, Condition~\ref{cond:partion} requires that Algorithm~\ref{alg:find_types} and Algorithm~\ref{alg:partition} have been correctly executed every previous epoch.
In Appendix~\ref{app:regret} we will show that this condition is always verified with high probability.

Finally, we can state the formal versions of Lemma~\ref{lem:stackelberg_informal} and Lemma~\ref{lem:partition_informal}.
Specifically, the algorithm by~\cite{bacchiocchi2025sample} provides the following guarantees.
\begin{restatable}{lemma}{}[Restate \cite{bacchiocchi2025sample}, formal version of Lemma~\ref{lem:stackelberg_informal}] 
    \label{lem:stackelberg}
    Let $S \subseteq \Delta(\Al)$ be a polytope with $\volume(S) >0$ defined as the intersections of $N$ hyperplanes whose coefficients have bit complexity bounded by some $B \ge L$.
    Then, given any $\zeta \in (0,1)$ and $\theta \in \Theta$, under the event that each query terminates in a finite number rounds, there exists an algorithm that computes the polytopes $\mathcal{P}_\theta(a \mid S)$ according to Equation~\eqref{eq:stackelberg_result} for each $a \in \Af$ by using at most
    \begin{equation*}
        \Obigt \left( n^2 \left(m^7 B \log\left( \frac{1}{\zeta}\right) + \binom{N+n}{m} \right) \right)
    \end{equation*}
    queries with probability at least $1-\zeta$.
\end{restatable}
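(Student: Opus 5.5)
This lemma restates the main guarantee of \citet{bacchiocchi2025sample}, so the plan is to reduce to their single-type procedure. We need to check two things: the procedure still works when the simplex $\Delta(\Al)$ is replaced by a full-dimensional polytope $S$, and the sample bound picks up only the extra factor $\binom{N+n}{m}$ and the bit-complexity $B$. The reduction is direct. Fix $\theta$ and simulate their best-response oracle by repeatedly committing to a queried $x \in S$ until type $\theta$ shows up; under type feedback we then observe $a^\star_\theta(x)$. The lemma is conditioned on every query terminating, so each simulated query returns exactly what the oracle in \citet{bacchiocchi2025sample} would return. It therefore suffices to bound the number of oracle queries, not rounds.

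Next I recall the structure of their algorithm and adapt each phase to $S$. \emph{Phase 1:} find an interior point of some nonempty-volume best-response region. Their version samples from the simplex; here I would sample from $S$, or start from an interior point of $S$ computed by linear programming on its $N$ defining inequalities. \emph{Phase 2:} for a region $\mathcal{P}_\theta(a)\cap S$ with an interior point, learn each separating hyperplane $\mathcal{H}^{jk}_\theta$ by binary search along segments. The coefficients have bit-complexity $L \le B$, and every vertex of $S$ and of the candidate regions is a solution of a linear system whose coefficients have bit-complexity $O(B)$. Standard Cramer's-rule bounds then give $\operatorname{poly}(m)B$ bits for the points we work with, which yields the $m^7 B$ factor per hyperplane, and we pay $n^2$ such searches in total. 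The $\log(1/\zeta)$ factor comes from the randomized point sampling that places the search points in general position. \emph{Phase 3:} verify that the current region is fully learned by querying its vertices. The regions are now cut out by at most $n-1$ separating hyperplanes plus the $N$ facets of $S$ (instead of the $m$ simplex facets), so each region has at most $\binom{N+n}{m}$ vertices. This is exactly where $N$ enters the bound.

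To get the output form of Equation~\eqref{eq:stackelberg_result}, I would use the tie-breaking and volume checks from their procedure. Any region found to have null volume inside $S$ is discarded and returned as $\emptyset$. The union of the returned full-dimensional regions covers $S$, and termination happens once every vertex of the discovered regions is consistent with the observed best responses. Summing Phase 2 and Phase 3 over the at most $n$ actions gives the stated bound. A union bound over the randomized steps, each run with confidence parameter scaled down polynomially, gives success probability $1-\zeta$.

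The main obstacle is Phase 2. The bit-complexity argument in \citet{bacchiocchi2025sample} relies on the simplex facets having trivial coefficients. Here the facets of $S$ carry $B$-bit coefficients, so I must re-derive the bound on the precision needed to separate distinct vertices and hyperplanes inside $S$. My first attempt would be the Hadamard bound on the determinants of $m\times m$ systems that mix facets of $S$ with follower indifference hyperplanes, which should give denominators of size $2^{O(mB\log m)}$. Plugged into their binary-search analysis, this replaces $L$ by $B$ and leaves the polynomial dependence on $m$ unchanged.
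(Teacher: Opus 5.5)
Your proposal is correct and follows the same route as the paper. The paper gives no independent proof: it invokes the single-type procedure of \citet{bacchiocchi2025sample}, implements the best-response oracle by recommitting to $x$ until type $\theta$ is drawn, notes that this procedure returns regions in the form of Equation~\eqref{eq:stackelberg_result}, and asserts that the adaptation from $\Delta(\Al)$ to a general polytope is routine (citing \citet{bacchiocchi2024online}), with the query count growing linearly in $B$ and through $\binom{N+n}{m}$. That is exactly your accounting, and your bit-complexity re-derivation goes further than the paper does. One detail to drop: do not use an LP-computed interior point of $S$ on its own for Phase~1, since it can lie on an indifference hyperplane of type $\theta$ and need not be interior to any best-response region; keep the random-sampling option.
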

We will say that an execution of $\Stackelberg$ is \emph{successful} when it computes the polytopes $\mathcal{P}_\theta(a|S)$ according to Equation~\eqref{eq:stackelberg_result} in the number of rounds specified in the lemma above.
The formal version of Lemma~\ref{lem:partition_informal} is the following.
\begin{restatable}{lemma}{PartitionLemma}[Formal version of Lemma~\ref{lem:partition_informal}]
    \label{lem:partition}
    Suppose that Condition~\ref{cond:partion} is satisfied and each polytope $\mathcal{X}_h(\va)$ composing $\mathcal{X}_h$ is defined as the intersections of $N$ hyperplanes whose coefficients have bit complexity bounded by some $B \ge L$.
    Then Algorithm~\ref{alg:partition} computes $\mathcal{Y}_h$ according to Equation~\eqref{eq:partition_result} with probability at least $1-\delta_2$, using at most:
    \begin{equation*}
        \Obigt \left( \frac{1}{\epsilon_h} K^{m+1} n^{2m+2} \left(m^7 B \log^2\left(\frac{1}{\delta_2}\right) + \binom{N+n}{m} \right) \right)
    \end{equation*}
    rounds.
\end{restatable}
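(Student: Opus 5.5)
The argument has two parts, correctness and round complexity, both conditioned on a good event that we construct by a union bound. The first macro block of Algorithm~\ref{alg:partition} uses no rounds, and by Condition~\ref{cond:partion} its output is exact. Since $\mathcal{X}_h(\va_{h-1}) \subseteq \mathcal{P}(\va_{h-1}) \subseteq \mathcal{P}_\theta(\va_{h-1,\theta})$ for $\theta \in \widetilde{\Theta}_{h-1}$, we get two facts. First, $\mathcal{P}_\theta(\va_{h-1,\theta}\mid \mathcal{X}_h(\va_{h-1})) = \mathcal{X}_h(\va_{h-1})$, which has positive volume when non-empty. Second, for $a'\neq \va_{h-1,\theta}$ the set $\mathcal{P}_\theta(a')\cap\mathcal{X}_h(\va_{h-1})$ lies in a separating hyperplane of $\mathcal{P}_\theta(\va_{h-1,\theta})$ and therefore has zero volume. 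So Lines~\ref{line:partition_p_simple}--\ref{line:partition_p_empty} agree with Equation~\eqref{eq:stackelberg_result}.

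For the second block, we invoke Lemma~\ref{lem:stackelberg} once per pair $(\theta,\va_{h-1})$ with $\mathcal{X}_h(\va_{h-1})\neq\varnothing$. Empty regions cost nothing. By Condition~\ref{cond:partion} each $\mathcal{X}_h(\va_{h-1}) \subseteq \mathcal{P}(\va_{h-1})$, and by \citep[Lemma~3.2]{personnat2025learning} at most $K^m n^{2m}$ of these regions are non-empty. Hence the number of calls is $M \le K^{m+1} n^{2m}$.

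For the third block we must check that intersecting the sets in Equation~\eqref{eq:stackelberg_result} and discarding zero-volume results gives Equation~\eqref{eq:partition_result}. If $\mathcal{P}(\va_h)\cap\mathcal{X}_h(\va_{h-1})$ has positive volume, then each factor $\mathcal{P}_\theta(\va_{h,\theta})\cap \mathcal{X}_h(\va_{h-1})$ contains it and so has positive volume. None of the factors was replaced by $\varnothing$, and the intersection is exact. Otherwise, replacing some factors by $\varnothing$ can only shrink the intersection, which still has zero volume and is set to $\varnothing$ at Line~\ref{line:partition_result_empty}.

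Next we quantify the oracle. Every $\theta$ passed to $\Stackelberg$ lies in $\widetilde{\Theta}_h\setminus\widetilde{\Theta}_{h-1}$, so $\mu_\theta\ge\epsilon_h$. A query that repeats $x$ for $\lceil \frac{1}{\epsilon_h}\log\frac{1}{\eta}\rceil$ rounds fails to see $\theta$ with probability at most $(1-\epsilon_h)^{\cdot}\le \eta$. The per-call query count is
\[
Q = \Obigt\bigl(n^2(m^7 B\log(1/\zeta) + \tbinom{N+n}{m})\bigr),
\]
so the total number of queries is at most $MQ$. We set $\eta = \delta_2/(2MQ)$, so that with probability $\ge 1-\delta_2/2$ every query terminates within $\mathcal{O}(\frac{1}{\epsilon_h}\log\frac{MQ}{\delta_2})$ rounds. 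The circularity here, with $Q$ depending on $\zeta$ and $\eta$ on $Q$, is harmless: $\log(MQ)$ is only polylogarithmic in the instance parameters and $\log(1/\delta_2)$, and it is absorbed by $\Obigt$.

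The remaining failure probability, $\delta_2/2$, must cover all $M$ Stackelberg calls. The algorithm sets $\zeta=\delta_2/2$ per call, whereas a clean union bound needs $\zeta=\delta_2/(2M)$. This step is where care is needed. I would argue that the analysis is run with $\zeta' = \delta_2/(2M)$, so that $\log(1/\zeta') = \log(2M/\delta_2)$, which is $\log(1/\delta_2)$ plus a log factor hidden in $\Obigt$. Alternatively, I would read the algorithm's $\zeta$ as implicitly rescaled. A union bound then gives overall success probability $\ge 1-\delta_2$.

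Multiplying out, the total round count is $M\cdot Q\cdot \frac{1}{\epsilon_h}\log\frac{MQ}{\delta_2}$. Using $M\le K^{m+1}n^{2m}$, the $n^2$ inside $Q$, and the bound $\log(1/\zeta)\cdot\log(1/\eta)\lesssim \log^2(1/\delta_2)$ up to logs, this is
\[
\Obigt\Bigl(\tfrac{1}{\epsilon_h}K^{m+1}n^{2m+2}\bigl(m^7B\log^2(1/\delta_2)+\tbinom{N+n}{m}\bigr)\Bigr),
\]
which is the claimed bound. The main obstacle is the bookkeeping of the two nested failure probabilities, per-query termination and per-call correctness, so that the $\log^2(1/\delta_2)$ appears only on the $B$ term. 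The counting bound on non-empty regions carries the substantive content and is imported from \citet{personnat2025learning}.
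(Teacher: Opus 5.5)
Your proposal is correct and follows the same route as the paper's proof. Both arguments establish exactness of the three macro blocks (the paper's Lemmas~\ref{lem:partition_correct_regions} and~\ref{lem:partition_correct_result}), bound the non-trivial $\Stackelberg$ calls by $K^{m+1}n^{2m}$ via \citet[Lemma~3.2]{personnat2025learning}, union-bound a geometric tail bound over all queries, and multiply out.

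Your concern about $\zeta$ is well founded: the paper's proof asserts that the total query bound holds with probability $1-\zeta=1-\delta_2/2$ without any union bound over the calls. The repair, however, must be made in the algorithm itself, e.g.\ by setting $\zeta=\delta_2/(2K^{m+1}n^{2m})$ at Line~\ref{line:partition_zeta} (computable, since $K$, $m$, $n$ are known), not by ``running the analysis'' with a different $\zeta'$. The resulting extra $\mathcal{O}(m\log(Kn))$ in $\log(1/\zeta)$ is then hidden by $\Obigt$, just as the paper hides $\log C$.
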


\subsection{Intermediate lemmas and Proof of Lemma~\ref{lem:partition_informal}}
To prove Lemma~\ref{lem:partition} (formal version of Lemma~\ref{lem:partition_informal}), we first provide two intermediate lemmas.
These two results show that, when the $\Stackelberg$ subprocedure is always executed correctly, Algorithm~\ref{alg:partition} computes $\mathcal{Y}_h$ according to Equation~\eqref{eq:partition_result}.
The proof of Lemma~\ref{lem:partition} follows by bounding the number of samples required by such a procedure and the probability of correctly executing it.


\begin{restatable}{lemma}{PartitionAllRegionsLemma}
    \label{lem:partition_correct_regions}
    Suppose that Condition~\ref{cond:partion} is satisfied and every execution of $\Stackelberg$ is successful, Algorithm~\ref{alg:partition} computes correctly $\mathcal{P}_\theta(a\mid\mathcal{X}_h(\va_{h-1}))$ according to Equation~\ref{eq:stackelberg_result} for every $\theta \in \widetilde{\Theta}_{h}$,  $a \in \Af$ and $\va_{h-1} \in \Af(\widetilde{\Theta}_{h-1})$.
\end{restatable}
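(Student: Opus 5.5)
The proof splits according to the two loops of Algorithm~\ref{alg:partition} that assign the regions $\mathcal{P}_\theta(\cdot \mid \mathcal{X}_h(\va_{h-1}))$. One case covers types $\theta \in \widetilde{\Theta}_h \setminus \widetilde{\Theta}_{h-1}$, which are handled by the second loop. The other covers types $\theta \in \widetilde{\Theta}_{h-1}$, which are handled by the first loop. In both cases we fix an arbitrary $\va_{h-1} \in \Af(\widetilde{\Theta}_{h-1})$ and an action $a \in \Af$, and we compare the assigned set with Equation~\eqref{eq:stackelberg_result} taking $S = \mathcal{X}_h(\va_{h-1})$.

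\textbf{New types.} For $\theta \in \widetilde{\Theta}_h \setminus \widetilde{\Theta}_{h-1}$, the algorithm sets $\mathcal{P}_\theta(a \mid \mathcal{X}_h(\va_{h-1})) = R(a)$, where $R = \Stackelberg(\mathcal{X}_h(\va_{h-1}),\theta,\zeta)$.

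\emph{Empty region.} By Condition~\ref{cond:partion}, $\mathcal{X}_h(\va_{h-1})$ is either empty or a polytope of positive volume. If it is empty, $\Stackelberg$ returns the empty set by convention. Equation~\eqref{eq:stackelberg_result} also gives $\emptyset$ in this case, because $\mathcal{P}_\theta(a) \cap \emptyset$ has zero volume.

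\emph{Positive-volume region.} Otherwise, the hypotheses of Lemma~\ref{lem:stackelberg} hold. Since the execution is assumed successful, the returned $R(a)$ coincides with $\mathcal{P}_\theta(a \mid \mathcal{X}_h(\va_{h-1}))$ as defined in Equation~\eqref{eq:stackelberg_result}. The oracle's queries terminate because $\mu_\theta \ge \epsilon_h > 0$, which is also guaranteed by Condition~\ref{cond:partion}; this is in any case folded into the notion of a successful execution.

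\textbf{Old types.} For $\theta \in \widetilde{\Theta}_{h-1}$, set $a^\circ \coloneqq \va_{h-1,\theta}$. By Condition~\ref{cond:partion} and the definition of $\mathcal{P}(\va_{h-1})$, the chain of inclusions in Equation~\eqref{eq:inclusions} holds:
\[
\mathcal{X}_h(\va_{h-1}) \subseteq \mathcal{P}(\va_{h-1}) \subseteq \mathcal{P}_\theta(a^\circ).
\]
\emph{The action $a^\circ$.} The inclusion gives $\mathcal{P}_\theta(a^\circ) \cap \mathcal{X}_h(\va_{h-1}) = \mathcal{X}_h(\va_{h-1})$. This set is either empty or has positive volume, so Equation~\eqref{eq:stackelberg_result} returns exactly $\mathcal{X}_h(\va_{h-1})$, which matches Line~\ref{line:partition_p_simple}.

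\emph{Any other action $a' \neq a^\circ$.} This step needs a short justification, and it is the only place where I expect an obstacle. We must show that $\mathcal{P}_\theta(a') \cap \mathcal{P}_\theta(a^\circ)$ has zero volume. Every point of the intersection satisfies $\uf_\theta(x,a^\circ) = \uf_\theta(x,a')$, which is a linear equation in $x$.

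\begin{itemize}
\item If the coefficient vector $\big(\uf_\theta(a_i,a^\circ) - \uf_\theta(a_i,a')\big)_{i \in [m]}$ is not constant across $i$, this equation defines a hyperplane that does not contain the whole simplex. Its intersection with $\Delta(\Al)$ then has zero volume relative to $\Delta(\Al)$.
\item In the degenerate case where $a^\circ$ and $a'$ are payoff-equivalent for $\theta$ on all of $\Delta(\Al)$, the two regions can overlap with positive volume. I would handle this by adopting the paper's implicit convention that best-response regions of distinct actions intersect in zero volume. The same convention is already used in the text before Line~\ref{line:partition_p_empty}, which justifies that line.
\end{itemize}

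Given the zero-volume overlap, $\mathcal{P}_\theta(a') \cap \mathcal{X}_h(\va_{h-1}) \subseteq \mathcal{P}_\theta(a') \cap \mathcal{P}_\theta(a^\circ)$ also has zero volume. Equation~\eqref{eq:stackelberg_result} therefore gives $\emptyset$, which matches Line~\ref{line:partition_p_empty}.

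Combining the two cases covers every $\theta \in \widetilde{\Theta}_h$, every $a \in \Af$ and every $\va_{h-1} \in \Af(\widetilde{\Theta}_{h-1})$, which proves the lemma.
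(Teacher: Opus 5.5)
Your proof is correct and follows the same route as the paper. New types are handled by the success of $\Stackelberg$. Old types are handled by the inclusion $\mathcal{X}_h(\va_{h-1}) \subseteq \mathcal{P}(\va_{h-1}) \subseteq \mathcal{P}_\theta(\va_{h-1,\theta})$, with the two sub-cases $a = \va_{h-1,\theta}$ and $a \neq \va_{h-1,\theta}$.

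Your additions are sound. You treat the empty-region case for $\Stackelberg$ explicitly, and you justify that $\mathcal{P}_\theta(a^\circ) \cap \mathcal{P}_\theta(a')$ has zero volume, which the paper merely asserts. You also rightly note that this step silently assumes no two follower actions are payoff-equivalent for $\theta$. When the difference vector is constant but nonzero, the intersection is simply empty, so only the identically-zero case needs that assumption.
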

\begin{proof}
    For the types $\theta \in \widetilde{\Theta}_h \setminus \widetilde{\Theta}_{h-1}$, the algorithm computes $\mathcal{P}_\theta(a \mid \mathcal{X}_h(\va_{h-1}))$ by means of the $\Stackelberg$ subprocedure, which computes it according to Equation~\eqref{eq:partition_result} when successful.
    There remains to be considered the types in $\widetilde{\Theta}_{h-1}$.
    Let $\theta \in \widetilde{\Theta}_{h-1}$,  $a \in \Af$ and $\va_{h-1} \in \Af(\widetilde{\Theta}_{h-1})$.
    In the following we let $\mathcal{P}_\theta(a \mid \mathcal{X}_h(\va_{h-1}))$ be defined according to Equation~\eqref{eq:stackelberg_result} and $\mathcal{P}'_\theta(a \mid \mathcal{X}_h(\va_{h-1}))$ be the region computed by Algorithm~\ref{alg:partition} at Lines~\ref{line:partition_p_simple} and~\ref{line:partition_p_empty}.
    
    Suppose that $a=\va_{h-1,\theta}$.
    Then Algorithm~\ref{alg:partition} computes $\mathcal{P}'_\theta(a\mid \mathcal{X}_h(\va_{h-1})) = \mathcal{X}_h(\va_{h-1})$.
    Condition~\ref{cond:partion} guarantees that $\mathcal{X}_h(\va_{h-1}) \subseteq \mathcal{P}(\va_{h-1})$.
    As a result, we have that:
    \begin{align*}
        \mathcal{P}'_\theta(a\mid\mathcal{X}_h(\va_{h-1})) &= \mathcal{X}_h(\va_{h-1}) 
        = \mathcal{X}_h(\va_{h-1}) \cap \mathcal{P}(\va_{h-1})
    \end{align*}
    Observe that such a polytope $\mathcal{X}_h(\va_{h-1}) = \mathcal{X}_h(\va_{h-1}) \cap \mathcal{P}(\va_{h-1})$ cannot have zero volume while being non-empty.
    Indeed, $\mathcal{X}_h(\va_{h-1})$ is either empty or has strictly positive volume, as required by Condition~\ref{cond:partion}.
    Therefore, $\mathcal{P}'_\theta(a \mid \mathcal{X}_h(\va_{h-1})) = \mathcal{P}_\theta(a \mid \mathcal{X}_h(\va_{h-1}))$ when $a=\va_{h-1,\theta}$.

    Suppose instead that $a \neq \va_{h-1,\theta}$.
    By construction, Algorithm~\ref{alg:partition} computes $\mathcal{P}'_\theta(a| \mathcal{X}_h(\va_{h-1})) = \emptyset$.
    At the same time, we have that $\mathcal{X}_h(\va_{h-1}) \subseteq \mathcal{P}(\va_{h-1}) \subseteq \mathcal{P}_{\theta}(\va_{h-1,\theta})$ and $\volume(\mathcal{P}_{\theta}(\va_{h-1,\theta}) \cap \mathcal{P}_{\theta}(a)) = 0$, as $\va_{h-1,\theta} \neq a$.
    Consequently, $\volume(\mathcal{X}_{h}(\va_{h-1}) \cap \mathcal{P}_{\theta}(a)) = 0$, and by Equation~\eqref{eq:partition_result}, we have: $$\mathcal{P}_\theta(a\mid \mathcal{X}_h(\va_{h-1})) = \emptyset = \mathcal{P}'_\theta(a \mid  \mathcal{X}_h(\va_{h-1})).$$

    As a result, Algorithm~\ref{alg:partition} computes $\mathcal{P}_\theta(\cdot| \mathcal{X}_h(\cdot))$ correctly for every $\theta \in \widetilde{\Theta}_h$, concluding the proof.
\end{proof}

\begin{restatable}{lemma}{PartitionCorrectResultLemma}
    \label{lem:partition_correct_result}
    If Condition ~\ref{cond:partion} is satisfied and every execution of $\Stackelberg$ is successful, Algorithm~\ref{alg:partition} correctly computes $\mathcal{Y}_h$ according to Equation~\eqref{eq:partition_result}.
\end{restatable}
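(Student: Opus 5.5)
The plan is to fix an arbitrary $\va_h \in \Af(\widetilde{\Theta}_h)$, set $\va_{h-1} = \va_h|\widetilde{\Theta}_{h-1}$ as at Line~\ref{line:va_h-1}, and show that the value assigned to $\mathcal{Y}_h(\va_h)$ at Line~\ref{line:partition_result} or Line~\ref{line:partition_result_empty} coincides with Equation~\eqref{eq:partition_result}. By Lemma~\ref{lem:partition_correct_regions}, under Condition~\ref{cond:partion} and successful executions of $\Stackelberg$, each region used in the third macro block is $\mathcal{P}_\theta(\va_{h,\theta}\mid \mathcal{X}_h(\va_{h-1}))$ in the sense of Equation~\eqref{eq:stackelberg_result}. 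So the whole statement reduces to the following claim: with this convention (null-volume intersections replaced by $\emptyset$), the intersection $\bigcap_{\theta \in \widetilde{\Theta}_h} \mathcal{P}_\theta(\va_{h,\theta}\mid\mathcal{X}_h(\va_{h-1}))$ has positive volume if and only if $\mathcal{P}(\va_h)\cap\mathcal{X}_h(\va_{h-1})$ does, and in that case the two sets are equal.

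I will argue by cases. \textbf{Case 1:} some factor $\mathcal{P}_\theta(\va_{h,\theta}\mid\mathcal{X}_h(\va_{h-1}))$ is $\emptyset$ by Equation~\eqref{eq:stackelberg_result}. Then $\volume(\mathcal{P}_\theta(\va_{h,\theta})\cap\mathcal{X}_h(\va_{h-1}))=0$. Since $\mathcal{P}(\va_h)\cap\mathcal{X}_h(\va_{h-1})$ is contained in that set, it also has zero volume, so Equation~\eqref{eq:partition_result} gives $\emptyset$. On the algorithm's side, the computed intersection is empty, its volume is zero, and Line~\ref{line:partition_result_empty} also returns $\emptyset$.

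\textbf{Case 2:} every factor equals the genuine intersection $\mathcal{P}_\theta(\va_{h,\theta})\cap\mathcal{X}_h(\va_{h-1})$. The distributivity chain already displayed in the third macro block then gives
$\bigcap_\theta \mathcal{P}_\theta(\va_{h,\theta}\mid\mathcal{X}_h(\va_{h-1})) = \mathcal{P}(\va_h)\cap\mathcal{X}_h(\va_{h-1})$
as sets. Because the two sets are equal, the volume test at Line~\ref{line:partition_result} succeeds exactly when the positive-volume branch of Equation~\eqref{eq:partition_result} applies, and both sides assign the same polytope or both assign $\emptyset$.

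The only delicate step is Case 1, which is where replacing lower-dimensional pieces by $\emptyset$ could in principle break the identity. It is handled by monotonicity of volume under inclusion. I would also remark that if $\mathcal{X}_h(\va_{h-1})=\emptyset$ (allowed by Condition~\ref{cond:partion}), then all factors are empty and we are in Case 1. Since $\va_h$ was arbitrary, the lemma follows.
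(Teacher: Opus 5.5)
Your proposal is correct and takes essentially the same route as the paper's proof. Both invoke Lemma~\ref{lem:partition_correct_regions} and then split into two cases: some factor $\mathcal{P}_\theta(\va_{h,\theta}\mid\mathcal{X}_h(\va_{h-1}))$ is empty, so zero volume passes to the subset $\mathcal{P}(\va_h)\cap\mathcal{X}_h(\va_{h-1})$ and both sides give $\emptyset$; or all factors are genuine intersections, so distributivity gives set equality and the volume tests agree. Folding $\mathcal{X}_h(\va_{h-1})=\emptyset$ into Case~1 is fine. Your phrasing of Case~1 as $\volume(\mathcal{P}_\theta(\va_{h,\theta})\cap\mathcal{X}_h(\va_{h-1}))=0$ is slightly more precise than the paper's wording.
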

\begin{proof}
    We let $\mathcal{Y}_h$ be computed according to Equation~\eqref{eq:partition_result}, and $\mathcal{Y}'_h$ be the one computed by Algorithm~\ref{alg:partition} at Line~\ref{line:partition_result}.
    
    Formally, for every $\va_h \in \Af(\widetilde{\Theta}_h)$ we let $\va_{h-1}$ be the restriction fo $\va_h$ to $\Af(\widetilde{\Theta}_{h-1})$ and:
    \begin{equation*}
        \mathcal{Y}_h(\va_h) \coloneqq 
        \begin{cases}
            \mathcal{P}(\va_h) \cap \mathcal{X}_h(\va_{h-1}) & \text{if its volume is greater than zero} \\
            \emptyset & \text{otherwise},
        \end{cases}
    \end{equation*}
    according to Equation~\eqref{eq:partition_result}.
    As for Lemma~\ref{lem:partition_correct_regions}, Algorithm~\ref{alg:partition} correctly computes the regions $\mathcal{P}_\theta(a \mid \mathcal{X}_h(\va_{h-1}))$, therefore Line~\ref{line:partition_result} Algorithm~\ref{alg:partition} computes:
    \begin{equation}
        \label{eq:partition_proof_y_comp}
        \mathcal{Y}'_h(\va_h) = 
        \begin{cases}
            \bigcap_{\theta \in \widetilde{\Theta}_h} \mathcal{P}_\theta(\va_{h,\theta}\mid\mathcal{X}_h(\va_{h-1})) & \text{if its volume is greater than zero} \\
            \emptyset & \text{otherwise}.
        \end{cases}
    \end{equation}
    The two regions $\mathcal{Y}_h(\va_h)$ and $\mathcal{Y}'_h(\va_h)$ are both empty when $\mathcal{X}_h(\va_{h-1}) = \emptyset$.
    We now show that the two coincide even when $\mathcal{X}_h(\va_{h-1}) \neq \emptyset$.
    Observe that Condition~\ref{cond:partion} requires that $\volume(\mathcal{X}_h(\va_{h-1}))>0$ when the polytope is not empty.

    Suppose that $\mathcal{P}_\theta(\va_{h,\theta}\mid\mathcal{X}_h(\va_{h-1}))$ is not empty for every $\theta \in \widetilde{\Theta}_h$.
    Then, by Equation~\eqref{eq:stackelberg_result}, $\mathcal{P}_\theta(\va_{h,\theta}\mid\mathcal{X}_h(\va_{h-1})) = \mathcal{P}_\theta(\va_{h,\theta}) \cap \mathcal{X}_h(\va_{h-1})$.
    As a result:
    \begin{align*}
        \mathcal{P}(\va_h) \cap \mathcal{X}_h(\va_{h-1}) &= \bigcap_{\theta \in \widetilde{\Theta}_h} \mathcal{P}_\theta(\va_{h,\theta}) \cap \mathcal{X}_h(\va_{h-1}) =\bigcap_{\theta \in \widetilde{\Theta}_h} \mathcal{P}_\theta(\va_{h,\theta}\mid\mathcal{X}_h(\va_{h-1})),
    \end{align*}
    where the first equality uses the definition of $\mathcal{P}(\va_h)$, and the second Equation~\eqref{eq:stackelberg_result}.
    Therefore, by applying Equation~\eqref{eq:partition_result} and Equation~\eqref{eq:partition_proof_y_comp}, we have that $\mathcal{Y}_h(\va)$ and $\mathcal{Y}'_h(\va)$ coincide.

    Suppose now that there exists some $\bar \theta \in \widetilde{\Theta}_h$ such that $\mathcal{P}_{\bar \theta}(\va_{h,\bar \theta}\mid\mathcal{X}_h(\va_{h-1}))$ is empty.
    Then according to Equation~\eqref{eq:partition_proof_y_comp} $\mathcal{Y}'_h(\va_h)$ is also empty.
    Furthermore, by Equation~\eqref{eq:stackelberg_result}, for $\mathcal{P}_{\bar \theta}(\va_{h,\bar \theta}\mid\mathcal{X}_h(\va_{h-1}))$ to be empty, we need $\mathcal{P}_{\bar \theta}(\va_{h,\bar \theta})$ to have null volume (possibly to be empty).
    This implies that $\mathcal{P}(\va_h) \cap \mathcal{X}_h(\va_{h-1})$ has zero volume too, as it is a subset of $\mathcal{P}_{\bar \theta}(\va_{h,\bar \theta})$.
    As a result, $\mathcal{Y}_h(\va_h)$ is empty and coincides with $\mathcal{Y}'_h(\va_h)$, concluding the proof. 
\end{proof}

\PartitionLemma*
\begin{proof}
    As of Lemma~\ref{lem:partition_correct_result}, Algorithm~\ref{alg:partition} computes $\mathcal{Y}_h$ correctly if every execution of $\Stackelberg$ is successful.
    We therefore have to bound the number of rounds required by these procedures and the probability that they correct terminate.
    
    The $\Stackelberg$ subprocedure is executed once for every type $\theta \in \widetilde{\Theta}_h \setminus \widetilde{\Theta}_{h-1}$ and every action profile $\va_h \in \Af(\widetilde{\Theta}_{h-1})$.
    However, it takes exactly zero rounds when it is executed over an empty region $\mathcal{X}_h(\va_{h-1})$.
    By~\cite{personnat2025learning} Lemma~3.2 the number of non-empty regions $\mathcal{X}_h(\va_{h-1})$ is at most $K^m n^{2m}$.
    As a result, the number of actual calls to the subrpocedure is bounded by $K^{m+1} n^{2m}$, which accounts for at most $K$ calls for each non-empty region.
    
    According to Lemma~\ref{lem:stackelberg}, under the event that each query ends in a finite number of rounds, each execution of $\Stackelberg$ correctly terminates with probability at least $\zeta$ and employs
    \begin{align*}
        \Obigt \left( n^2 \left(m^7 B \log\frac{1}{\zeta} + \binom{N+n}{m} \right) \right) = \mathcal{O} \left( n^2 \left(m^7 B \log\frac{1}{\delta_2} + \binom{N+n}{m} \right) \right)
    \end{align*}
    queries, where $\zeta \coloneqq \nicefrac{\delta}{2}$ is define at Line~\ref{line:partition_zeta} Algorithm~\ref{alg:partition}.
    Therefore, the number of queries $C$ performed by Algorithm~\ref{alg:partition} is:
    \begin{equation*}
        C \le \Obigt \left( K^{m+1} n^{2m} n^2 \left(m^7 B \log\frac{1}{\delta_2} + \binom{N+n}{m} \right) \right)
    \end{equation*}
    with probability at least $1-\zeta= 1-\nicefrac{\delta_2}{2}$.

    In order to conclude the proof, we bound the number of samples required by the algorithm to terminate correctly with probability at least $1-\delta$.
    We observe that every query is performed over a type $\theta$ such that $\mu_\theta \ge \epsilon_h$.
    A simple probabilistic argument (see \emph{e.g.}, Lemma 2 in~\cite{bacchiocchi2024online}) shows that given any $\rho \in (0,1)$, a query ends in at most $T_\text{q}(\rho) \coloneqq \lceil \nicefrac{1}{\epsilon} \log(\nicefrac{1}{\rho}) \rceil$ with probability at least $1-\rho$, \emph{i.e.}, with probability at least $1-\rho$ any given type appears in $T_\text{q}(\rho)$ rounds.
    By considering $\rho = \nicefrac{\zeta}{2C}$ and employing an union bound over the event that Algorithm~\ref{alg:partition} executes $C$ queries, we have that with probability at least:
    \begin{equation*}
        1 -\zeta -C\rho = 1-\frac{\delta_2}{2} -C\frac{\zeta}{2C} = 1-\delta_2 
    \end{equation*}
    Algorithm~\ref{alg:partition} terminates correctly in $T_\text{P} \coloneqq C T_\text{q}(\rho)$ rounds.
    In order to conclude the proof, we observe that:
    \begin{align*}
        T_\text{P} &= C T_\text{q} \\
        &= \Obigt \left( \frac{C}{\epsilon_h} \log \left(\frac{C}{\delta_2} \right) \right) \\
        &= \Obigt \left( \frac{1}{\epsilon_h} K^{m+1} n^{2m+2} \left(m^7 B \log^2\frac{1}{\delta_2} + \binom{N+n}{m} \right) \right),
    \end{align*}
    proving the statement.
\end{proof}

\section{Proofs omitted from Section~\ref{sec:prune}}\label{appendix:prune}
In this section we provide the formal version of Lemma~\ref{lem:prune_one_informal} (Lemma~\ref{lem:prune_one}) and its proof.
Specifically, Lemma~\ref{lem:prune_one_informal} informally requires that Algorithm~\ref{alg:find_types} and Algorithm~\ref{alg:partition} were executed ``successfully'',
while Lemma~\ref{lem:prune_one} formalizes the condition that must be satisfies to correctly execute Algorithm~\ref{alg:prune} as follows.
\begin{condition}
    \label{cond:pruning}
    Algorithm~\ref{alg:partition} is called with the following inputs:
    \begin{enumerate}
        \item a parameter $\epsilon_h \in (0,1)$.
        \item two sets of types $\widetilde{\Theta}_{h-1} \subseteq \widetilde{\Theta}_h \subseteq \Theta$ such that $\widetilde{\Theta}_h \neq \emptyset$ and $\mu_\theta \le 3\epsilon_h$ for every $\theta \in \Theta \setminus \widetilde{\Theta}_h$.
        \item a prior estimator $\widehat{\mu}_h$ such that $\|\mu - \widehat{\mu}_h \|_\infty \le \epsilon_h$.
        \item a search space $\mathcal{X}_h = \bigcup_{\va_{h-1} \in \Af(\widetilde{\Theta}_{h-1})} \mathcal{X}_h(\va_{h-1})$ where each $\mathcal{X}_h(\va_{h-1}) \subseteq \mathcal{P}(\va_{h-1})$ is either empty or a polytope with volume greater than zero.
        \item a mapping $\mathcal{Y}_h$ satisfies Equation~\eqref{eq:partition_result} and such that an optimal commitment $x^\star$ belongs to some $\mathcal{Y}_h(\va_h)$, with $\va_{h,\theta}=a^\star_\theta(x^\star)$ for every $\theta \in \widetilde{\Theta}_h$.
    \end{enumerate}
\end{condition}

The remaining part of the appendix is organized as follows.
Fist, we provide two additional lemmas, namely Lemma~\ref{lem:prune_simple_inequalities} and Lemma~\ref{lem:prune_ul_inequalities}.
The first provide upper and lower bounds on $\sum_{\theta \in \widetilde{\Theta}_h} \mu_\theta \ul(x,\va_\theta)$, which is the leader utility in $x \in \Delta(\Al)$ assuming that only the types in $\widetilde{\Theta}_h$ are drawn from $\mu$, and that they respond according to the action profile $\va_\theta \in \Af(\widetilde{\Theta}_h)$.
Lemma~\ref{lem:prune_ul_inequalities} provides instead upper and lower bounds on the actual utility $\ul(x)$ in the terms of the estimated utility $\ulhat_h(x,\va_h)$, for any $\va_h \in \Af(\widetilde{\Theta}_h)$ and $x \in \mathcal{Y}_h(\va_h)$.
Together, these two lemmas are employed to prove Lemma~\ref{lem:prune_one}.
Finally, we conclude this appendix with Lemma~\ref{lem:prune_hyperplanes} and its proof, which will be instrumental to upper bound the number of facets of each polytope composing $\mathcal{X}_h$ (see Lemma~\ref{lem:bit_complexity_and_facets}).

\begin{restatable}{lemma}{PruningInequalitiesLemma}
    \label{lem:prune_simple_inequalities}
    If Condition~\ref{cond:pruning} is satisfied, then for every $\va \in \Af(\widetilde{\Theta}_h)$ and $x \in \Delta(\Al)$, we have:
    \begin{equation*}
        \ulhat_h(x,\va) -K \epsilon \le \sum_{\theta \in \widetilde{\Theta}_h} \mu_\theta \ul(x,\va_\theta) \le \ulhat_h(x,\va) +K \epsilon.
    \end{equation*}
\end{restatable}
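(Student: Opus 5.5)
The argument is a direct term-by-term comparison between the true partial utility and its estimate, using only item 3 of Condition~\ref{cond:pruning}, namely $\|\mu - \widehat{\mu}_h\|_\infty \le \epsilon_h$ (here $\epsilon$ in the statement is $\epsilon_h$). Fix $\va \in \Af(\widetilde{\Theta}_h)$ and $x \in \Delta(\Al)$. By the definition of $\ulhat_h$ in Equation~\eqref{eq:estimate_leader_utility}, the difference between the two quantities is
\begin{equation*}
    \sum_{\theta \in \widetilde{\Theta}_h} \mu_\theta \ul(x,\va_\theta) - \ulhat_h(x,\va) = \sum_{\theta \in \widetilde{\Theta}_h} \left(\mu_\theta - \widehat{\mu}_{h,\theta}\right) \ul(x,\va_\theta).
\end{equation*}
Note that $\ulhat_h(x,\va)$ is well defined by the same formula for every $x \in \Delta(\Al)$, not only for $x \in \mathcal{Y}_h(\va)$.

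Next, bound each summand in absolute value. Since $x \in \Delta(\Al)$ and $\ul$ takes values in $[0,1]$, we have $\ul(x,\va_\theta) = \sum_{i \in [m]} x_i \ul(a_i,\va_\theta) \in [0,1]$, being a convex combination of numbers in $[0,1]$. Therefore each summand satisfies $|(\mu_\theta - \widehat{\mu}_{h,\theta}) \ul(x,\va_\theta)| \le |\mu_\theta - \widehat{\mu}_{h,\theta}| \le \epsilon_h$.

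Summing over at most $|\widetilde{\Theta}_h| \le K$ types, the triangle inequality gives that the absolute difference is at most $K\epsilon_h$. This yields both the lower and the upper inequality simultaneously.

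There is no real obstacle; the only points to check are that $\ul(x,\cdot)$ is bounded in $[0,1]$ for mixed $x$, and that the remaining items of Condition~\ref{cond:pruning} are not needed here.
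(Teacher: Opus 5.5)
Your proof is correct and matches the paper's: both write the difference as $\sum_{\theta \in \widetilde{\Theta}_h} (\mu_\theta - \widehat{\mu}_{h,\theta}) \ul(x,\va_\theta)$ and bound each term by $\epsilon_h$, using $\|\mu - \widehat{\mu}_h\|_\infty \le \epsilon_h$ and $\ul(x,\va_\theta) \in [0,1]$, then sum over at most $K$ types. Your remarks that $\ulhat_h(x,\va)$ is defined for all $x \in \Delta(\Al)$ and that only item~3 of the condition is needed are accurate clarifications the paper leaves implicit.
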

\begin{proof}
    We observe that:
    \begin{align*}
        \left|\ulhat_h(x,\va) - \sum_{\theta \in \widetilde{\Theta}_h} \mu_\theta \ul(x,\va_\theta)\right| = \left|\sum_{\theta \in \widetilde{\Theta}_h} (\widehat{\mu}_{h\theta} -\mu_\theta) \ul(x,\va_\theta)\right| \le K\epsilon_h
    \end{align*}
    where the inequality holds because, under Condition~\ref{cond:pruning}, $\|\widehat{\mu}_h - \mu\|_\infty \le \epsilon$ {and} $|\widetilde{\Theta}_h| \le K.$ The statement follows by unraveling the absolute value.
\end{proof}

\begin{restatable}{lemma}{PruningSimpleInequalitiesLemma}
    \label{lem:prune_ul_inequalities}
    If Condition~\ref{cond:pruning} is satisfied, then for every $\va_h \in \Af(\widetilde{\Theta}_h)$ and $x \in \mathcal{Y}_h(\va_h)$:
    \begin{equation*}
        \ul(x) \ge \ulhat_h(x,\va_h) -\epsilon |\widetilde{\Theta}_h|.
    \end{equation*}
    Furthermore, if $\va_{h,\theta} = a^\star_\theta(x)$ for every $\theta \in \widetilde{\Theta}_h$, it holds that:
    \begin{equation*}
        \ul(x) \le \ulhat_h(x,\va_h) +4K\epsilon_h.
    \end{equation*}
\end{restatable}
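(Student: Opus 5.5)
The plan is to split the true utility $\ul(x) = \sum_{\theta \in \Theta} \mu_\theta \ul(x, a^\star_\theta(x))$ into the contribution of the known types $\widetilde{\Theta}_h$ and that of the remaining types $\Theta \setminus \widetilde{\Theta}_h$. Each half is then compared with $\ulhat_h(x,\va_h)$, using Lemma~\ref{lem:prune_simple_inequalities} together with items 2 and 3 of Condition~\ref{cond:pruning}.

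\emph{Lower bound.} Fix $\va_h$ and $x \in \mathcal{Y}_h(\va_h)$. By item 5 of Condition~\ref{cond:pruning}, $\mathcal{Y}_h$ satisfies Equation~\eqref{eq:partition_result}, so $x \in \mathcal{P}(\va_h)$. Hence $\va_{h,\theta}$ is a best response of type $\theta$ at $x$ for every $\theta \in \widetilde{\Theta}_h$.

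Because ties are broken in favor of the leader, $\ul(x, a^\star_\theta(x)) \ge \ul(x,\va_{h,\theta})$ for each such $\theta$. The types outside $\widetilde{\Theta}_h$ contribute a nonnegative amount, since utilities lie in $[0,1]$. This gives
\[
\ul(x) \ge \sum_{\theta \in \widetilde{\Theta}_h} \mu_\theta \ul(x,\va_{h,\theta}).
\]
I then bound the right-hand side from below by $\ulhat_h(x,\va_h) - \epsilon_h|\widetilde{\Theta}_h|$. To do this I run the computation in Lemma~\ref{lem:prune_simple_inequalities} with the finer count $|\widetilde{\Theta}_h|$ in place of $K$: each term satisfies $|\widehat{\mu}_{h,\theta} - \mu_\theta| \le \epsilon_h$ and $\ul \in [0,1]$.

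\emph{Upper bound.} Now assume additionally that $\va_{h,\theta} = a^\star_\theta(x)$ for $\theta \in \widetilde{\Theta}_h$. Then the known-type part of $\ul(x)$ is exactly $\sum_{\theta \in \widetilde{\Theta}_h} \mu_\theta \ul(x,\va_{h,\theta})$, which is at most $\ulhat_h(x,\va_h) + K\epsilon_h$ by Lemma~\ref{lem:prune_simple_inequalities}.

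For the unknown part, item 2 of Condition~\ref{cond:pruning} gives $\mu_\theta \le 3\epsilon_h$ for each $\theta \notin \widetilde{\Theta}_h$, and $\ul \le 1$. This part is therefore at most $3\epsilon_h |\Theta \setminus \widetilde{\Theta}_h| \le 3K\epsilon_h$. Summing the two parts yields $\ul(x) \le \ulhat_h(x,\va_h) + 4K\epsilon_h$.

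\emph{Main obstacle.} The only delicate point is the lower bound. There one must check that $x \in \mathcal{Y}_h(\va_h)$ really places $x$ in $\mathcal{P}_\theta(\va_{h,\theta})$ for every known type, and this uses Equation~\eqref{eq:partition_result} together with $\mathcal{X}_h(\va_{h-1}) \subseteq \mathcal{P}(\va_{h-1})$. One must also apply the leader-favorable tie-breaking correctly: $x$ may lie on a boundary where $a^\star_\theta(x) \neq \va_{h,\theta}$, but tie-breaking ensures the leader's utility can only increase there. Everything else is routine.
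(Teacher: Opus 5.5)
Your proposal is correct and follows essentially the same argument as the paper. For the lower bound you use $\mathcal{Y}_h(\va_h)\subseteq\mathcal{P}(\va_h)$ and leader-favorable tie-breaking, drop the nonnegative contribution of types outside $\widetilde{\Theta}_h$, and replace $\mu$ by $\widehat{\mu}_h$ at cost $\epsilon_h|\widetilde{\Theta}_h|$; for the upper bound you charge $3K\epsilon_h$ to the unseen types and $K\epsilon_h$ to the estimation error. The inclusion $\mathcal{X}_h(\va_{h-1})\subseteq\mathcal{P}(\va_{h-1})$ that you mention is not needed, since Equation~\eqref{eq:partition_result} already gives $\mathcal{Y}_h(\va_h)\subseteq\mathcal{P}(\va_h)$ directly.
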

\begin{proof}
    Consider any $\va_h \in \Af(\widetilde{\Theta}_h)$ and let $\va_{h-1}$ be its restriction to $\Af(\widetilde{\Theta}_{h-1})$.
    Thanks to Equation~\eqref{eq:partition_result}, we have that $\mathcal{Y}_h(\va_h) \subseteq \mathcal{P}(\va_h)$. 
    Now take any $x \in \mathcal{Y}_h(\va_h) \subseteq \mathcal{P}(\va_h)$.
    We observe that for every $\theta \in \widetilde{\Theta}_h$, a follower of type $\theta$ is indifferent in $x$ between action $\va_{h,\theta}$ and $a^\star_\theta(x)$.\footnote{Notice that the two actions are different when $x$ is on the boundary between two best-response regions.}
    As the follower breaks ties in favor of the leader, we have:
    \begin{equation}
        \label{eq:proof_prune_follower_tie}
        \ul(x,a^\star_\theta(x)) \ge \ul(x,\va_{h,\theta}).
    \end{equation}
    By employing this inequality, we lower bound the leader's utility in $x$ as follows:
    \begin{align*}
        \ul(x) &= \sum_{\theta \in \Theta} \mu_\theta \ul(x,a^\star_\theta(x)) \\
        &=\sum_{\theta \in \widetilde{\Theta}_h} \mu_\theta \ul(x,a^\star_\theta(x)) +\sum_{\theta \notin \widetilde{\Theta}_h}\mu_\theta \ul(x,a^\star_\theta(x)) \\
        &\ge\sum_{\theta \in \widetilde{\Theta}_h} \mu_\theta \ul(x,\va_{h,\theta}) +\sum_{\theta \notin \widetilde{\Theta}_h}\mu_\theta \ul(x,a^\star_\theta(x)) \\
        &\ge \sum_{\theta \in \widetilde{\Theta}_h} \mu_\theta \ul(x,\va_{h,\theta}) \\
        &=\sum_{\theta \in \widetilde{\Theta}_h} \widehat{\mu}_{h,\theta} \ul(x,\va_{h,\theta}) +\sum_{\theta \in \widetilde{\Theta}_h} (\mu_\theta -\widehat{\mu}_{h,\theta}) \ul(x,\va_\theta) \\
        &\ge \sum_{\theta \in \widetilde{\Theta}_h} \widehat{\mu}_{h,\theta} \ul(x,\va_{h,\theta}) -\epsilon_h |\widetilde{\Theta}_h| = \ulhat_h(x,\va_h) -\epsilon_h |\widetilde{\Theta}_h|
    \end{align*}
    where the first inequality follows from Equation~\eqref{eq:proof_prune_follower_tie}, the second removes a non-negative quantity, and the last one follows from Condition~\ref{cond:pruning}, in particular that $\|\mu - \widehat{\mu}_h\|_\infty \le \epsilon_h$ and that the utility is bounded in $[0,1]$.

    Now consider an action profile $\va_h \in \Af(\widetilde{\Theta}_h)$ a commitment $x \in \mathcal{Y}_h(\va_h)$ such that $\va_{h,\theta} = a^\star_\theta(x)$ for every $\theta \in \widetilde{\Theta}_h$.
    We can upper bound $\ul(x)$ as follows:
    \begin{align*}
        \ul(x) &= \sum_{\theta \in \Theta} \mu_\theta \ul(x,a^\star_\theta(x)) \\
        &= \sum_{\theta \in \widetilde{\Theta}_h} \mu_\theta \ul(x,\va_{h,\theta}) +\sum_{\theta \in \Theta \setminus \widetilde{\Theta}_h} \mu_\theta \ul(x,a^\star_\theta(x)) \\
        &\le \sum_{\theta \in \widetilde{\Theta}_h} \mu_\theta \ul(x,\va_{h,\theta}) +3K\epsilon_h \\
        &= \sum_{\theta \in \widetilde{\Theta}_h} \widehat{\mu}_{h,\theta} \ul(x,\va_{h,\theta}) +\sum_{\theta \in \widetilde{\Theta}_h} (\mu_\theta -\widehat{\mu}_{h,\theta}) \ul(x,\va_{h,\theta}) +3K\epsilon_h \\
        &\le \sum_{\theta \in \widetilde{\Theta}_h} \widehat{\mu}_{h,\theta} \ul(x,\va_{h,\theta}) +K\epsilon_h +3K\epsilon_h = \ulhat_h(x,\va_h) +4K\epsilon_h.
    \end{align*}
    The first inequality follows from the fact that $\mu_\theta \le 3\epsilon_h$ for $\theta \notin \widetilde{\Theta}_h$, while the second one applies $\|\mu - \widehat{\mu}_h\|_\infty \le \epsilon_h$. Both properties are guaranteed by Condition~\ref{cond:pruning}.
\end{proof}

\begin{restatable}{lemma}{PruneOPTLemma}
    \label{lem:prune_opt_bounds}
    If Condition~\ref{cond:pruning} is satisfied, then Algorithm~\ref{alg:prune} computes $\underline{\OPT}_h$ such that:
    \begin{equation*}
        \OPT -K\epsilon_h (4 +C_2) \le \underline{\OPT}_h \le \OPT -K\epsilon(C_2 -1),
    \end{equation*}
    where $C_1$ and $C_2$ are computed at Line~\ref{line:prune_c1_c2} in Algorithm~\ref{alg:partition}.
\end{restatable}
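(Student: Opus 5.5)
We bound the two sides separately, using the definition in Equation~\eqref{eq:opt_lower_bound_def}, namely $\underline{\OPT}_h = \ulhat_h(x^\circ,\va^\circ) - C_2K\epsilon_h$, where $(x^\circ,\va^\circ)$ maximizes $\ulhat_h(x,\va)$ over $\va \in \Af(\widetilde{\Theta}_h)$ and $x \in \mathcal{Y}_h(\va)$. Both bounds are then obtained by comparing $\ulhat_h(x^\circ,\va^\circ)$ with $\OPT$ through Lemma~\ref{lem:prune_ul_inequalities}.

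\textbf{Upper bound.} Here I apply the first part of Lemma~\ref{lem:prune_ul_inequalities} to the pair $(x^\circ,\va^\circ)$. Since $x^\circ \in \mathcal{Y}_h(\va^\circ)$, it gives
\[
\ul(x^\circ) \ge \ulhat_h(x^\circ,\va^\circ) - \epsilon_h|\widetilde{\Theta}_h| \ge \ulhat_h(x^\circ,\va^\circ) - K\epsilon_h .
\]
By definition of $\OPT$ we also have $\OPT \ge \ul(x^\circ)$. Combining the two yields $\ulhat_h(x^\circ,\va^\circ) \le \OPT + K\epsilon_h$. Subtracting $C_2K\epsilon_h$ then gives $\underline{\OPT}_h \le \OPT - (C_2-1)K\epsilon_h$.

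\textbf{Lower bound.} Here I use item~5 of Condition~\ref{cond:pruning}. It provides an optimal commitment $x^\star$ and a profile $\va_h$ with $x^\star \in \mathcal{Y}_h(\va_h)$ and $\va_{h,\theta} = a^\star_\theta(x^\star)$ for all $\theta \in \widetilde{\Theta}_h$. These are exactly the hypotheses of the second part of Lemma~\ref{lem:prune_ul_inequalities}, so
\[
\OPT = \ul(x^\star) \le \ulhat_h(x^\star,\va_h) + 4K\epsilon_h .
\]
The pair $(x^\star,\va_h)$ is feasible for the maximization defining $(x^\circ,\va^\circ)$, so $\ulhat_h(x^\circ,\va^\circ) \ge \ulhat_h(x^\star,\va_h) \ge \OPT - 4K\epsilon_h$. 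Subtracting $C_2K\epsilon_h$ gives $\underline{\OPT}_h \ge \OPT - (4+C_2)K\epsilon_h$.

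\textbf{Main obstacle.} The only delicate point is the lower bound, which needs the specific optimal commitment supplied by Condition~\ref{cond:pruning}. That condition guarantees $x^\star$ lies in some $\mathcal{Y}_h(\va_h)$ with the profile matching the true tie-broken best responses. This is what allows the sharper direction ($+4K\epsilon_h$) of Lemma~\ref{lem:prune_ul_inequalities} to be invoked. The other ingredients are routine: the maximum in Equation~\eqref{eq:opt_lower_bound_def} is attained, because each nonempty $\mathcal{Y}_h(\va)$ is a compact polytope and $\ulhat_h$ is linear in $x$; and $|\widetilde{\Theta}_h| \le K$.
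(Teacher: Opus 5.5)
Your proposal is correct and follows essentially the same route as the paper's proof. The lower bound comes from plugging the optimal commitment $x^\star \in \mathcal{Y}_h(\va_h)$ supplied by Condition~\ref{cond:pruning} into the maximization and applying the second part of Lemma~\ref{lem:prune_ul_inequalities}. The upper bound comes from applying the first part of that lemma to the maximizer $(x^\circ,\va^\circ)$ together with $\ul(x^\circ) \le \OPT$. Your remarks that the maximum is attained and that $|\widetilde{\Theta}_h| \le K$ only spell out steps the paper leaves implicit.
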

\begin{proof}
    Let $x^\star \in \mathcal{X}_h$ be an optimal commitment which, thanks to Condition~\ref{cond:pruning}, belongs to some $\mathcal{Y}_h(\va^\star), \va^\star \in \Af(\widetilde{\Theta}_h)$ such that $\va^\star_\theta = a^\star_\theta(x^\star)$ for every $\theta \in \widetilde{\Theta}_h$.
    Then we lower bound $\underline{\OPT}_h$ as follows:
    \begin{align*}
        \underline{\OPT}_h &= \max_{\substack{\va \in \Af(\widetilde{\Theta}_h) \\ x \in \mathcal{Y}_h(\va)}} \ulhat_h(x,\va) - C_2 K\epsilon_h \\
        &\ge \ulhat_h(x^\star,\va^\star) - C_2 K\epsilon_h \\
        &\ge \ul(x^\star) -K\epsilon_h (4 +C_2),
    \end{align*}
    where the first inequality holds by the $\max$ operator, and the last inequality by Lemma~\ref{lem:prune_ul_inequalities}.

    To provide an upper bound, let:
    \begin{equation*}
        x^\circ, \va^\circ \in \argmax_{\substack{\va \in \Af(\widetilde{\Theta}_h) \\ x \in \mathcal{Y}_h(\va)}} \ulhat_h(x,\va).
    \end{equation*}
    Then:
    \begin{align*}
        \underline{\OPT}_h &= \max_{\substack{\va \in \Af(\widetilde{\Theta}_h) \\ x \in \mathcal{Y}_h(\va)}} \ulhat_h(x,\va) - C_2 K\epsilon_h \\
        &= \ulhat_h(x^\circ,\va^\circ) - C_2 K\epsilon_h \\
        &\le \ul(x^\circ) +K\epsilon_h - C_2 K\epsilon_h \\
        &\le \OPT -K\epsilon_h(C_2-1),
    \end{align*}
    where the first inequality comes from Lemma~\ref{lem:prune_ul_inequalities} and the last one by the optimality of $\OPT$.
\end{proof}

\begin{restatable}{lemma}{PruneLemmaOne}
    \label{lem:prune_one}
    If Condition~\ref{cond:pruning} is satisfied, then Algorithm~\ref{alg:prune} computes a union of polytopes $\mathcal{X}_{h+1} = \bigcup_{\va \in \Af(\widetilde{\Theta}_h)} \mathcal{X}_{h+1}(\va)$ such that $\mathcal{X}_{h+1}(\va) \subseteq \mathcal{P}(\va)$ and $\ul(x) \ge \OPT -K\epsilon_h(5+C_1+C_2)$ for every $x \in \mathcal{X}_{h+1}$, where $C_1$ and $C_2$ are defined at Line~\ref{line:prune_c1_c2} in Algorithm~\ref{alg:prune}.
    Furthermore, there exists an optimal commitment $x^\star$ and an action profile $\va_h \in \Af(\widetilde{\Theta}_h)$ such that $x^\star \in \mathcal{X}_{h+1}(\va_h)$ and $a^\star_\theta(x^\star) = \va_{h,\theta}$ for every $\theta \in \widetilde{\Theta}_h$
\end{restatable}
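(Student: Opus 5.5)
The plan has three parts, each checked directly against the construction in Algorithm~\ref{alg:prune}: the containment $\mathcal{X}_{h+1}(\va)\subseteq\mathcal{P}(\va)$, the uniform suboptimality bound, and the survival of an optimal commitment.

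\textbf{Containment.} By Line~\ref{line:prune_new_xa}, $\mathcal{X}_{h+1}(\va)=\mathcal{Y}_h(\va)\cap\mathcal{H}_h(\va)$, or it is set to $\varnothing$. Condition~\ref{cond:pruning} says $\mathcal{Y}_h$ satisfies Equation~\eqref{eq:partition_result}, so $\mathcal{Y}_h(\va)\subseteq\mathcal{P}(\va)\cap\mathcal{X}_h(\va_{h-1})\subseteq\mathcal{P}(\va)$. The containment follows immediately.

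\textbf{Suboptimality bound.} Fix $x\in\mathcal{X}_{h+1}(\va)$. Since $x\in\mathcal{Y}_h(\va)$, the first part of Lemma~\ref{lem:prune_ul_inequalities} gives
$$\ul(x)\ge\ulhat_h(x,\va)-K\epsilon_h.$$
This part needs no condition on best responses, thanks to tie-breaking in favor of the leader. Since $x\in\mathcal{H}_h(\va)$, Equation~\eqref{eq:h_a_def} gives
$$\ulhat_h(x,\va)\ge\underline{\OPT}_h-C_1K\epsilon_h.$$
The lower bound in Lemma~\ref{lem:prune_opt_bounds} gives
$$\underline{\OPT}_h\ge\OPT-K\epsilon_h(4+C_2).$$
Chaining the three inequalities yields $\ul(x)\ge\OPT-K\epsilon_h(5+C_1+C_2)$, as claimed.

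\textbf{An optimal commitment survives.} Condition~\ref{cond:pruning}(5) provides $x^\star\in\mathcal{Y}_h(\va^\star)$ with $\va^\star_\theta=a^\star_\theta(x^\star)$ for all $\theta\in\widetilde{\Theta}_h$. It remains to show $x^\star\in\mathcal{H}_h(\va^\star)$.

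\begin{itemize}
\item The second part of Lemma~\ref{lem:prune_ul_inequalities} applies to $x^\star$ and gives $\ulhat_h(x^\star,\va^\star)\ge\OPT-4K\epsilon_h$.
\item The upper bound in Lemma~\ref{lem:prune_opt_bounds} gives $\underline{\OPT}_h\le\OPT-(C_2-1)K\epsilon_h$.
\item Hence $\ulhat_h(x^\star,\va^\star)+C_1K\epsilon_h\ge\underline{\OPT}_h$ whenever $C_1+C_2\ge 5$, which holds for $C_1=3$, $C_2=6$.
\end{itemize}

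Therefore $x^\star\in\mathcal{Y}_h(\va^\star)\cap\mathcal{H}_h(\va^\star)$.

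\textbf{Main obstacle: the zero-volume pruning at Line~\ref{line:prune_empty}.} If $\mathcal{X}_{h+1}(\va^\star)$ has zero volume, it is replaced by $\varnothing$ and $x^\star$ could be lost. I would handle this with the standing assumption that an optimal commitment lies in some $\mathcal{P}(\va^\star)$ of positive volume, choosing $x^\star$ accordingly and using slack. The inequality in the third part is in fact strict, with margin $(C_1+C_2-5)K\epsilon_h=4K\epsilon_h>0$. By continuity of the affine map $\ulhat_h(\cdot,\va^\star)$, a neighborhood of $x^\star$ inside the full-dimensional polytope $\mathcal{Y}_h(\va^\star)$ then also lies in $\mathcal{H}_h(\va^\star)$. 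So the intersection has positive volume and is not discarded. To make this rigorous, Condition~\ref{cond:pruning}(5) must be read as supplying $x^\star$ inside a positive-volume $\mathcal{Y}_h(\va^\star)$; that is automatic since $\mathcal{Y}_h(\va^\star)$ is nonempty and Equation~\eqref{eq:partition_result} makes nonempty sets full-dimensional.
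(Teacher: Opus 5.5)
Your proof is correct. Its first three parts coincide with the paper's argument:
\begin{itemize}
\item the containment via Equation~\eqref{eq:partition_result};
\item the chain $\ul(x)\ge\ulhat_h(x,\va)-K\epsilon_h\ge\underline{\OPT}_h-(C_1+1)K\epsilon_h\ge\OPT-K\epsilon_h(5+C_1+C_2)$;
\item $x^\star\in\mathcal{H}_h(\va^\star)$ once $C_1+C_2\ge 5$.
\end{itemize}

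The only real difference is how you show that $\mathcal{Y}_h(\va^\star)\cap\mathcal{H}_h(\va^\star)$ has positive volume, so that it is not discarded at Line~\ref{line:prune_empty}.
\begin{itemize}
\item The paper takes an interior point $x'$ of $\mathcal{Y}_h(\va_h)$. It applies Lemma~\ref{lem:tech_poly_point} on the segment from $x^\star$ to $x'$ to get a point $x^\circ\neq x^\star$ with $\ulhat_h(x^\circ,\va_h)=\max(\ulhat_h(x',\va_h),\OPT-4K\epsilon_h)$. It then argues $x^\circ\in\interior(\mathcal{Y}_h(\va_h))\cap\mathcal{H}_h(\va_h)$ and concludes with Lemma~\ref{lem:tech_poly_vol}.
\item You instead note that $x^\star$ satisfies the defining inequality of $\mathcal{H}_h(\va^\star)$ with strict slack $4K\epsilon_h$. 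Hence an entire relative neighborhood of $x^\star$ in $\mathcal{Y}_h(\va^\star)$ is kept.
\end{itemize}

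Your version is shorter and uses the slack explicitly. This also covers a case the paper's wording misses: if $\ulhat_h(x',\va_h)<\ulhat_h(x^\star,\va_h)=\OPT-4K\epsilon_h$, the only point of the segment with value $y$ is $x^\star$ itself. The paper's parenthetical appeal to Lemma~\ref{lem:tech_poly_point} does not apply there, because $\ulhat_h$ is not constant on the segment. The fix is precisely to exploit the slack, as you do.

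The one step you leave implicit is that a full-dimensional convex set intersected with a ball around any of its points, boundary points included, has positive volume. You can state it via the homothety $x^\star+\lambda\bigl(\mathcal{Y}_h(\va^\star)-x^\star\bigr)$. For small $\lambda>0$, this set lies in $\mathcal{Y}_h(\va^\star)\cap\mathcal{H}_h(\va^\star)$ by convexity, boundedness, and your slack argument. Its volume is $\lambda^{m-1}\volume(\mathcal{Y}_h(\va^\star))>0$.
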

\begin{proof}
    We observe that the regions $\mathcal{Y}_h(\va)$ are subsets of $\mathcal{P}(\va)$ by Equation~\eqref{eq:partition_result}, which is verified according to Condition~\ref{cond:pruning}.
    Therefore, the regions $\mathcal{X}_{h+1}(\va) \subseteq \mathcal{Y}_h(\va)$ computed at Line~\ref{line:prune_new_xa} are  subsets of $\mathcal{P}(\va)$ themselves.
    To conclude the proof, we analyze the utility of the leader in the commitments $x \in \mathcal{X}_{h+1}$.

    Consider a commitment $x \in \mathcal{X}_{h+1}(\va_h)$ for some $\va_h \in \Af(\widetilde{\Theta}_h)$.
    As it is not empty, by construction $\mathcal{X}_{h+1}(\va_h) = \mathcal{Y}_h(\va_h) \cap \mathcal{H}_h(\va_h)$, where $\mathcal{H}_h(\va_h)$ is the half-space defined by:
    \begin{equation*}
        \ulhat_h(x,\va_h) +C_1\epsilon \ge \underline{\OPT}_h.
    \end{equation*}
    Since $\text{\underline{OPT}}_h \ge \OPT -K\epsilon_h (4 +C_2)$ by Lemma~\ref{lem:prune_opt_bounds}, we have:
    \begin{equation*}
        \ulhat_h(x,\va_h) \ge \OPT -K\epsilon_h (4 +C_1 +C_2).
    \end{equation*}
    Finally, we apply Lemma~\ref{lem:prune_ul_inequalities} to get:
    \begin{equation*}
        \ul(x) \ge \ulhat_h(x,\va_h) -K\epsilon_h \ge \OPT -K\epsilon_h (5 +C_1 +C_2),
    \end{equation*}
    proving the lower bound on the utility of every commitment $x \in \mathcal{X}_{h+1}$.
    
    According to Condition~\ref{cond:pruning}, there exists an optimal commitment $x^\star$ belonging to some $\mathcal{Y}_h(\va_h)$, with $a^\star_\theta(x^\star) = \va_{h,\theta}$ for every $\theta \in \widetilde{\Theta}_h$.
    To conclude the proof, we show that $x^\star \in \mathcal{X}_{h+1}(\va_h)$ and that $\volume(\mathcal{X}_{h+1}(\va_h)) >0 $.
    By construction, Algorithm~\ref{alg:prune} computes:
    \begin{equation*}
        \mathcal{X}_{h+1}(\va_h) = 
        \begin{cases}
            \mathcal{Y}_h(\va_h) \cap \mathcal{H}_h(\va_h) & \text{if its volume is greater than zero} \\
            \emptyset & \text{otherwise},
        \end{cases}
    \end{equation*}
    where $\mathcal{H}_h(\va_h)$ is the half-space computed at Line~\ref{line:prune_half_space}.
    We therefore have to show that $x^\star \in \mathcal{Y}_h(\va_h) \cap \mathcal{H}_h(\va_h)$, \emph{i.e.}, $x^\star \in \mathcal{H}_h(\va_h)$, and that $\mathcal{Y}_h(\va_h) \cap \mathcal{H}_h(\va_h)$ has volume larger than zero.
    Let us observe that by the definition of $\mathcal{H}_h(\va_h)$ (Line~\ref{line:prune_half_space}), a point $x \in \mathcal{Y}_h(\va_h)$ belongs to $\mathcal{Y}_h(\va_h) \cap \mathcal{H}_h(\va_h)$ if $$\ulhat_h(x,\va_h) +C_1 \epsilon_h \ge \underline{\OPT}_h,$$ where $\underline{\OPT}_h \le \OPT -K\epsilon_h(C_2 -1)$ according to Lemma~\ref{lem:prune_opt_bounds}.
    Therefore, for $x$ to belong to $\mathcal{Y}_h(\va_h) \cap \mathcal{H}_h(\va_h)$ is sufficient that:
    \begin{equation}
        \label{eq:prune_proof_point_kept_condition}
        \ulhat_h(x,\va_h) \ge \OPT -K\epsilon_h (C_1 +C_2 -1) 
    \end{equation}

    We first show that $x^\star \in \mathcal{H}_h(\va_h)$.
    By Lemma~\ref{lem:prune_ul_inequalities}, we have $\ulhat(x^\star,\va_h) \ge \OPT -4K\epsilon_h$.
    Therefore, Equation~\eqref{eq:prune_proof_point_kept_condition} is satisfied and $x^\star \in \mathcal{H}_h(\va_h)$ as long as $C_1 +C_2 \ge 5$.

    In order to complete the proof, we have to show that $\mathcal{Y}_h(\va_h) \cap \mathcal{H}_h(\va_h)$ has volume larger than zero.
    To do so, by Lemma~\ref{lem:tech_poly_vol} it is sufficient to find a commitment $x^\circ \in \mathcal{H}_h(\va_h) \cap \interior(\mathcal{Y}_h(\va_h))$.\footnote{We let $\interior(\mathcal{P}) \coloneqq \mathcal{P} \setminus \partial \mathcal{P}$ be the interior of any given polytope $\mathcal{P}$.}
    If $x^\star \in \interior(\mathcal{Y}_h(\va_h))$, this is satisfied.
    Suppose instead that $x^\star \in \partial \mathcal{Y}_h(\va_h)$.
    Recall that since $\mathcal{Y}_h(\va_h)$ is non-empty, it also has non-zero volume (Equation~\eqref{eq:partition_result} holds by Condition~\ref{cond:pruning}).
    Therefore, there exists some $x' \in \interior(\mathcal{Y}_h(\va_h))$.
    Let $$y := \max(\ulhat_h(x',\va_h),\OPT -4K\epsilon_h).$$
    By Lemma~\ref{lem:tech_poly_point}, there exists a point $x^\circ$ in the segment between $x^\star$ and $x'$ with estimated utility $\ulhat(x^\circ,\va_h) = y$.
    This point is also different from $x^\star$ (if $\ulhat(x^\star,\va_h) = y$, by Lemma~\ref{lem:tech_poly_point} we can take any other point on the segment).
    As a result, it must be an interior point of $\mathcal{Y}_h(\va_h)$.
    At the same time, $x^\circ$ satisfies Equation~\eqref{eq:prune_proof_point_kept_condition}, and thus belongs to $\mathcal{H}_h(\va_h)$.
    By Lemma~\ref{lem:tech_poly_vol} the polytope $\mathcal{Y}_h(\va_h) \cap \mathcal{H}_h(\va_h)$ has non-zero volume, concluding the proof.
\end{proof}

\begin{restatable}{lemma}{PruneHyperplaneLemma}
    \label{lem:prune_hyperplanes}
    Suppose that Condition~\ref{cond:partion} is satisfied for two successive epochs $h-1,h$, and that $\widetilde{\Theta}_h = \widetilde{\Theta}_{h-1} \eqqcolon \widetilde{\Theta}$.
    Let $\va \in \Af(\widetilde{\Theta})$ and $\mathcal{H}_{h-1}(\va),\mathcal{H}_h(\va)$ be the half-spaces computed at Line~\ref{line:prune_half_space} when Algorithm~\ref{alg:prune} is executed at epochs $h-1$ and $h$, respectively.
    Then $\mathcal{H}_h(\va) \cap \Delta(\Al) \subseteq \mathcal{H}_{h-1}(\va) \cap \Delta(\Al)$.
\end{restatable}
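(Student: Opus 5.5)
Fix $x \in \mathcal{H}_h(\va) \cap \Delta(\Al)$. Both half-spaces are defined by Equation~\eqref{eq:h_a_def} with the same profile $\va$ over the same type set $\widetilde{\Theta}$, so the claim amounts to
\[
\ulhat_h(x,\va) + C_1 K\epsilon_h \ge \underline{\OPT}_h \implies \ulhat_{h-1}(x,\va) + C_1 K\epsilon_{h-1} \ge \underline{\OPT}_{h-1}.
\]
The plan is to pass through the true ``restricted'' utility $U(x,\va) \coloneqq \sum_{\theta \in \widetilde{\Theta}} \mu_\theta \ul(x,\va_\theta)$, which does not depend on the epoch. Lemma~\ref{lem:prune_simple_inequalities} holds for every $x \in \Delta(\Al)$, not only on $\mathcal{Y}_h(\va)$, which is why the statement is restricted to $\Delta(\Al)$. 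It gives $|\ulhat_k(x,\va) - U(x,\va)| \le K\epsilon_k$ for $k \in \{h-1,h\}$, since Conditions~\ref{cond:partion}/\ref{cond:pruning} supply $\|\widehat\mu_k-\mu\|_\infty\le\epsilon_k$ at both epochs. Hence the hypothesis gives $U(x,\va) \ge \underline{\OPT}_h - (C_1+1)K\epsilon_h$, and it suffices to show
\[
U(x,\va) \ge \underline{\OPT}_{h-1} - (C_1 K\epsilon_{h-1} - K\epsilon_{h-1}).
\]

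Next, bound $\underline{\OPT}_h$ from below and $\underline{\OPT}_{h-1}$ from above in terms of $\OPT$ using Lemma~\ref{lem:prune_opt_bounds}. That lemma gives $\underline{\OPT}_h \ge \OPT - (4+C_2)K\epsilon_h$ and $\underline{\OPT}_{h-1} \le \OPT - (C_2-1)K\epsilon_{h-1}$. Combining these with $\epsilon_{h-1} = 2\epsilon_h$, the required inequality reduces to
\[
(5 + C_1 + C_2)K\epsilon_h \le (2C_2 - 2 + 2C_1 - 2)K\epsilon_h,
\]
that is, $C_1 + C_2 \ge 9$. With $C_1 = 3$ and $C_2 = 6$ this holds with equality, so the constants work, though only barely. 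I will therefore redo the arithmetic carefully, in particular the exact constant in the upper bound of Lemma~\ref{lem:prune_opt_bounds} (it is $C_2-1$ because of the $+K\epsilon$ slack in Lemma~\ref{lem:prune_ul_inequalities}).

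The main obstacle is checking that Lemma~\ref{lem:prune_opt_bounds} really applies at both epochs $h-1$ and $h$. That lemma needs Condition~\ref{cond:pruning}, in particular that an optimal $x^\star$ lies in some $\mathcal{Y}_k(\va^\star)$ with the correct best-response profile. For epoch $h-1$ this is part of the hypothesis. For epoch $h$ I would invoke Lemma~\ref{lem:prune_one} at epoch $h-1$, which keeps $x^\star$ in $\mathcal{X}_h(\va^\star)$, and then Equation~\eqref{eq:partition_result} with $\widetilde{\Theta}_h=\widetilde{\Theta}_{h-1}$, so that $\mathcal{Y}_h(\va^\star)=\mathcal{X}_h(\va^\star)$ contains $x^\star$. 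I also need the condition $\mu_\theta\le 3\epsilon_k$ off $\widetilde\Theta$ at both epochs, which I take from Lemma~\ref{lem:find_types}, reading ``Condition~\ref{cond:partion} for two successive epochs'' as the standing success event. If the margin came out negative, the fallback would be to use the sharper bound $\ul(x^\circ)\le \ulhat+K\epsilon$ through $U$ instead of $\ul$, but I expect the direct chain above to suffice.
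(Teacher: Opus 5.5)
Your proposal is correct and follows the paper's proof: both pass through the epoch-independent restricted utility $\sum_{\theta\in\widetilde\Theta}\mu_\theta \ul(x,\va_\theta)$ using Lemma~\ref{lem:prune_simple_inequalities} at both epochs, bound $\underline{\OPT}_h$ and $\underline{\OPT}_{h-1}$ using Lemma~\ref{lem:prune_opt_bounds}, and use $\epsilon_{h-1}=2\epsilon_h$. The only difference is bookkeeping. The paper separately shows $\ulhat_{h-1}(x,\va)+C_1K\epsilon_{h-1}\ge \ulhat_h(x,\va)+C_1K\epsilon_h$ (needing $C_1\ge 3$) and $\underline{\OPT}_{h-1}\le\underline{\OPT}_h$ (needing $C_2\ge 6$), whereas you merge these into the single, slightly weaker requirement $C_1+C_2\ge 9$, which $C_1=3$, $C_2=6$ meet with equality; your explicit check that Condition~\ref{cond:pruning} holds at both epochs also fills in a step the paper leaves implicit.
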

\begin{proof}
    Fix some $\va \in \Af(\widetilde{\Theta})$ such that $\mathcal{X}_{h+1}(\va) \neq \emptyset$. 
    Observe that we drop the subscript $h$ from $\va$, as $\widetilde{\Theta} = \widetilde{\Theta}_h = \widetilde{\Theta}_{h+1}$.

    In order to prove the statement, we show that:
    \begin{equation}
        \label{eq:prune_proof_subset}
        \Delta(\Al) \cap \mathcal{H}_{h-1}(\va) \cap \mathcal{H}_h(\va) \supseteq \Delta(\Al) \cap \mathcal{H}_h(\va).
    \end{equation}
    Take any $x \in \Delta(\Al) \cap \mathcal{H}_h(\va)$.
    To prove Equation~\eqref{eq:prune_proof_subset}, we need to show that $x \in \mathcal{H}_{h-1}(\va)$, that is:
    \begin{align*}
        U_{h-1} \coloneqq \ulhat_{h-1}(x,\va) +2C_1\epsilon_h \ge \underline{\OPT}_{h-1},
    \end{align*}
    where we considered that $\epsilon_{h-1} = 2\epsilon_h$.
    As $x \in \mathcal{H}_h(\va)$, it holds that:
    \begin{align*}
        U_h \coloneqq \ulhat_h(x,\va) +C_1\epsilon_h \ge \underline{\OPT}_h.
    \end{align*}
    Therefore, we can prove Equation~\eqref{eq:prune_proof_subset} by showing that $U_h \ge U_{h-1}$ and $\underline{\OPT}_h \le \underline{\OPT}_{h-1}$.
    Employing Lemma~\ref{lem:prune_opt_bounds} to epochs $h$ and $h-1$ we get:
    \begin{align*}
        \underline{\OPT}_h &\ge \OPT -K\epsilon_h(4+C_2) \\
        \underline{\OPT}_{h-1} &\le \OPT -K\epsilon_h(2C_2-2).
    \end{align*}
    By taking $C_2 \ge 6$ we get $\underline{\OPT}_h \le \underline{\OPT}_{h-1}$.
    At the same time, we can bound $U_h$ and $U_{h-1}$ by means of Lemma~\ref{lem:prune_simple_inequalities}. Observe that this lemma holds for every $x \in \Delta(\Al)$.
    Let $\alpha \coloneqq \sum_{\theta \in \widetilde{\Theta}}\mu_\theta \ul(x,\va_\theta) \ge 0$.
    We have:
    \begin{align*}
        U_h &= \ulhat_h(x,\va) +C_1\epsilon_h \le \alpha +K\epsilon_h +C_1\epsilon_h = \alpha +K \epsilon_h (C_1 +1) \\
        U_{h-1} &= \ulhat_{h-1}(x,\va) +2C_1\epsilon_h \ge \alpha -2K\epsilon_h +2C_1\epsilon_h = \alpha +K\epsilon_h(2C_1 -2)
    \end{align*}
    where we leverage the fact that $\epsilon_{h-1} = 2\epsilon_h$.
    By taking $C_1 \ge 3$, we have $U_h \le U_{h-1}$.
    As a result, Equation~\eqref{eq:prune_proof_subset} holds when $C_1 \ge 3$ and $C_2 \ge 6$, proving the statement.
\end{proof}

\section{Proofs of the regret bound}
\label{app:regret}
We first define the following event.
\begin{definition}
    We let $\Eclean_h$ be the event under which, for every epoch $h' \le h$, Condition~\ref{cond:partion} is verified when Algorithm~\ref{alg:partition} is executed, and Condition~\ref{cond:pruning} is verified when Algorithm~\ref{alg:prune} is executed.
    Furthermore, Algorithm~\ref{alg:partition} is executed in the number of rounds specified in Lemma~\ref{lem:partition}.
\end{definition}

\begin{restatable}{lemma}{AssumptionsFirstEpoch}
    \label{lem:first_epoch_ok}
    The probability of event $\mathcal{E}_1$ is at least $1-\delta_1 -\delta_2$. 
\end{restatable}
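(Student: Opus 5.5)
The plan is to split $\mathcal{E}_1$ into the success of $\FindTypes$ and the success of $\Partition$ at epoch $h=1$, check that these two successes imply every requirement in the definition of $\mathcal{E}_1$, and then apply a union bound.

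\textbf{Step 1.} At $h=1$ we have $\mathcal{X}_1 = \Delta(\Al)$, $\widetilde{\Theta}_0 = \varnothing$, $\Af(\widetilde{\Theta}_0)=\{\bot\}$ and $\mathcal{X}_1(\bot)=\Delta(\Al)$. By Lemma~\ref{lem:find_types}, with probability at least $1-\delta_1$ the call $\FindTypes(\mathcal{X}_1,\epsilon_1,\delta_1)$ returns $\widehat{\mu}_1$ with $\|\mu-\widehat{\mu}_1\|_\infty\le\epsilon_1$. It also returns $\overline{\Theta}_1$ such that $\mu_\theta\ge\epsilon_1$ on $\overline{\Theta}_1$ and $\mu_\theta\le 3\epsilon_1$ off it. Since $\widetilde{\Theta}_1=\overline{\Theta}_1$, this gives Condition~\ref{cond:partion} at $h=1$. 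The inclusion $\mathcal{X}_1(\bot)\subseteq\mathcal{P}(\bot)=\Delta(\Al)$ is trivial, and the simplex has positive volume. The same event also yields items 2 and 3 of Condition~\ref{cond:pruning}.

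For the requirement $\widetilde{\Theta}_1\neq\emptyset$, I use $\epsilon_1=1/K$. Some type satisfies $\mu_\theta\ge 1/K$, hence $\widehat{\mu}_{1,\theta}\ge 1/K-\epsilon_1=0$. This does \emph{not} reach the threshold $2\epsilon_1$. I therefore expect this to be the main obstacle. I would first try to argue that the emptiness case is harmless or excluded by the intended reading. If that fails, I would note that item 2 is only used to bound the mass of discarded types, and that $\mu_\theta\le 3\epsilon_1$ holds anyway. The empty-$\widetilde{\Theta}_1$ case then makes Prune trivially consistent, since $\mathcal{Y}_1(\bot)=\Delta(\Al)$ and $\ulhat\equiv 0$, with errors bounded by $3K\epsilon_1$.

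\textbf{Step 2.} Conditioned on Step 1, Lemma~\ref{lem:partition} applies to $\Partition$ with $N=m$ facets and coefficients of bit complexity $L$, since $\mathcal{X}_1$ is the simplex. With probability at least $1-\delta_2$, it returns $\mathcal{Y}_1$ satisfying Equation~\eqref{eq:partition_result} within the stated number of rounds. This gives the round-count part of $\mathcal{E}_1$ and the first half of item 5.

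For the second half of item 5, take the optimal $x^\star\in\mathcal{P}(\va^\star)$ with $\volume(\mathcal{P}(\va^\star))>0$ given by the standing assumption. Then $x^\star\in\mathcal{P}(\va^\star|\widetilde{\Theta}_1)$, and this region has positive volume, so $x^\star\in\mathcal{Y}_1(\va^\star|\widetilde{\Theta}_1)$ with $\va^\star_\theta=a^\star_\theta(x^\star)$. Item 4 is exactly Step 1's search-space property.

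\textbf{Step 3.} Apply a union bound over the failure of $\FindTypes$ (probability at most $\delta_1$) and the failure of $\Partition$ given success of $\FindTypes$ (probability at most $\delta_2$). This gives $\Pr(\mathcal{E}_1)\ge 1-\delta_1-\delta_2$.
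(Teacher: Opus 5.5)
Apart from one point, your proposal is the paper's proof. The search-space requirements are trivial at $h=1$ since $\mathcal{X}_1(\bot)=\mathcal{P}(\bot)=\Delta(\Al)$. Lemma~\ref{lem:find_types} gives Condition~\ref{cond:partion} and items~2--3 of Condition~\ref{cond:pruning}, and Lemma~\ref{lem:partition} gives Equation~\eqref{eq:partition_result} and the round count. The membership $x^\star\in\mathcal{Y}_1(\va^\star|\widetilde{\Theta}_1)=\mathcal{P}(\va^\star|\widetilde{\Theta}_1)$ follows from $\volume(\mathcal{P}(\va^\star))>0$, and a union bound finishes. The point you could not close, $\widetilde{\Theta}_1\neq\varnothing$, is exactly where the paper's own argument breaks. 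The paper says some type has $\mu_\theta\ge\epsilon_1=1/K$. As you correctly observe, this only gives $\widehat{\mu}_{1,\theta}\ge 0$, not the threshold $2\epsilon_1$.

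In fact the clause is false with high probability, so the lemma with $\mathcal{E}_1$ as literally defined is false, and neither your argument nor the paper's can establish it. Take $\mu$ uniform with $K\ge 2$. The algorithm's own sample size satisfies $T_{1,1}\ge\frac{K^2}{2}\log(2K/\delta_1)$, so Hoeffding gives $\Pr(\widehat{\mu}_{1,\theta}\ge 2/K)\le e^{-2T_{1,1}/K^2}\le\delta_1/(2K)$ for each $\theta$. Hence $\widetilde{\Theta}_1=\overline{\Theta}_1=\varnothing$ with probability at least $1-\delta_1/2$, and item~2 of Condition~\ref{cond:pruning} fails. For $K=1$ it fails deterministically, since $\widehat{\mu}_{1,\theta}=1<2=2\epsilon_1$.

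So drop the hope of excluding the empty case by an ``intended reading''. What you have is a repair of the statement, and it should be presented as one. Two repairs work.

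\emph{(a) Delete nonemptiness from Condition~\ref{cond:pruning}.} Your check is then correct for this lemma: $\Af(\varnothing)=\{\bot\}$, $\mathcal{Y}_1(\bot)=\Delta(\Al)$, $\ulhat_1\equiv 0$ and $\underline{\OPT}_1=-C_2K\epsilon_1$, so $\mathcal{H}_1(\bot)\supseteq\Delta(\Al)$. Lemmas~\ref{lem:prune_ul_inequalities}--\ref{lem:prune_one} survive because every type then has $\mu_\theta\le 3\epsilon_1$. However, your claim that nonemptiness is used only to bound the mass of discarded types is not accurate. The facet-count bookkeeping silently relies on it, through $1\in\Psi$ and $|\Psi\setminus\{1,H+1\}|\le K-1$, and that part needs a small adjustment.

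\emph{(b) Initialize $\epsilon_1\le 1/(3K)$.} On the event of Lemma~\ref{lem:find_types}, a type with $\mu_\theta\ge 1/K\ge 3\epsilon_1$ then has $\widehat{\mu}_{1,\theta}\ge 2\epsilon_1$. The paper's one-line argument becomes valid, and your Steps~1--3 prove the lemma verbatim; downstream bounds change only by constants.
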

\begin{proof}
    We first observe that $$\Delta(\Al) = \mathcal{X}_1 = \bigcup_{\va \in \Af(\widetilde{\Theta}_0)} \mathcal{X}_1(\va),$$ where $\Af(\widetilde{\Theta}_0) \coloneqq \{\bot\}$ and $\mathcal{X}_1(\bot) = \mathcal{P}(\bot) = \Delta(\Al)$.
    Consequently, the search space satisfies the requirements of Condition~\ref{cond:partion} and Condition~\ref{cond:pruning}.
    The set $\widetilde{\Theta}_1 \subseteq \Theta$ and the estimator $\widehat{\mu}_1$ are computed by Algorithm~\ref{alg:find_types}.
    As of Lemma~\ref{lem:find_types}, with probability at least $1-\delta_1$ both $\widetilde{\Theta}_1$ and $\widehat{\mu}_1$ are computed according to Condition~\ref{cond:partion} and Condition~\ref{cond:pruning}.
    Observe that $\widetilde{\Theta}_1 \neq \emptyset$, as at least one type appears with probability at least $\epsilon_1 = 1/K$.
    Putting all together, Condition~\ref{cond:partion} holds with probability at least $1-\delta_1$.
    
    We can now apply Lemma~\ref{lem:partition} and a union bound, proving that $\mathcal{Y}_1$ satisfies Equation~\eqref{eq:partition_result} and all the properties above hold.
    To conclude the proof, we need to show that there exists some $\va_1 \in \Af(\widetilde{\Theta}_1)$ such that $x^\star \in \mathcal{Y}_1(\va_1)$ and $a^\star_\theta(x^\star) = \va_{1,\theta}$ for every $\theta \in \widetilde{\Theta}_1$, which implies that Condition~\ref{cond:pruning} is satisfied.
    Let $\va^\star \coloneqq (a^\star_\theta(x^\star))_{\theta \in \widetilde{\Theta}_1}$ and $\va_1 \coloneqq \va^\star|\widetilde{\Theta}_1$.
    It suffices to show that $x^\star \in \mathcal{Y}_1(\va_1)$.
    We recall that $\volume(\mathcal{P}(\va^\star))>0$.
    Therefore, $\volume(\mathcal{P}(\va_1))>0$ and $x^\star \in \mathcal{P}(\va_1)$, as $\mathcal{P}(\va_1) \supseteq \mathcal{P}(\va^\star)$.
    By Equation~\eqref{eq:partition_result}, we have:
    \begin{equation*}
        \mathcal{Y}_1(\va_1) = \mathcal{P}(\va_1) \cap \mathcal{X}_1(\va_1|\widetilde{\Theta}_0) = \mathcal{P}(\va_1) \cap \Delta(\Al) = \mathcal{P}(\va_1).
    \end{equation*}
    As a result, $x^\star \in \mathcal{Y}_1(\va_1)$, concluding the proof.
\end{proof}

\begin{restatable}{lemma}{AssumptionsEveryEpoch}
    \label{lem:good_over_epochs}
    With probability at least $1-H(\delta_1 +\delta_2)$, the event $\mathcal{E}_H$ holds, where $H$ is the number of epochs of Algorithm~\ref{alg:main_algo}.
\end{restatable}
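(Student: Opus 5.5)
We argue by induction on the epoch index $h$, showing that $\Pr[\mathcal{E}_h] \ge 1 - h(\delta_1+\delta_2)$ for every $h \le H$. The base case $h=1$ is exactly Lemma~\ref{lem:first_epoch_ok}. For the inductive step, note that $\mathcal{E}_{h+1} = \mathcal{E}_h \cap \mathcal{F}_{h+1}$, where $\mathcal{F}_{h+1}$ is the event that the conditions hold at epoch $h+1$. It then suffices to show $\Pr[\mathcal{F}_{h+1}^c \cap \mathcal{E}_h] \le \delta_1+\delta_2$. A union bound then gives $\Pr[\mathcal{E}_{h+1}^c] \le h(\delta_1+\delta_2) + (\delta_1+\delta_2)$.

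\textbf{Search-space requirements at epoch $h+1$.} Fix the history and assume $\mathcal{E}_h$. Since Condition~\ref{cond:pruning} held at epoch $h$, Lemma~\ref{lem:prune_one} gives the following:
\begin{itemize}
\item $\mathcal{X}_{h+1} = \bigcup_{\va_h} \mathcal{X}_{h+1}(\va_h)$ with $\mathcal{X}_{h+1}(\va_h) \subseteq \mathcal{P}(\va_h)$;
\item each piece is empty or has positive volume, by Line~\ref{line:prune_empty};
\item some $\mathcal{X}_{h+1}(\va_h)$ contains an optimal $x^\star$ with $\va_{h,\theta} = a^\star_\theta(x^\star)$ for all $\theta \in \widetilde{\Theta}_h$.
\end{itemize}
This verifies item 4 of Condition~\ref{cond:pruning} and the search-space part of Condition~\ref{cond:partion}. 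It also gives $\mathcal{X}_{h+1} \neq \varnothing$, so Algorithm~\ref{alg:find_types} is well defined.

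\textbf{The two fresh random steps.} Algorithm~\ref{alg:find_types} uses fresh samples, so conditionally on the past, Lemma~\ref{lem:find_types} succeeds with probability at least $1-\delta_1$. Success yields items 1--3 of Condition~\ref{cond:pruning}:
\begin{itemize}
\item $\|\widehat{\mu}_{h+1}-\mu\|_\infty \le \epsilon_{h+1}$;
\item $\widetilde{\Theta}_{h+1} \supseteq \widetilde{\Theta}_h \neq \varnothing$;
\item every type $\theta$ outside $\overline{\Theta}_{h+1}$, hence outside $\widetilde{\Theta}_{h+1}$, has $\mu_\theta \le 3\epsilon_{h+1}$;
\item new types in $\widetilde{\Theta}_{h+1}\setminus\widetilde{\Theta}_h$ have $\mu_\theta \ge \epsilon_{h+1}$, as Condition~\ref{cond:partion} requires.
\end{itemize}
Next we invoke Lemma~\ref{lem:partition}. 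This needs a bound $N$ on the facets and a bound $B$ on the bit complexity of the pieces $\mathcal{X}_{h+1}(\va_h)$; we take these from Lemma~\ref{lem:bit_complexity_and_facets} and treat them as available. Conditionally on the past, Lemma~\ref{lem:partition} then returns $\mathcal{Y}_{h+1}$ satisfying Equation~\eqref{eq:partition_result} within the stated number of rounds, with probability at least $1-\delta_2$.

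\textbf{Remaining requirement: the optimum survives in $\mathcal{Y}_{h+1}$.} We must show $x^\star \in \mathcal{Y}_{h+1}(\va_{h+1})$ for the profile $\va_{h+1,\theta} = a^\star_\theta(x^\star)$, $\theta \in \widetilde{\Theta}_{h+1}$. Its restriction to $\widetilde{\Theta}_h$ is the $\va_h$ above, so $x^\star \in \mathcal{X}_{h+1}(\va_h)$. Since $x^\star$ lies in $\mathcal{P}(\va^\star)$ with $a^\star_\theta(x^\star)=\va^\star_\theta$, we get $x^\star \in \mathcal{P}(\va_{h+1})$. Hence $x^\star \in \mathcal{P}(\va_{h+1}) \cap \mathcal{X}_{h+1}(\va_h)$.

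\textbf{Main obstacle: positive volume.} It remains to show this intersection has positive volume, since otherwise Equation~\eqref{eq:partition_result} sets it to $\varnothing$. The plan is to pick a point $x'$ in the interior of $\mathcal{P}(\va^\star)$ close to $x^\star$. We then take a small ball around a point of the segment between $x^\star$ and $x'$ and argue it stays in both sets. Full-dimensionality is guaranteed: $\mathcal{P}(\va^\star) \subseteq \mathcal{P}(\va_{h+1})$ has positive volume, and $\mathcal{X}_{h+1}(\va_h)$ is a positive-volume polytope containing $x^\star$.

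The difficulty is that two full-dimensional convex sets sharing a boundary point need not overlap in volume. We therefore use the structure of $\mathcal{X}_{h+1}(\va_h)$: it is cut from $\mathcal{P}(\va_h) \supseteq \mathcal{P}(\va^\star)$ only by the half-spaces $\mathcal{H}_{h'}(\cdot)$. We expect these half-spaces to contain a neighbourhood of $x^\star$ intersected with $\mathcal{P}(\va^\star)$. The reason is that on $\mathcal{P}(\va^\star)$ the estimated utility near $x^\star$ stays within slack of $\OPT$, by the same segment argument used in Lemma~\ref{lem:prune_one} via Lemmas~\ref{lem:tech_poly_point} and~\ref{lem:tech_poly_vol}.

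Given this step, the union bound completes the induction. Taking $h = H$ yields the claimed probability $1-H(\delta_1+\delta_2)$.
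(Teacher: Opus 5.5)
Your proposal is correct and follows the same route as the paper's proof: induction over epochs, success of \texttt{Find-Types} and \texttt{Find-Partition} on fresh samples combined by a union bound, and Lemma~\ref{lem:prune_one} from the previous epoch to carry the search-space structure and the optimal commitment forward. The positive-volume step you flag is one the paper only asserts (``one can verify''), and your sketch closes it once you check that the slack is strict: by Lemmas~\ref{lem:prune_ul_inequalities} and~\ref{lem:prune_opt_bounds}, $x^\star$ satisfies the defining inequality of each $\mathcal{H}_{h'}(\va^\star|\widetilde{\Theta}_{h'})$, $h'\le h$, with margin $(C_1+C_2-5)K\epsilon_{h'}=4K\epsilon_{h'}>0$, so a whole neighbourhood of $x^\star$ lies in all these half-spaces and meets the positive-volume polytope $\mathcal{P}(\va^\star)\subseteq\mathcal{P}(\va_{h+1})$ in positive volume.
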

\begin{proof}
    We prove by induction that for every epoch $h \in \{1,\dots,H\}$, it holds $$\mathbb{P}(\mathcal{E}_h\mid\mathcal{E}_{h-1}) \ge 1-\delta_1 -\delta_2,$$ where we define $\mathbb{P}(\mathcal{E}_0) \coloneqq 0$.

    The base step $h=1$ is proved by Lemma~\ref{lem:first_epoch_ok}.
    Now suppose that we are under the event $\mathcal{E}_{h-1}$ for any $2 \le h \le H$. 
    For the sake of explanation, suppose that the epoch is completed without reaching $T$ rounds. 
    
    We now show that with probability at least $1-\delta_1$, the set $\widetilde{\Theta}_h$ and the estimator $\widehat{\mu}_h$ satisfy the constraints imposed by Condition~\ref{cond:partion} and Condition~\ref{cond:pruning}.
    By Lemma~\ref{lem:find_types}, the estimator satisfies $\|\mu -\widehat{\mu}_h \|_\infty \le \epsilon_h$, and the set of types $\Bar{\Theta}_h$ is such that $\mu_\theta \ge \epsilon_h$ for each $\theta \in \Bar{\Theta}_h$ and $\mu_\theta \le 3\epsilon_h$ for each $\theta \in \Theta \setminus \Bar{\Theta}_h$.
    Algorithm~\ref{alg:main_algo} computes $\widetilde{\Theta}_h \coloneqq \widetilde{\Theta}_{h-1} \cap \Bar{\Theta}_h$.
    By the inductive hypothesis, $\mu_\theta \ge \epsilon_{h-1} \ge \epsilon_h$ for every $\theta \in \widetilde{\Theta}_{h-1}$, hence $\mu_\theta \ge \epsilon_h$ for every $\theta \in \widetilde{\Theta}_h$.
    Moreover, every $\theta \notin \widetilde{\Theta}_h$ appears with probability at most $3\epsilon_h$, as $\theta \notin \Bar{\Theta}_h$ by construction.
    Finally, $\widetilde{\Theta}_h \supset \widetilde{\Theta}_{h-1}$ is non-empty by the inductive hypothesis.
    Therefore, both the set of types and the estimator are computed correctly with probability at least $1-\delta_1$.
    By Lemma~\ref{lem:partition} and a union bound, we also have that $\mathcal{Y}$ satisfies Equation~\eqref{eq:partition_result} with probability at lest $1-\delta_1-\delta_2$.

    We now observe that by Lemma~\ref{lem:prune_one} applied to the previous epoch, the search space $\mathcal{X}_h$ is a union of polytopes as required by Condition~\ref{cond:partion} and Condition~\ref{cond:pruning}.
    Furthermore, combining Equation~\eqref{eq:partition_result} with the result provided by Lemma~\ref{lem:prune_one}, one can verify that an optimal commitment belongs $\mathcal{X}_h$ as required by Condition~\ref{cond:pruning}.
    As a result, with probability $1-\delta_1-\delta_2$ both properties hold before the respective algorithms are executed.
    Hence, $\mathbb{P}(\mathcal{E}_h\mid\mathcal{E}_{h-1}) \ge 1-\delta_1 -\delta_2$ for every epoch $h \in \{1,\dots,H\}$.
    A recursive argument completes the proof.
\end{proof}

\NumberEpochsLemma*
\begin{proof}
    We observe that to complete epoch $h \in \{1,\dots,H\}$, Algorithm~\ref{alg:main_algo} employs at least $1/\epsilon_h^2$ rounds (see Algorithm~\ref{alg:find_types}).
    Since $\epsilon_h = 1/(K2^{h-1})$, we have that:
    \begin{align*}
        \sum_{h=1}^{H-1} \frac{1}{\epsilon_h^2} = \sum_{h=1}^{H-1} (K2^{h-1})^2 \le T,
    \end{align*}
    as the rounds to complete $H-1$ epochs cannot exceed $T$.
    We do not count epoch $H$ as that the algorithm may 
    We then observe that:
    \begin{align*}
        T \ge \sum_{h=1}^{H-1} (K2^{h-1})^2 = K^2 \sum_{h=1}^{H-1} 4^{h-1} = K^2 \sum_{h=0}^{H-2} 4^h = K^2 \frac{1-4^{H-1}}{1-4} = K^2 \frac{4^{H-1}-1}{3}.
    \end{align*}
    As a result, we have:
    \begin{align*}
        H \le {\log_4\left( \frac{3T}{K^2} +1\right)} +1,
    \end{align*}
    concluding the proof.
\end{proof}

To bound the regret of Algorithm, we need to upper bound the number of facets of the polytopes $\mathcal{X}_h(\va_{h-1})$, in order to apply Lemma~\ref{lem:partition}.
We define:
\begin{equation*}
    \Psi \coloneqq \{\bar h \in [H] \mid \widetilde{\Theta}_{\bar h} \neq \widetilde{\Theta}_{\bar h-1} \} \cup \{H+1\},
\end{equation*}
where $H$ is the number of epochs.
In the following, for the sake of the analysis, we will assume that the last epoch is completed (otherwise it is sufficient to consider a fictitious $\mathcal{X}_{H+1}$ computed as if the algorithm did not terminate after $T$ rounds).
We also observe (see Algorithm~\ref{alg:prune} Line~\ref{line:prune_new_xa}) that under the event $\mathcal{E}_H$, for every $h \in [H]$ and $\va_{h-1} \in \Af(\widetilde{\Theta}_{h-1})$ such that $\mathcal{X}_h(\va_{h-1}) \neq \emptyset$, we have:
 \begin{equation}
    \label{eq:proof_number_facets_x}
    \mathcal{X}_h(\va_{h-1}) = \mathcal{Y}_{h-1}(\va_{h-1}) \cap \mathcal{H}_{h-1}(\va_{h-1}),
\end{equation}
where $\mathcal{H}_{h-1}(\va_{h-1})$ is a half-space.

\begin{restatable}{lemma}{EpochSequenceLemma}
    \label{lem:hyperplanes_successive_psi}
    Let $\bar h, \bar h'$ be two successive values in $\Psi$.
    For every $h \in \{\bar h+1, \dots,\bar h'\}$ and $\va \in \Af(\widetilde{\Theta}_{\bar h})=\Af(\widetilde{\Theta}_{h-1})$, it holds that $\mathcal{X}_h(\va) = \mathcal{Y}_{\bar h}(\va) \cap \mathcal{H}_{h-1}(\va)$, unless it is empty.
\end{restatable}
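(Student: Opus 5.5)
We argue by induction on $h \in \{\bar h+1,\dots,\bar h'\}$, working under the event $\mathcal{E}_H$. Since $\bar h$ and $\bar h'$ are consecutive in $\Psi$, no epoch strictly between them adds types. Hence $\widetilde{\Theta}_{\bar h} = \widetilde{\Theta}_{\bar h+1} = \dots = \widetilde{\Theta}_{\bar h'-1} \eqqcolon \widetilde{\Theta}$, which justifies the identification $\Af(\widetilde{\Theta}_{\bar h}) = \Af(\widetilde{\Theta}_{h-1})$ for all $h$ in the range. The base case $h=\bar h+1$ is exactly Equation~\eqref{eq:proof_number_facets_x}: $\mathcal{X}_{\bar h+1}(\va) = \mathcal{Y}_{\bar h}(\va)\cap\mathcal{H}_{\bar h}(\va)$ whenever it is non-empty.

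For the inductive step, fix $h$ with $\bar h+2 \le h \le \bar h'$ and assume $\mathcal{X}_{h-1}(\va) = \mathcal{Y}_{\bar h}(\va)\cap\mathcal{H}_{h-2}(\va)$ whenever it is non-empty. By Equation~\eqref{eq:proof_number_facets_x}, if $\mathcal{X}_h(\va)\ne\emptyset$ then $\mathcal{X}_h(\va) = \mathcal{Y}_{h-1}(\va)\cap\mathcal{H}_{h-1}(\va)$. We now identify $\mathcal{Y}_{h-1}(\va)$. Because $\widetilde{\Theta}_{h-1}=\widetilde{\Theta}_{h-2}$, the profile $\va_{h-1}$ restricts to itself, so Equation~\eqref{eq:partition_result} gives $\mathcal{Y}_{h-1}(\va) = \mathcal{P}(\va)\cap\mathcal{X}_{h-1}(\va)$ (or $\emptyset$). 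Lemma~\ref{lem:prune_one} gives $\mathcal{X}_{h-1}(\va)\subseteq\mathcal{P}(\va)$, so $\mathcal{Y}_{h-1}(\va)=\mathcal{X}_{h-1}(\va)$ when it is non-empty. Combining with the inductive hypothesis,
\[
\mathcal{X}_h(\va) = \mathcal{Y}_{\bar h}(\va)\cap\mathcal{H}_{h-2}(\va)\cap\mathcal{H}_{h-1}(\va).
\]
In the degenerate cases where $\mathcal{X}_{h-1}(\va)$ or $\mathcal{Y}_{h-1}(\va)$ is empty, $\mathcal{X}_h(\va)$ is also empty, and the statement's "unless it is empty" clause covers them.

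The remaining step is to remove $\mathcal{H}_{h-2}(\va)$. For this we apply Lemma~\ref{lem:prune_hyperplanes} to the consecutive epochs $h-2, h-1$. Both epochs use the same type set $\widetilde{\Theta}_{h-1}=\widetilde{\Theta}_{h-2}$, and under $\mathcal{E}_H$ Condition~\ref{cond:partion} and Condition~\ref{cond:pruning} hold at both. The lemma gives $\mathcal{H}_{h-1}(\va)\cap\Delta(\Al)\subseteq\mathcal{H}_{h-2}(\va)\cap\Delta(\Al)$. Since $\mathcal{Y}_{\bar h}(\va)\subseteq\Delta(\Al)$, the factor $\mathcal{H}_{h-2}(\va)$ is redundant, so $\mathcal{X}_h(\va)=\mathcal{Y}_{\bar h}(\va)\cap\mathcal{H}_{h-1}(\va)$, which closes the induction.

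The main obstacle is matching indices with the hypothesis of Lemma~\ref{lem:prune_hyperplanes}: it compares the half-spaces $\mathcal{H}$ built by $\Prune$ at two epochs that share the same type set. One must check that $\widetilde{\Theta}_{h-2}=\widetilde{\Theta}_{h-1}$ for every $h$ in the inductive range, which holds because $h-1 \le \bar h'-1$ and $h-2\ge\bar h$. One must also check that the zero-volume conventions in Equation~\eqref{eq:partition_result} and Line~\ref{line:prune_empty} only ever turn sets into $\emptyset$, never into something the formula does not describe, so that those cases are absorbed by the "unless empty" clause.
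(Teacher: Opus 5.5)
Your proposal is correct and follows essentially the same route as the paper. Both arguments run an induction on $h$ starting from the base case $h=\bar h+1$ given by Equation~\eqref{eq:proof_number_facets_x}. Both identify $\mathcal{Y}_{h-1}(\va)=\mathcal{X}_{h-1}(\va)$ using the unchanged type set and $\mathcal{X}_{h-1}(\va)\subseteq\mathcal{P}(\va)$. Both then drop the redundant half-space $\mathcal{H}_{h-2}(\va)$ via Lemma~\ref{lem:prune_hyperplanes} together with $\mathcal{Y}_{\bar h}(\va)\subseteq\Delta(\Al)$. Your explicit index check that $h-1\in\{\bar h+1,\dots,\bar h'-1\}$, so that Lemma~\ref{lem:prune_hyperplanes} applies to epochs $h-2,h-1$, and your handling of the empty and zero-volume cases are somewhat more careful than the paper's write-up, but the argument is the same.
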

\begin{proof}
    The statement is trivially satisfied when $\bar h +1 = \bar h' = h$ (see Line~\ref{line:prune_new_xa} Algorithm~\ref{alg:prune}).
    We therefore assume that $$\bar h' \ge \bar h +2.$$
    Let $\va \in \Af(\widetilde{\Theta}_{\bar h})$.
    For every $h \in \{\bar h+1, \dots,\bar h'\}$ such that $\mathcal{X}_h(\va) \neq \emptyset$, Algorithm~\ref{alg:prune} at Line~\ref{line:prune_new_xa} computes:
    \begin{equation}
        \label{eq:proof_number_facets_x_comp}
        \mathcal{X}_h(\va) = \mathcal{Y}_{h-1}(\va) \cap \mathcal{H}_{h-1}(\va).
    \end{equation}
    Furthermore, for every $h \in \{\bar h+1,\dots,\bar h'-1\}$, we can employ Lemma~\ref{lem:prune_hyperplanes} as follows:
    \begin{equation}
        \label{eq:proof_number_facets_hyperplanes}
        \mathcal{H}_h(\va) \cap \Delta(\Al) \subseteq \mathcal{H}_{h-1}(\va) \cap \Delta(\Al).
    \end{equation} 
    If $\mathcal{X}_h(\va) \neq \emptyset$, we can leverage Equation~\eqref{eq:partition_result} to get:
    \begin{equation}
        \label{eq:proof_number_facets_y_is_x}
        \mathcal{Y}_h(\va) = \mathcal{P}(\va) \cap \mathcal{X}_h(\va|\widetilde{\Theta}_{h-1}) = \mathcal{P}(\va) \cap \mathcal{X}_h(\va) = \mathcal{X}_h(\va),
    \end{equation}
    where the second equality exploit the fact that $\widetilde{\Theta}_{h-1} = \widetilde{\Theta}_h$, and the last equality holds because $\mathcal{X}_h(\va) \subseteq \mathcal{P}(\va)$ under $\Eclean_h$.

    We prove by induction that for every $h \in \{\bar h+1,\dots ,\bar h'\}$ and $\va \in \widetilde{\Theta}_{\bar h}$ such that $\mathcal{X}_h(\va) \neq \emptyset$, it holds that $\mathcal{X}_h(\va) = \mathcal{Y}_{\bar h} \cap \mathcal{H}_{h-1}(\va)$.
    The base step is $h = \bar h +1$.
    Here we have:
    \begin{align*}
        \mathcal{X}_h(\va) = \mathcal{Y}_{h-1}(\va) \cap \mathcal{H}_{h-1}(\va) = \mathcal{Y}_{\bar h}(\va) \cap \mathcal{H}_{h-1}(\va),
    \end{align*}
    where we employed Equation~\eqref{eq:proof_number_facets_x_comp} and the equality $h-1 = \bar h+1 -1 = \bar h$.
    Now consider any $h \in \{\bar h +2, \dots,\bar h'\}$, and assume that $\mathcal{X}_{h-1}(\va) = \mathcal{Y}_{\bar h}(\va) \cap \mathcal{H}_{h-2}(\va)$. 
    We have:
    \begin{align*}
        \mathcal{X}_h(\va) &= \mathcal{Y}_{h-1}(\va) \cap \mathcal{H}_{h-1}(\va) \\
        &= \mathcal{X}_{h-1}(\va) \cap \mathcal{H}_{h-1}(\va) \\
        &= \mathcal{Y}_{\bar h}(\va) \cap \mathcal{H}_{h-2}(\va) \cap \mathcal{H}_{h-1}(\va) \\
        &=\mathcal{Y}_{\bar h}(\va) \cap \left(\mathcal{H}_{h-2}(\va) \cap \Delta(\Al) \right) \cap \left( \mathcal{H}_{h-1}(\va) \cap \Delta(\Al) \right) \\
        &= \mathcal{Y}_{\bar h}(\va) \cap \mathcal{H}_{h-1}(\va) \cap \Delta(\Al) = \mathcal{Y}_{\bar h}(\va) \cap \mathcal{H}_{h-1}(\va),
    \end{align*}
    where we employed Equation~\eqref{eq:proof_number_facets_x_comp}, Equation~\eqref{eq:proof_number_facets_y_is_x}, the inductive step, the identity $\mathcal{Y}_{\bar h}(\va) \subseteq \Delta(\Al)$, and Equation~\eqref{eq:proof_number_facets_hyperplanes}.
    Therefore the inductive step holds for $h$, concluding the proof.
\end{proof}

\begin{restatable}{lemma}{EpochHyperplaensAnyLemma}
    \label{lem:epoch_hyperplanes_any}
    Let $h \in \{2,\dots,H\}$ and $\bar h= \max \{h' \in \Psi \mid h' < h\}$.
    Under the event $\Eclean_H$, every non-empty polytope $\mathcal{X}_h(\va_{h-1})$ composing $\mathcal{X}_h$ is defined as:
    \begin{equation*}
        \mathcal{X}_{h}(\va_{h-1}) = \mathcal{P}(\va_{\bar h}) \cap \bigcap_{\bar h' \in \Psi : 1 < \bar h' < \bar h} \mathcal{H}_{\bar h'-1}(\va_{\bar h'}) \cap \mathcal{H}_{h-1}(\va_{h-1}),
    \end{equation*}
    where $\va_{\bar h'} = \va_{h-1} | \widetilde{\Theta}_{\bar h'}$.
\end{restatable}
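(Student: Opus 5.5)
Under $\Eclean_H$, I would argue by strong induction on the elements of $\Psi$. The idea is to unroll the recursion $\mathcal{X}_h(\va_{h-1}) = \mathcal{Y}_{h-1}(\va_{h-1}) \cap \mathcal{H}_{h-1}(\va_{h-1})$ given by Equation~\eqref{eq:proof_number_facets_x}, alternating between two facts. Within a block of epochs where $\widetilde{\Theta}$ does not change, Lemma~\ref{lem:hyperplanes_successive_psi} collapses the stacked half-spaces into the most recent one. At a block boundary $\bar h \in \Psi$ (where new types appear), Equation~\eqref{eq:partition_result} rewrites $\mathcal{Y}_{\bar h}(\va_{\bar h}) = \mathcal{P}(\va_{\bar h}) \cap \mathcal{X}_{\bar h}(\va_{\bar h - 1})$ whenever it is non-empty. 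Throughout, I use that a non-empty $\mathcal{X}_h(\va_{h-1})$ forces every ancestor region along the chain to be non-empty, since each is a subset of the previous one.

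Concretely, fix $h$ and $\bar h = \max\{h' \in \Psi \mid h' < h\}$, so that $\widetilde{\Theta}_{h-1} = \widetilde{\Theta}_{\bar h}$ and $h \le \bar h'$, where $\bar h'$ is the successor of $\bar h$ in $\Psi$. Lemma~\ref{lem:hyperplanes_successive_psi} then gives $\mathcal{X}_h(\va_{h-1}) = \mathcal{Y}_{\bar h}(\va_{h-1}) \cap \mathcal{H}_{h-1}(\va_{h-1})$. By Equation~\eqref{eq:partition_result},
\[
\mathcal{Y}_{\bar h}(\va_{\bar h}) = \mathcal{P}(\va_{\bar h}) \cap \mathcal{X}_{\bar h}(\va_{\bar h-1}).
\]
If $\bar h = 1$, then $\mathcal{X}_1(\bot) = \Delta(\Al)$ and the index set of the big intersection is empty, so we are done. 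Otherwise I apply the same reasoning to $\mathcal{X}_{\bar h}(\va_{\bar h-1})$, whose predecessor in $\Psi$ is some $\hat h < \bar h$. This yields
\[
\mathcal{X}_{\bar h}(\va_{\bar h-1}) = \mathcal{P}(\va_{\hat h}) \cap \bigcap_{\bar h'' \in \Psi,\, 1<\bar h''<\hat h} \mathcal{H}_{\bar h''-1}(\va_{\bar h''}) \cap \mathcal{H}_{\bar h-1}(\va_{\bar h-1}),
\]
where $\va_{\bar h-1} = \va_{\hat h}$ because $\widetilde{\Theta}_{\bar h-1} = \widetilde{\Theta}_{\hat h}$.

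Substituting, and noting that $\mathcal{P}(\va_{\bar h}) \subseteq \mathcal{P}(\va_{\hat h})$ (the profile $\va_{\hat h}$ is a restriction of $\va_{\bar h}$, so the latter intersects more best-response regions), the outer $\mathcal{P}$ term absorbs the inner one. The new half-space $\mathcal{H}_{\bar h-1}(\va_{\bar h-1})$ is exactly the term indexed by $\bar h$ in the intersection. I need to check that the index convention matches the statement: the intersection there runs over $\bar h' < \bar h$, yet a term at index $\bar h$ itself appears. So I would verify carefully whether the intended range is $1 < \bar h' \le \bar h$, and carry the induction with whichever convention makes the recursion close. I also need all restrictions to be consistent, i.e. $\va_{\bar h''} = \va_{h-1}|\widetilde{\Theta}_{\bar h''}$; this holds because the sets $\widetilde{\Theta}$ are nested.

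I expect the main obstacle to be this bookkeeping: aligning the indices of $\Psi$, of the half-spaces $\mathcal{H}_{\cdot - 1}$, and of the restricted profiles, together with justifying non-emptiness so that the ``unless empty'' clauses of Lemma~\ref{lem:hyperplanes_successive_psi} and Equation~\eqref{eq:partition_result} never trigger along the chain. Non-emptiness follows by monotonicity: $\mathcal{X}_h(\va_{h-1}) \subseteq \mathcal{Y}_{\bar h}(\va_{\bar h}) \subseteq \mathcal{X}_{\bar h}(\va_{\bar h-1})$, so if the outer region is non-empty, every region along the chain is non-empty.
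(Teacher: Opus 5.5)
Your argument follows the paper's proof. You unroll $\mathcal{X}_h(\va_{h-1})=\mathcal{Y}_{\bar h}(\va_{\bar h})\cap\mathcal{H}_{h-1}(\va_{h-1})$ through Lemma~\ref{lem:hyperplanes_successive_psi} and Equation~\eqref{eq:partition_result} down to $\mathcal{Y}_1(\va_1)=\mathcal{P}(\va_1)$, and absorb the nested best-response polytopes along the way. Your inclusion $\mathcal{P}(\va_{\bar h})\subseteq\mathcal{P}(\va_{\hat h})$ is the correct direction; the paper writes it reversed. Your backward propagation of non-emptiness supplies a step the paper leaves implicit.

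Your suspicion about the indices is correct, and you should commit to it rather than leave it open. The paper's own recursion step gives $\mathcal{Y}_{\bar h}(\va_{\bar h})=\mathcal{P}(\va_{\bar h})\cap\mathcal{Y}_{\hat h}(\va_{\hat h})\cap\mathcal{H}_{\bar h-1}(\va_{\hat h})$, where $\hat h$ is the predecessor of $\bar h$ in $\Psi$, and then drops that last term when it unrolls. So the induction only closes with the range $1<\bar h'\le\bar h$, reading each half-space as evaluated at $\va_{\bar h'-1}$, the restriction of $\va_{h-1}$ to $\widetilde{\Theta}_{\bar h'-1}$. Nothing in the paper's lemmas removes $\mathcal{H}_{\bar h-1}(\va_{\bar h-1})$: Lemma~\ref{lem:prune_hyperplanes} needs the type set to be unchanged, and $\widetilde{\Theta}_{\bar h}\neq\widetilde{\Theta}_{\bar h-1}$.

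The correction is harmless downstream. The corrected index set still lies in $\Psi\setminus\{1,H+1\}$, so Lemma~\ref{lem:bit_complexity_and_facets} still counts at most $K$ pruning half-spaces, and the bound $N\le Kn+m+K$ is unchanged.
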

\begin{proof}
    In the following we let $\bar h^\diamond \coloneqq \min{\Psi \setminus \{1\}}$.
    
    Let $\bar h \in \Psi \setminus \{1,H+1\}$ and $\va_{\bar h} \in \Af(\widetilde{\Theta}_{\bar h})$ such that $\mathcal{X}_{\bar h}(\va_{\bar h}) \neq \emptyset$.
    We also let $\bar h'$ the previous value in $\Psi$. 
    Thanks to Equation~\eqref{eq:partition_result} and Lemma~\ref{lem:hyperplanes_successive_psi}, we have:
    \begin{equation*}
        \mathcal{Y}_{\bar h}(\va_{\bar h}) = \mathcal{P}(\va_{\bar h}) \cap \mathcal{X}_{\bar h}(\va_{\bar h -1}) = \mathcal{P}(\va_{\bar h}) \cap \mathcal{Y}_{\bar h'}(\va_{\bar h'}) \cap \mathcal{H}_{\bar h-1}(\va_{\bar h'}).
    \end{equation*}
    Recursively applying this step to  $\mathcal{Y}_{\bar h'}(\va_{\bar h'})$ till reaching epoch $1$ and considering that $\mathcal{P}(\va_{\bar h''}) \subseteq \mathcal{P}(\va_{\bar h})$ for $\bar h'' \le \bar h$ and $\mathcal{Y}_1(\va_1)=\mathcal{P}(\va_1)$, we get:
    \begin{equation}
        \label{eq:proof_y_of_bar_h}
        \mathcal{Y}_{\bar h}(\va_{\bar h}) = \mathcal{P}(\va_{\bar h}) \cap \bigcap_{\bar h' \in \Psi : 1 < \bar h' < \bar h} \mathcal{H}_{\bar h'-1}(a_{\bar h'}).
    \end{equation} 
    %
    Now consider an epoch $h \in \{\bar h^\diamond +1,\dots,H\}$ and an action profile $\va_{h-1} \in \widetilde{\Theta}_{h-1}$ such that $\mathcal{X}_h(\va_{h-1}) \neq \emptyset$.
    There exists $\bar h = \max\{\bar h' \in \Psi \mid \bar h' < h\}$ different from one.
    We can therefore employ Lemma~\ref{lem:hyperplanes_successive_psi} and Equation~\eqref{eq:proof_y_of_bar_h} to get:
    \begin{align*}
        \mathcal{X}_h(\va_{h-1}) &= \mathcal{Y}_{\bar h}(\va_{\bar h}) \cap \mathcal{H}_{h-1}(\va_{h-1}) \\
        & = \mathcal{P}(\va_{\bar h}) \cap \bigcap_{\bar h' \in \Psi : 1 < \bar h' < \bar h} \mathcal{H}_{\bar h'-1}(\va_{\bar h'}) \cap \mathcal{H}_{h-1}(\va_h).
    \end{align*}
    Consider instead an epoch $h \in \{2, \dots,\bar h^\diamond\}$ and let $\bar h = 1$.
    By employing Lemma~\ref{lem:hyperplanes_successive_psi} and Equation~\eqref{eq:partition_result} we get:
    \begin{align*}
        \mathcal{X}_h(\va_{h-1}) &= \mathcal{Y}_{1}(\va_{h-1}) \cap \mathcal{H}_{h-1}(\va_{h-1}) \\
        &= \mathcal{P}(\va_{h-1}) \cap \mathcal{H}_{h-1}(\va_{h-1}),
    \end{align*}
    concluding the proof.
\end{proof}

\begin{restatable}{lemma}{NumberFacetsLemma}
    \label{lem:bit_complexity_and_facets}
    Under the event $\mathcal{E}_H$, whenever Algorithm~\ref{alg:partition} is executed, every $\mathcal{X}_h(\va_{h-1})$ has at most $N \le Kn +m +K$ facets.
    Furthermore, the coefficients of the hyperplanes defining these facets can be encoded by at most $\mathcal{O}(L+\log(T)+\log(K)+B_{\delta})$ bits, where $B_{\delta}$ is the bit complexity of the parameter $\delta$ given in input to Algorithm~\ref{alg:main_algo}.
\end{restatable}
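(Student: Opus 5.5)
The plan is to read off both claims from the explicit description of the polytopes given by Lemma~\ref{lem:epoch_hyperplanes_any}, which holds under $\Eclean_H$. For $h=1$ the only polytope is $\mathcal{X}_1(\bot)=\Delta(\Al)$, with $m$ facets and coefficients in $\{0,1\}$, so both claims are immediate. For $h\ge 2$, Lemma~\ref{lem:epoch_hyperplanes_any} writes every non-empty $\mathcal{X}_h(\va_{h-1})$ as $\mathcal{P}(\va_{\bar h})$ intersected with the half-spaces $\mathcal{H}_{\bar h'-1}(\va_{\bar h'})$ for $\bar h'\in\Psi$ with $1<\bar h'<\bar h$, and with the single half-space $\mathcal{H}_{h-1}(\va_{h-1})$. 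I would therefore bound the number of distinct defining inequalities of each type and then the bit-complexity of each one.

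\textbf{Counting facets.} The polytope $\mathcal{P}(\va_{\bar h})=\bigcap_{\theta\in\widetilde{\Theta}_{\bar h}}\mathcal{P}_\theta(\va_{\bar h,\theta})$ is cut out by the $m$ boundary half-spaces $x_i\ge 0$ and by at most $n-1$ separating half-spaces $\mathcal{H}^{jk}_\theta$ for each of the at most $K$ types. This gives at most $K(n-1)+m$ inequalities. The simplex equality $\sum_i x_i=1$ is an affine constraint, not a facet. Every $\bar h'\in\Psi\setminus\{H+1\}$ is an epoch where $\widetilde{\Theta}$ strictly grows, and $\widetilde{\Theta}_1\neq\varnothing$, so $|\Psi\setminus\{1,H+1\}|\le K-1$. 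Hence the pruning half-spaces from the middle intersection are at most $K-1$, and adding $\mathcal{H}_{h-1}$ gives at most $K$ pruning half-spaces. The total is at most $K(n-1)+m+K\le Kn+m+K$, and the number of facets is bounded by the number of defining inequalities.

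\textbf{Bit complexity.} The separating hyperplanes have coefficients $\uf_\theta(a_i,a_j)-\uf_\theta(a_i,a_k)$, a difference of two $L$-bit rationals, so they need $\mathcal{O}(L)$ bits. For a pruning half-space $\mathcal{H}_{h}(\va)$, I would expand \eqref{eq:h_a_def} into the form
\begin{equation*}
\sum_i x_i\sum_{\theta\in\widetilde{\Theta}_h}\widehat{\mu}_{h,\theta}\ul(a_i,\va_\theta)+C_1K\epsilon_h-\underline{\OPT}_h\ge 0 .
\end{equation*}
The entries $\widehat{\mu}_{h,\theta}$ are empirical frequencies with denominator $T_{h,1}\le T$, so each has $\mathcal{O}(\log T)$ bits. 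The quantity $\epsilon_h=2^{1-h}/K$ has $\mathcal{O}(\log T+\log K)$ bits since $h\le H'=\mathcal{O}(\log T)$ by Lemma~\ref{lem:number_epochs}. The parameter $\delta$ enters only through $T_{h,1}$, which is an integer at most $T$, and I would account for it with the $B_\delta$ term.

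\textbf{Main obstacle.} The hardest step is $\underline{\OPT}_h$, which is a maximum of a linear function over the polytopes $\mathcal{Y}_h(\va)$. It is attained at a vertex, and that vertex solves an $m\times m$ linear system whose rows are hyperplanes of exactly the kind above. Cramer's rule could blow up the bit count multiplicatively, so I would avoid carrying bits naively through the recursion. The cleaner route is to put all the summands in a common form: write $\sum_\theta\widehat{\mu}_{h,\theta}\ul(\cdot,\va_\theta)$ over the common denominator $T_{h,1}$ times a product of the denominators of $\ul$. Then I would invoke the standard vertex bit-complexity bound for LPs \citep{schrijver1998theory}, as done in Lemma~D.2 of \citet{bacchiocchi2025sample}. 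This gives vertex encodings polynomial in $m$ times the row complexity. Treating $m$ as fixed and absorbing polynomial factors into the $\mathcal{O}(\cdot)$, this yields the stated $\mathcal{O}(L+\log T+\log K+B_\delta)$. I expect making this step rigorous to be the delicate part of the proof.
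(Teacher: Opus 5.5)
\textbf{There is a gap in how you handle $\underline{\OPT}_h$.}

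Your facet count and your bit count for the $x$-coefficients of the pruning hyperplanes match the paper's proof. The paper also reads everything off Lemma~\ref{lem:epoch_hyperplanes_any}, uses $|\Psi\setminus\{1,H+1\}|\le K-1$, and then bounds only the bits of $L$, $\epsilon_h$ and $\widehat{\mu}_h$. It never analyzes the constant $\underline{\OPT}_h$, so you are right to single that term out. Your proposed resolution, however, does not close it.

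The LP defining $\underline{\OPT}_h$ is $\max\{\ulhat_h(x,\va): x\in\mathcal{Y}_h(\va)\}$. By Lemma~\ref{lem:epoch_hyperplanes_any}, the constraints of $\mathcal{Y}_h(\va)$ include pruning hyperplanes $\mathcal{H}_{h'}(\va_{h'})$ from earlier epochs $h'<h$, with $\va_{h'}=\va|\widetilde{\Theta}_{h'}$. Their right-hand sides are $\underline{\OPT}_{h'}-C_1K\epsilon_{h'}$. So the ``row complexity'' you feed into the standard vertex bound is not $\mathcal{O}(L+\log T+\log K+B_\delta)$; it contains the bit complexity of the earlier $\underline{\OPT}_{h'}$.

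Your argument therefore only yields a recursion $\phi_h\le \mathrm{poly}(m)\,\phi_{h-1}+\ldots$. This compounds over up to $H=\Theta(\log T)$ epochs into a bound polynomial in $T$, which would make the round count of Lemma~\ref{lem:partition} polynomial in $T$. Putting the coefficients over a common denominator does not touch these constants. Fixing $m$ does not help either, because the $\mathrm{poly}(m)$ factor is paid once per epoch.

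\textbf{The missing idea is to break the recursion.} Show that the maximizer defining $\underline{\OPT}_h$ can be taken at a vertex of $\mathcal{P}(\va)$, where no pruning hyperplane is tight. Under $\Eclean_H$ this follows from constants already in the paper:
\begin{itemize}
\item If $x\in\Delta(\Al)$ lies on the boundary of $\mathcal{H}_{h'}(\va_{h'})$ with $h'<h$, Lemma~\ref{lem:prune_opt_bounds} gives $\ulhat_{h'}(x,\va_{h'})=\underline{\OPT}_{h'}-C_1K\epsilon_{h'}\le\OPT-8K\epsilon_{h'}$.
\item Apply Lemma~\ref{lem:prune_simple_inequalities} at epochs $h'$ and $h$. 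Use $\mu_\theta\le 3\epsilon_{h'}$ for $\theta\notin\widetilde{\Theta}_{h'}$ and $\epsilon_{h'}\ge 2\epsilon_h$. This yields $\ulhat_h(x,\va)\le\OPT-4K\epsilon_{h'}+K\epsilon_h\le\OPT-7K\epsilon_h$.
\item The maximum is at least $\ulhat_h(x^\star,\va^\star)\ge\OPT-4K\epsilon_h$ by Lemma~\ref{lem:prune_ul_inequalities}.
\end{itemize}
Hence the optimal face avoids every pruning hyperplane. Any vertex of that face is a vertex of $\mathcal{P}(\va)$, cut out only by the simplex equality, nonnegativity, and best-response constraints built from $L$-bit payoffs.

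Cramer's rule then bounds the bits of $\underline{\OPT}_h$ by a quantity polynomial in $m$, $n$, $K$, $L$, plus the bits of $\widehat{\mu}_h$ and $\epsilon_h$. This bound does not grow with the epoch index, so the remaining $\mathrm{poly}(m)$ factor is paid once and treating $m$ as fixed becomes legitimate. Alternatively, you could change $\Prune$ to round $\underline{\OPT}_h$ down to a multiple of $K\epsilon_h=2^{1-h}$ and take $C_2=7$. That avoids the issue at the cost of modifying the algorithm.
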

\begin{proof} 
   For the sake of the analysis, we will assume that the last epoch is completed (otherwise it is sufficient to consider a fictitious $\mathcal{X}_{H+1}$).
    Given any $\va_h \in \Af(\widetilde{\Theta}_h)$ for some epoch $h$, we let $\va_{h'} = \va_h | \widetilde{\Theta}_{h'}$ for every $0 \le h' < h$.
    
    We let $N^\text{X}_h$ and $N^\text{Y}_h$ be the maximum number of facets of any region $\mathcal{X}_h(\va_{h-1})$ and $\mathcal{Y}_h(\va_h)$ at epoch $h$, respectively.
    To bound the number of facets of a polytope, we will bound the number of half-spaces defining it.
    Notice that every polytope that we consider is a subset of the hyperplane containing $\Delta(\Al)$.
    This hyperplane will not be considered, as it does not define a facet.

    Let $h \in \{2,\dots,H\}$ and $\bar h = \max\{h' \in \Psi \mid h' < h\}$.
    Consider a non-empty polytope $\mathcal{X}_h(\va_{h-1})$ with $\va_{h-1} \in \Af(\widetilde{\Theta}_{h-1})$.
    By Lemma~\ref{lem:epoch_hyperplanes_any} we have:
    \begin{equation*}
        \mathcal{X}_{h}(\va_{h-1}) = \mathcal{P}(\va_{\bar h}) \cap \bigcap_{\bar h' \in \Psi : 1 < \bar h' < \bar h} \mathcal{H}_{\bar h'-1}(\va_{\bar h'}) \cap \mathcal{H}_{h-1}(\va_{h-1}).
    \end{equation*}
    We observe that $\mathcal{P}(\va_{\bar h})$ has at most $|\widetilde{\Theta}_{\bar h}|n +m$ facets.
    Furthermore, $|\Psi \setminus \{1,H+1\}| \le K - 1$.
    The single epoch we did not consider was the first one.
    Since $\mathcal{X}_1 = \Delta(\Al)$ has $m$ facets, the number of facets is at most $Kn +m +K$ for every epoch $h \in [1,H]$.

    To conclude the proof, we need to upper bound the number $B$ of bits required to encode the coefficients of the hyperplanes defying the facets of any $\mathcal{X}_h(\va_h)$.
    For the facets of the region $\mathcal{P}(\va_{\bar h})$, these coefficients have at most $L$ bits.
    The hyperplanes $\mathcal{H}_{h}(\va_h)$ are instead defined by the inequality:
    \begin{align*}
        \sum_{\theta \in \widetilde{\Theta}_h} \widehat{\mu}_{h,\theta}\ul(x,\va_\theta) +C_1 K\epsilon_h \ge \max_{\va \in \Af(\widetilde{\Theta}_h)} \max_{x \in \mathcal{Y}_h(\va)} \sum_{\theta \in \widetilde{\Theta}_h} \widehat{\mu}_{h,\theta} \ul(x,\va_\theta) - C_2 K\epsilon_h,
    \end{align*}
    where $\epsilon_h = 1/(K2^{h-1})$ and $h \le H \le \log_4(5T)$ as of Lemma~\ref{lem:number_epochs}.
    Therefore, $\epsilon_h$ can be represented by $$\mathcal{O}(\log(T) +\log(K))$$ bits.
    Similarly, $CK\epsilon_h$ can be represented by $\mathcal{O}(\log(T) +\log(K))$ bits for any constant $C$.
    The prior estimator $\widehat{\mu}_h$ has instead been computed by Algorithm~\ref{alg:find_types} as the empirical estimator of $\mu$ using $\mathcal{O}(\nicefrac{1}{\epsilon_h^2}\log(K/\delta_1))$ samples, where $\delta_1$ defined at Line~\ref{line:delta_defs} Algorithm~\ref{alg:main_algo} has bit complexity bounded by $\mathcal{O}(\log(\log(T))+B_{\delta})$.
    Therefore, each component of $\widehat{\mu}_h$ can be represented by at most $$\mathcal{O}(\log(T)+\log(K)+B_{\delta}).$$
    Overall, each coefficient of the hyperplane can be encoded by at most $\mathcal{O}(L+\log(T)+\log(K)+B_{\delta})$, accounting for $L$ bits to represent the leader utility.
    As a result, $B \le \mathcal{O}(L+\log(T)+\log(K)+B_{\delta})$.
\end{proof}

\begin{restatable}{theorem}{MainTheoremFormal}  
    \label{th:main_formal}
    With probability at least $1-\delta$, the regret of Algorithm~\ref{alg:main_algo} is:
    \begin{equation*}
        R_T \le \Obigt(K^2 \log\left(\frac{K}{\delta} \right) \sqrt{T} +\beta\log^2(T)),
    \end{equation*}
    where:
    \begin{equation*}
        \beta \coloneqq K^{m+2} n^{2m+2} \left(m^7 (L+B_\delta) \log\left(\frac{1}{\delta}\right) + (Kn +m)^{m} \right)
    \end{equation*}
    is a time-independent function of the instance size and $\delta$.
\end{restatable}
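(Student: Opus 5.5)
We work on the good event $\Eclean_H$ throughout. By Lemma~\ref{lem:good_over_epochs} it holds with probability at least $1-H(\delta_1+\delta_2)$. By Lemma~\ref{lem:number_epochs} we have $H\le\lceil\log_4(5T)\rceil$, and with the choice of $\delta_1=\delta_2$ at Line~\ref{line:delta_defs} this probability is at least $1-\delta$. The rest of the argument is deterministic on $\Eclean_H$.

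We split the regret epoch by epoch, writing $R_T\le\sum_{h=1}^{H} T_h\,\Delta_h$, where $T_h$ is the number of rounds in epoch $h$ and $\Delta_h$ is the largest suboptimality of any commitment played in epoch $h$. Every commitment played in epoch $h$ lies in $\mathcal{X}_h$, both the arbitrary point in $\FindTypes$ and every query issued by $\Stackelberg$ over some $\mathcal{X}_h(\va_{h-1})$. For $h\ge2$, Lemma~\ref{lem:prune_one} applied at epoch $h-1$ gives $\Delta_h\le K\epsilon_{h-1}(5+C_1+C_2)=28K\epsilon_h$. For $h=1$ we only use the trivial bound $\Delta_1\le 1=K\epsilon_1$. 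So in all cases $\Delta_h=\mathcal{O}(K\epsilon_h)$.

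Next we bound $T_h$. Lemma~\ref{lem:find_types} gives $\mathcal{O}(\epsilon_h^{-2}\log(K/\delta_1))$ rounds for $\FindTypes$. Lemma~\ref{lem:partition} applies because Condition~\ref{cond:partion} holds on $\Eclean_H$. We plug in the facet bound $N\le Kn+m+K$ and the bit bound $B=\mathcal{O}(L+\log T+\log K+B_\delta)$ from Lemma~\ref{lem:bit_complexity_and_facets}, and use $\binom{N+n}{m}\le (N+n)^m=\mathcal{O}((Kn+m)^m)$ up to constants depending on $m$. Together with $\log(1/\delta_2)=\mathcal{O}(\log(1/\delta)+\log\log T)$, this gives
\[
T_h\le \Obigt\Big(\tfrac{1}{\epsilon_h^2}\log\tfrac{K}{\delta}+\tfrac{1}{\epsilon_h}\,\beta'\Big),
\qquad \beta'\coloneqq K^{m+1}n^{2m+2}\big(m^7(L+B_\delta+\log T)\log^2\tfrac1\delta+(Kn+m)^m\big).
\]
Multiplying by $\Delta_h=\mathcal{O}(K\epsilon_h)$ gives a per-epoch regret of
\[
\mathcal{O}\Big(\tfrac{K}{\epsilon_h}\log\tfrac{K}{\delta}+K\beta'\Big).
\]
Note that $K\beta'$ matches $\beta$ up to the $\log T$ factor inside $B$.

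Finally we sum over $h\le H$. The second term contributes $H\cdot K\beta'=\mathcal{O}(\beta\log T\cdot\log T)$, since $B$ carries a $\log T$. This is where the $\beta\log^2 T$ term comes from. For the first term, $\sum_{h\le H}1/\epsilon_h=K\sum_{h\le H}2^{h-1}\le K2^H$ is geometric. To turn this into $\sqrt{T}$ we use the fact, from the proof of Lemma~\ref{lem:number_epochs}, that epochs $1,\dots,H-1$ were completed, so $T\ge K^2(4^{H-1}-1)/3$. Hence $2^{H}=\mathcal{O}(\sqrt{T}/K+1)$, and $\sum_h K/\epsilon_h=\mathcal{O}(K\sqrt{T}+K^2)$. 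Multiplying by $\log(K/\delta)$ and absorbing constants gives the $K^2\log(K/\delta)\sqrt T$ term, which has some slack in $K$.

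We expect the main obstacle to be the bookkeeping around the partial last epoch and the first epoch. The last epoch is truncated, so its $T_h$ is at most the remaining rounds, which is harmless for an upper bound. However, its $2^H$ factor must still be tied to $T$, and this is done through the completed epoch $H-1$. For the first epoch we must check that $\Delta_1$ is handled without Lemma~\ref{lem:prune_one}. We must also keep the $\log T$ from the bit complexity $B$ and the $\log\log T$ from $\delta_1$ inside the $\Obigt$ so the final form matches the stated $\beta\log^2(T)$.
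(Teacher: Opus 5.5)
Your proof is correct, and its skeleton is the paper's. You restrict to $\Eclean_H$, which has probability at least $1-\delta$ by Lemmas~\ref{lem:good_over_epochs} and~\ref{lem:number_epochs}. You bound the regret of epoch $h$ by $T_h\cdot\mathcal{O}(K\epsilon_h)$, using Lemma~\ref{lem:prune_one} from epoch $h-1$ and the trivial bound $1=K\epsilon_1$ for $h=1$. You bound $T_h$ via Lemmas~\ref{lem:find_types}, \ref{lem:partition} and~\ref{lem:bit_complexity_and_facets}.

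The real difference is how you turn the geometric sum $\sum_h K/\epsilon_h$ into $\sqrt{T}$. The paper splits the epochs at $h^\diamond=\min\{H,\lceil\frac{1}{2}\log T\rceil\}$. Up to $h^\diamond$ it bounds the sum by $K^2 2^{h^\diamond}\approx K^2\sqrt{T}$. Beyond $h^\diamond$ it uses only that the per-round regret is $\mathcal{O}(1/\sqrt{T})$ and that at most $T$ rounds remain. You instead sum over all $H$ epochs and cap $2^H$ through the completed-epoch inequality $T\ge K^2(4^{H-1}-1)/3$ from the proof of Lemma~\ref{lem:number_epochs}, which gives $2^H=\mathcal{O}(\sqrt{T}/K+1)$.

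This removes the three-way split and the separate tail argument. It also yields the sharper leading term $\mathcal{O}(K\log(K/\delta)\sqrt{T}+K^2\log(K/\delta))$, a factor $K$ better than the stated $K^2\log(K/\delta)\sqrt{T}$. The gain is essentially free: the paper already needs the same epoch-length lower bound to get $H=\mathcal{O}(\log T)$ for the union bound. You simply use it in its $K$-dependent form.

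One bookkeeping remark. Your $\beta'$ carries $\log^2(1/\delta)$, faithfully to Lemma~\ref{lem:partition}. So $K\beta'$ exceeds $\beta$ not only by the $\log T$ hidden in $B$ but also by a factor $\log(1/\delta)$. As written, your second term is therefore $\Obigt(\beta\log(1/\delta)\log^2 T)$, not $\Obigt(\beta\log^2 T)$. The paper reaches the stated $\beta$ only because its proof writes $\log(1/\delta')$ instead of $\log^2(1/\delta_2)$ when invoking Lemma~\ref{lem:partition}. This looseness is inherited from the paper, not a flaw of your route. Still, your sentence saying $K\beta'$ matches $\beta$ up to the $\log T$ factor alone is not accurate.
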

\begin{proof}
    By Lemma~\ref{lem:good_over_epochs} the event the event $\Eclean_H$ happens with probability at least $1-H(\delta_1 +\delta_2)$.
    We also have $H \le \log_4(3T/K^2 +1) +1 \le \log_4(5T)$ by Lemma~\ref{lem:number_epochs}.
    By taking 
    \begin{equation*}
        \delta_1 = \delta_2 = \frac{\delta}{2\left\lceil\log_4(5T) \right\rceil},
    \end{equation*}
    the clean event holds with probability at least $1-\delta$.
    Let us observe that:
    \begin{equation*}
        \frac{1}{\delta'} = \mathcal{O}\left(\frac{1}{\delta} \log(T) \right).
    \end{equation*}
    Now we bound the regret of Algorithm~\ref{alg:main_algo} under the event $\Eclean_H$.
    We will let $B \coloneqq L+B_{\delta}$ and $N \coloneqq Kn +m +K$.
    We also define the following two quantities:
    \begin{align*}
        \alpha_1 &\coloneqq \log\left(\frac{K}{\delta}\right) \\
        \alpha_2 &\coloneqq K^{m+1} n^{2m+2} \left(m^7 B \log\left(\frac{1}{\delta}\right) + \binom{N+n}{m} \right).
    \end{align*}

    Consider an epoch $h \in \{1,\dots,H\}$.
    We let $R_h$ the regret accumulated during epoch $h$.
    The number of rounds of epoch $h$ is bounded by $T_h \coloneqq T_{h,1} +T_{h,2}$, where $T_{h,1}$ and $T_{h,2}$ are the number of rounds to execute Algorithm~\ref{alg:find_types} and Algorithm~\ref{alg:partition}, respectively.
    By Lemma~\ref{lem:find_types} we have:
    \begin{equation*}
        T_{h,1} \le \frac{1}{\epsilon_h^2}\log\left(\frac{K}{\delta'}\right) = \mathcal{O}\left(\frac{1}{\epsilon_h^2}\alpha_1 \log(T) \right)
    \end{equation*}
    while by Lemma~\ref{lem:partition} and Lemma~\ref{lem:bit_complexity_and_facets} we have:
    \begin{align*}
        T_{h,2} &\le \Obigt \left( \frac{1}{\epsilon_h} K^{m+1} n^{2m+2} \left(m^7 (B +\log(T)) \log\frac{1}{\delta'} + \binom{N+n}{m} \right) \right) \\
        &= \Obigt \left( \frac{1}{\epsilon_h} K^{m+1} n^{2m+2} \log(T)\left(m^7 (B +\log(T)) \log\frac{1}{\delta} + \binom{N+n}{m} \right) \right) \\
        &= \Obigt \left( \frac{1}{\epsilon_h} K^{m+1} n^{2m+2} \log^2(T)\left(m^7 B \log\frac{1}{\delta} + \binom{N+n}{m} \right) \right) \\
        &=\Obigt \left(\frac{1}{\epsilon_h} \alpha_2 \log^2(T) \right).
    \end{align*}
    We will divide the epochs in three intervals.
    First, we consider the single epoch $h=1$.
    Then, we will consider the epochs from $h=2$ to $h^\diamond := \min\{H,\left\lceil\frac{1}{2}\log(T)\right\rceil\}$.
    Finally, the epochs from $h=h^\diamond+1$ to $H$.

    During the first epoch, the regret at each round is at most one.
    We can thus bound $R_1$ as:
    \begin{align*}
        R_1 &\le R_{1,1} +R_{1,2} \\
        &\le \Obigt \left( \frac{1}{\epsilon_1^2}\alpha_1 \log(T) + \frac{1}{\epsilon_1} \alpha_2 \log^2(T)\right) \\
        &=\Obigt \left( K^2\alpha_1 \log(T) + K \alpha_2 \log^2(T)\right).
    \end{align*}

    Consider now an epoch $h \in \{ 2,\ldots,h^\diamond \}$. 
    By Lemma~\ref{lem:prune_one} applied to the previous epoch, the regret of each round during this epoch is at most $14K\epsilon_{h-1}=28K\epsilon_h$, as $\epsilon_h = \frac{1}{K2^{h-1}}$.
    Therefore:
    \begin{align*}
        \sum_{h=2}^{h^\diamond} R_h &\le \mathcal{O}\left( K\epsilon_h (T_{h,1} +T_{h,2})\right) \\
        &\le \Obigt \left( \sum_{h=2}^{h^\diamond} K\epsilon_h \left(\frac{1}{\epsilon_h^2}\alpha_1 +\frac{1}{\epsilon_h}\alpha_2 \right)\log^2(T) \right) \\
        &= \Obigt \left( \sum_{h=2}^{h^\diamond} K \left(\frac{1}{\epsilon_h}\alpha_1 +\alpha_2 \right)\log^2(T) \right) \\
        &= \Obigt \left( \sum_{h=2}^{h^\diamond} K \left(K2^{h-1}\alpha_1 +\alpha_2 \right)\log^2(T) \right) \\
        &=\Obigt \left( K^2 \log^2(T) \alpha_1\sum_{h=2}^{h^\diamond} 2^{h-1} +\alpha_2 \log^2(T) \right) \\
        &=\Obigt \left(\log^2(T) K^2 \alpha_1 2^{h^\diamond} +\alpha_2 \log^2(T) \right) \\
        &= \Obigt \left(K^2 \alpha_1 \sqrt{T} +\alpha_2 \log^2(T) \right).
    \end{align*}

    Finally, we consider an epoch $h \in \{h^\diamond+1,\ldots,H\}$. We can again employ Lemma~\ref{lem:prune_one} to the previous epoch. This provides us an upper on the per-round regret of:
    \begin{align*}
        14K\epsilon_{h-1} = \frac{28K}{K2^h} \le \frac{28}{2^{h^\diamond}} = \frac{28}{2^{\frac{\log(T)}{2}}} = \mathcal{O}\left(\frac{1}{\sqrt{T}}\right).
    \end{align*}
    By considering that the epochs from $h^\diamond+1$ to $H$ can take at most $T$ rounds, we get:
    \begin{align*}
        \sum_{h=h^\diamond+1}^H R_h \le T \cdot \mathcal{O}\left(\frac{1}{\sqrt{T}}\right) = \mathcal{O}(\sqrt{T}).
    \end{align*}
    Putting all together, with probability at least $1-\delta$ the regret of Algorithm~\ref{alg:main_algo} is:
    \begin{align*}
        R_T &\le \Obigt \left(K^2 \alpha_1 \log(T) +K\alpha_2 \log^2(T) +K^2\alpha_1 \sqrt{T} +\alpha_2 \log^2(T) +\sqrt{T} \right) \\
        &=\Obigt(K^2 \alpha_1 \sqrt{T} +K\alpha_2\log^2(T)).
    \end{align*}
    The proof os concluded by observing that $\binom{N+n}{m} \le (N+n)^m$ and performing simple computations.
\end{proof}
\section{Technical Lemmas}

\begin{restatable}{lemma}{PolytopesSegmentLemma}
    \label{lem:tech_poly_point}
    Let $\mathcal{P} \subseteq \Delta(\Al)$ be a polytope and $x_1, x_2 \in \mathcal{P}$.
    Let also $u:\mathcal{P} \rightarrow [0,1]$ be an affine linear function, with $u(x_1) \le u(x_2)$.
    Then for every $y \in [u(x_1),u(x_2)]$ there exists some $x_y \in \mathcal{P}$ belonging to the segment between $x_1$ and $x_2$ such that $u(x_y)=y$.
    When $u(x_1) = u(x_2)$, then $u(x')=u(x_1)$ for every $x'$ in the segment between $x_1$ and $x_2$.
\end{restatable}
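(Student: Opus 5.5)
The plan is to parametrize the segment and reduce everything to a one-dimensional statement about an affine function of a scalar. Define $\gamma:[0,1]\to\mathbb{R}^m$ by $\gamma(\lambda) \coloneqq (1-\lambda)x_1 + \lambda x_2$. Since $\mathcal{P}$ is a polytope, it is convex, so $\gamma(\lambda)\in\mathcal{P}$ for every $\lambda\in[0,1]$. Hence $u$ is defined along the whole segment, and the segment between $x_1$ and $x_2$ is exactly $\gamma([0,1])$.

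Next, I will use affinity of $u$ to compute $g(\lambda)\coloneqq u(\gamma(\lambda))$. Writing $u(x)=c^\top x + d$ (an affine function restricted to $\mathcal{P}$ admits such a representation on the affine hull), we get
\begin{equation*}
g(\lambda) = (1-\lambda)u(x_1)+\lambda u(x_2) = u(x_1) + \lambda\,(u(x_2)-u(x_1)).
\end{equation*}
So $g$ is an affine, hence continuous, function of $\lambda$, with $g(0)=u(x_1)$ and $g(1)=u(x_2)$.

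The first claim follows in two cases. If $u(x_1)<u(x_2)$, for any $y\in[u(x_1),u(x_2)]$ set $\lambda_y \coloneqq \frac{y-u(x_1)}{u(x_2)-u(x_1)}\in[0,1]$ and $x_y\coloneqq\gamma(\lambda_y)$; then $u(x_y)=y$ directly from the formula for $g$. If $u(x_1)=u(x_2)$, then $g$ is constant, equal to $u(x_1)$ on all of $[0,1]$. This gives the second claim, namely that $u(x')=u(x_1)$ for every point $x'$ of the segment. It also covers the first claim in this case, since $y$ is forced to equal $u(x_1)$ and any point of the segment works.

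I expect no real obstacle here. The only point needing care is justifying that $g$ is the convex combination of the endpoint values, which is precisely the definition of affinity restricted to convex combinations. The intermediate value theorem could be used instead, but the explicit $\lambda_y$ is cleaner.
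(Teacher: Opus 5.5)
Your proof is correct and follows essentially the same route as the paper: parametrize the segment, use convexity of $\mathcal{P}$ to stay inside it, use affinity of $u$ to write $u$ along the segment as an affine function of $\lambda$, and solve explicitly for $\lambda$. The paper only differs cosmetically by parametrizing from $x_2$ rather than $x_1$, and it handles the equal-endpoint case the same way, by noting that the restricted function is constant.
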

\begin{proof}
    Suppose $u(x_1) < u(x_2)$ and consider a generic point $x_\lambda$ belonging to the segment between $x_1$ and $x_2$ and parametrized by $\lambda \in [0,1]$ as:
    \begin{equation*}
        x_\lambda \coloneqq x_2 +(x_1 -x_2)\lambda.
    \end{equation*}
    This point has utility $u(x_\lambda)=y$ when:
    \begin{equation*}
        u(x_\lambda) = u(x_2) +\lambda (u(x_1)-u(x_2)) = y,
    \end{equation*}
    that is:
    \begin{equation*}
        \lambda = \frac{y-u(x_2)}{u(x_1)-u(x_2)}.
    \end{equation*}
    It easy to say that when $y \in [u(x_1),u(x_2)]$, $\lambda$ belongs to $[0,1]$.
    By convexity, it also holds that $x_\lambda \in \mathcal{P}$.

    Suppose now $u(x_1) = u(x_2)$.
    Then $u(x_\lambda) = u(x_1) +\lambda (u(x_1)-u(x_1)) = u(x_1)$ for every $\lambda \in [0,1]$, concluding the proof.
\end{proof}

\begin{restatable}{lemma}{PoltopesNonZeroVolLemma}
    \label{lem:tech_poly_vol}
    Let $\mathcal{P} \subseteq \Delta(\Al)$ be a polytope with $\volume(\mathcal{P}) > 0$, and let $x \in \interior(\mathcal{P})$, where volume and interior are relative to the hyperplane containing $\Delta(\Al)$.
    Then, if $x \in \mathcal{H}$ for some half-space $\mathcal{H}$, it holds that $\volume(\mathcal{P} \cap \mathcal{H}) > 0$.
\end{restatable}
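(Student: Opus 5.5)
The plan is to exhibit an explicit small simplex around $x$ that lies inside both $\mathcal{P}$ and $\mathcal{H}$. Throughout, "interior" and "volume" are relative to the affine hull $A \coloneqq \{x \in \mathbb{R}^m \mid \sum_i x_i = 1\}$ of $\Delta(\Al)$, which has dimension $m-1$.

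First, since $x \in \interior(\mathcal{P})$, there is a radius $r>0$ such that the relative ball $B_r(x)\cap A$ is contained in $\mathcal{P}$. Write $\mathcal{H} = \{y \mid w^\top y \ge c\}$, so that $w^\top x \ge c$.

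Next, I split into two cases according to whether $x$ lies strictly inside $\mathcal{H}$.

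\emph{Case 1: $w^\top x > c$.} The map $y \mapsto w^\top y$ is continuous, so for some smaller radius $r' \le r$ the whole relative ball $B_{r'}(x)\cap A$ lies in $\mathcal{H}$. This ball is contained in $\mathcal{P}\cap\mathcal{H}$ and has positive $(m-1)$-dimensional volume, so we are done.

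\emph{Case 2: $w^\top x = c$.} Project $w$ onto the direction space $A_0 \coloneqq \{d \mid \sum_i d_i = 0\}$ and call the projection $w_0$.
\begin{itemize}
\item If $w_0 = 0$, then $w^\top y$ is constant on $A$. Hence $A \subseteq \mathcal{H}$, so $\mathcal{P}\cap\mathcal{H} = \mathcal{P}$, which has positive volume.
\item Otherwise, take the half-ball $\{x+d \mid d\in A_0,\ \|d\|<r,\ w_0^\top d \ge 0\}$. For $d \in A_0$ we have $w^\top d = w_0^\top d$, so every point of this half-ball satisfies $w^\top(x+d) \ge c$ and lies in $\mathcal{H}$. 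It also lies in $B_r(x)\cap A \subseteq \mathcal{P}$. A half-ball in an $(m-1)$-dimensional space has positive volume, namely half the volume of the ball.
\end{itemize}

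The only point needing care is this second case. The main obstacle there is handling the relative interior and the equality constraint correctly, which is exactly why I project $w$ onto $A_0$ rather than working with $w$ directly. With that done, monotonicity of volume under inclusion concludes the proof.
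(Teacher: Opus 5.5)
Your proof is correct and follows essentially the same route as the paper's: both dispose of the degenerate case in which the whole hyperplane containing $\Delta(\Al)$ lies in $\mathcal{H}$. Otherwise, both use the relative ball around $x$ contained in $\mathcal{P}$ to produce a relatively open subset of $\mathcal{P}\cap\mathcal{H}$. The only difference is cosmetic: the paper moves to a nearby point $x^\circ$ strictly inside $\mathcal{H}$ and takes a small ball around it, whereas you take a half-ball at $x$ cut by the projected normal $w_0$, which also makes explicit why such a direction exists, a step the paper only asserts.
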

\begin{proof}
    Let $H_\Delta$ be the hyperplane containing $\Delta(\Al)$.
    We observe that if $H_\Delta \subseteq \mathcal{H}$, then the statement is trivially satisfied, as $\mathcal{P} \cap \mathcal{H} = \mathcal{P}$ has non-zero volume.
    We thus suppose that $H_\Delta \not\subseteq \mathcal{H}$.
    Since $x \in \interior(\mathcal{P})$, there exists some sphere 
    \begin{equation*}
        \mathcal{B}_\epsilon(x) = \{x' \in \mathbb{R}^m \mid \|x' -x \|_2 \le \epsilon \}
    \end{equation*}
    of radius $\epsilon > 0$ such that $\mathcal{B}_\epsilon(x) \cap H_\Delta \subseteq \mathcal{P}$.
    There must exist some point $x^\circ$ at distance $0 < \epsilon_{\text{small}} \le \epsilon$ from $x$ that belongs to both $H_\Delta$ and $\interior(\mathcal{H})$.
    Now take any $x' \in \mathbb{R}^m$ such that $\|x^\circ -x' \|_2 \le \epsilon^\circ$, with $0<\epsilon^\circ < \epsilon -\epsilon_{\text{small}}$, then:
    \begin{align*}
        \|x' -x\|_2 \le \|x' -x^\circ\|_2 +\|x^\circ -x\|_2 \le \epsilon^\circ +\epsilon_{\text{small}} \le \epsilon.
    \end{align*}
    Therefore, we have $\mathcal{B}_{\epsilon^\circ}(x^\circ) \subseteq \mathcal{B}_\epsilon(x)$.
    As a result:
    \begin{align*}
        \mathcal{B}_{\epsilon^\circ}(x^\circ) \cap H_\Delta \subseteq \mathcal{B}_\epsilon(x) \cap H_\Delta \subseteq \mathcal{P}.
    \end{align*}
    It follows that $x^\circ \in \interior(\mathcal{P})$.
    To conclude the proof, we observe that since $x^\circ \in \interior(\mathcal{H})$, $\interior(\mathcal{P} \cap \mathcal{H})$ is non-empty.
\end{proof}

\end{document}